\documentclass[a4paper,12pt]{article}
\usepackage[utf8]{inputenc}
\usepackage{adjustbox,caption,subcaption,setspace}
\usepackage[margin=1.25in]{geometry}
\usepackage{tikz,pgfplots}
\pgfplotsset{compat=1.18}
\usepackage{enumitem,bbm}
\usepackage{amssymb,amsmath,amsthm,mathtools}
\usepackage{newpxtext,newpxmath}
\usepackage[authoryear]{natbib}
\bibliographystyle{chicago}
\setlength{\bibsep}{0.0pt}
\usepackage{xcolor}
\usepackage{hyperref}
\usepackage{todonotes}

\hypersetup{colorlinks, citecolor=gray!80!black}

\DeclarePairedDelimiter\ceil{\lceil}{\rceil}
\DeclareMathOperator*{\argmax}{arg\,max}
\DeclareMathOperator{\interior}{int}

\newcommand{\defn}[1]{\textcolor{red!75!black}{\emph{#1}}}

\newcommand\blfootnote[1]{
  \begingroup
  \renewcommand\thefootnote{}\footnote{#1}
  \addtocounter{footnote}{-1}
  \endgroup
}

\providecommand{\customgenericname}{}
\newcommand{\newcustomtheorem}[2]{%
  \newenvironment{#1}[1]
  {%
   \renewcommand\customgenericname{#2}%
   \renewcommand\theinnercustomgeneric{##1}%
   \innercustomgeneric
  }
  {\endinnercustomgeneric}
}

\newtheorem{theorem}{Theorem}
\newtheorem{proposition}{Proposition}
\newtheorem{lemma}{Lemma}
\newtheorem{assumption}{Assumption}

\theoremstyle{definition}

\newtheorem*{notation*}{Notation}
\newtheorem{remark}{Remark}

\newcustomtheorem{customassumption}{Assumption}

\title{Economic dynamics with differential fertility}
\author{Francis Dennig\thanks{United Nations
Development Programme, Rome Centre; \textsf{francis.dennig@undp.org}}
    \quad Bassel Tarbush\thanks{Department of Economics, University of Oxford; \textsf{bassel.tarbush@economics.ox.ac.uk}}}
\date{\today}

\textheight=650pt\topmargin=0pt\headsep=0pt
\advance\textwidth2\evensidemargin
\evensidemargin=0pt
\oddsidemargin=0pt
\setlength{\parskip}{0.25em}

\begin{document}
\maketitle

\onehalfspacing
\begin{abstract}
\noindent We characterize the outcomes of a canonical deterministic model for the intergenerational transmission of capital that features differential fertility. A fertility function determines the relationship between parental capital and the number of children, and a transmission function determines the relationship between the capital of a parent and that of their children. Together these functions generate an evolving cross-sectional distribution of capital. We establish easy-to-verify conditions on the fertility and transmission functions that guarantee (a) that the dynamical system has a steady state distribution that is either atomless (exhibiting inequality) or degenerate (not exhibiting inequality), and (b) that the system converges to such states from essentially any initial distribution. Our characterization provides new insights into the link between differential fertility and long-run cross-sectional inequality, and it gives rise to novel comparative statics relating the two. We apply our results to several parametric examples and to a model of economic growth that features endogenous differential fertility.
\blfootnote{The findings presented in this paper are those of the authors and they do not necessarily represent the views of the UNDP or its affiliated organizations.}
\end{abstract}

\section{Introduction}
What is the theoretical relationship between differential fertility and long-run cross-sectional inequality? The equation of motion at the heart of many models of economic growth features \emph{differential fertility}; that is, people with different levels of wealth can have systematically different numbers of children \citep*{lam86,chu90,morand99, delacroix03,cordoba16,cavalcanti2021family}. However, existing analytic characterizations of the long-run distributional outcomes in such models have been relatively limited. The key challenge is that differential fertility complicates intergenerational dynamics so that standard tools for the analysis of dynamic systems do not always straightforwardly extend to the deterministic models often favored by the family economics literature.

In this paper, we provide an extensive analytical characterization of the outcomes of a canonical deterministic model for the intergenerational transmission of capital that features differential fertility, and we provide new insights regarding the general implications of differential fertility on the evolution of the cross-sectional distribution of capital across generations. The primitives of our system are a \emph{fertility function}, which describes the relationship between parental capital and fertility, and a \emph{transmission function}, which describes the relationship between parental capital and the capital level of the children. These two relationships are the subject of much empirical research, and they are also well-studied in the theoretical family economics literature (e.g.\ see \citealp*{jones08}). Together, the fertility and transmission functions generate our main object of interest: an evolving cross-sectional distribution of capital.

A key feature of our model is that it is entirely deterministic and yet we show that it can generate atomless steady state distributions of capital (that exhibit cross-sectional inequality) to which the system converges from essentially any initial distribution of capital. Some models of the intergenerational transmission of capital can yield long-run distributions of capital that are atomless but, in all cases that we are aware of, these are obtained only from the introduction of exogenous stochastic shocks  \citep*{delacroix03,benhabib14,jones15,piketty15,gabaix15}. The long-run atomless distributions of capital that our model generates arise from a different mechanism, and we can attribute any long-run distributional outcomes to the shapes of the fertility and transmission functions alone rather than to the dynamic implications of exogenous stochastic shocks. As we will argue, understanding the deterministic dynamics of our system can provide valuable insights into how capital evolves across generations even if one believes, as we do, that random shocks play an important role in integenerational transmission.

This is how we proceed:

\begin{enumerate}[topsep=0ex,itemsep=0ex,leftmargin=0cm,label=(\roman*)]
\item In the first instance, we assume that the primitives (the fertility and transmission functions) are exogenously given. We show that the dynamic system can be decomposed into multiple parallel `local' dynamic systems and that the solutions to these local systems can be aggregated back into a solution of the whole system. With this approach, we can characterize the long-run outcomes of our dynamic system. Some of our results will be familiar because they recover outcomes in models of poverty traps and in growth models with multiple equilibria \citep*{Galor1993Income,Azariadis2005Chapter,moav05}. Others, however, are new. In particular, we establish easy-to-verify conditions on the primitives that guarantee (a) that the dynamical system has a steady state distribution that is either atomless (exhibiting inequality) or degenerate (not exhibiting inequality), and (b) that the system converges to such states from essentially any initial distribution. To our knowledge our results offer the most complete theoretical characterization of the outcomes of this canonical system, particularly with regards to the atomless steady states.\footnote{Intergenerational models in the family economics and growth literature that have accounted for differential fertility tend to track the long-run outcomes coarsely; for example, \cite*{becker79} track the coefficient of variation---but not the distribution---of income while the models of \cite*{dahan98}, \cite*{kremer02}, and \cite*{fan2013differential} study the relationship between differential fertility and inequality, with inequality being measured by the ratio of skilled to unskilled workers (and their relative wages). By contrast, the full steady state distribution of income or of wealth (or its tail) can be derived explicitly in several stochastic models of the dynamics of economic inequality; for example, see \cite*{benhabib11,benhabib14,jones15,piketty15,gabaix15}. Inequality in such models is typically generated by an underlying (usually Kesten-type) stochastic process that results in a steady state belonging to a specific parametric family (often to the Pareto or lognormal families). However, this literature tends to ignore fertility differentials. The surveys of \cite*{davies00}, \cite*{gabaix09}, and \cite*{benhabib16}, for example, do not mention the role of fertility.} In the appendix, we provide a technical discussion of how our model relates to some prior work in mathematics regarding the (deterministic) analysis of branching systems.

Point (i) is the main contribution of the paper. Points (ii)-(iv) below are applications of our main characterization result.

\item For specific functional forms of our fertility and transmission functions, we produce explicit analytical solutions for the atomless steady state distributions. As we show, these can belong to known parametric families (e.g. Pareto).

\item  Our results make the dependence of the evolution of the distribution of capital on the primitives explicit which allows us to obtain novel comparative statics relating changes in the primitives to changes in the distributional outcomes of the dynamic process. In particular, we show that monotone likelihood shifts in fertility---which can be expressed as changes in the \emph{capital elasticity of fertility}---result in first-order stochastic dominance shifts in the steady state distribution of capital. This is a result that extends existing work. \cite*{lam86} and \cite*{chu90}, for example, work in a discrete income space and a stochastic transmission function (though the analysis of their model is done entirely via a deterministic Markov kernel), while we work in a continuous space and a deterministic transmission function, but in all other respects we study identical dynamic systems.\footnote{The underlying dynamic system in \cite*{lam86} and \cite*{chu90} is essentially a discretized version of \cite*{loury81} that accounts for fertility differentials.}  In their seminal contribution, \cite*{chu90} show that with a downward sloping fertility curve, and subject to some technical conditions, fertility reductions in the lowest income group lead to first order stochastic dominance shifts in the steady state income distribution. Their result is based on a non-constructive existence result for the steady state (as the solution to an eignevalue equation), which limits the scope for fine-grained comparative statics. Our result is more general in the sense that we can conclude a stochastic dominance shift in the distribution of capital from a change \emph{anywhere} in the fertility function, rather than only from a change in the fertility of the lowest income group. Moreover, our result does not rely on any particular shape for the fertility function and can thus accommodate downward sloping but also hump-shaped functions (e.g.\ as in \citealp*{vogl16b}).\footnote{People with lower income or wealth tend to have a larger number of children in modern societies \citep*{mace08,skirbekk08,balbo13}. Fertility may have been positively related to status or wealth before the demographic transition, though results are mixed \citep*{dribe14}.}  

\item While our main result on the existence of atomless steady states and convergence to these states is derived under the assumption that the primitives are exogenously given, we show that our results are also applicable in situations in which the fertility and transmission functions are endogenously determined by optimizing choices of parents in every generation. To do so, in Section \ref{sec:endo}, we analyze a model in which parents in each generation face a Becker-style quantity-quality trade-off when choosing the number of children that they have, with educational costs operating as an important constraint. Endogenously, poorer parents have more children and invest less in education per child. The fertility and transmission functions therefore arise endogenously yet we apply our results to easily derive the long-run distributional outcomes of this model.
\end{enumerate}

\section{Model} \label{sec:model}

People born in period $t$ each have a number of children who are born in period $t+1$, creating a succession of generations indexed by $t$. Procreation is assumed to be asexual.\footnote{Unlike the models of \cite*{blinder73} and \cite*{straub84}, there is no marriage in our model, and reproduction is asexual. We therefore ignore assortative matching as a determinant of inequality \citep*{bruze15,greenwood16,mare16}.} We denote by $x\in\mathcal{X}$ the \defn{capital} of an individual, and we  assume throughout that $\mathcal{X}\subseteq \mathbb{R}$ is a real interval. 

Depending on the application, `capital' may refer to wealth or lifetime income (physical capital), human capital, or social capital. At this point we don't impose a particular interpretation on capital and simply think of it as a variable with the property that the capital of a child is associated with that of its parent, and we capture this association with a parent-to-child capital \defn{transmission function} $\tau:\mathcal{X}\rightarrow\mathcal{X}$. Each child of a parent whose capital level is $x$ has capital level $\tau(x)$.\footnote{In this formulation, all children are treated in the same way by their parent and we therefore ignore unequal intra-family transmission processes such as primogeniture in inheritance, in which all of the inheritable physical capital is passed to the first-born child \citep*{stiglitz69,vaughan79,atkinson80}.}

Denote the number of children born to a parent with capital $x$ by $n(x)$. We refer to $n : \mathcal{X} \rightarrow (0,\infty)$ as the \defn{fertility function}.

The cross-sectional distribution of capital among individuals in generation $t$ is denoted by $F_t : \mathbb{R} \rightarrow [0,1]$. Individuals with capital $x$ have $n(x)$ children, so the average number of children born in generation $t+1$ to parents in generation $t$ is given by
\[
	\mathbb{E}_t[n] := \int_\mathcal{X} n(x)dF_t(x).
\]
Since we assume asexual reproduction, the average number of children $\mathbb{E}_t[n]$ is also the population growth factor.\footnote{In other words, letting $P_t$ denote the size of the population in generation $t$, we have $P_{t+1} = P_t \mathbb{E}_t[n]$.}

Given the fertility and transmission functions, the \defn{population process} is a mapping $F_t \mapsto F_{t+1}$ from the distribution at $t$ to the distribution at $t+1$ which, for each $x \in \mathcal{X}$, is given by:
\begin{equation}\label{main_nonlin}
	F_{t+1}(x) = \frac{1}{\mathbb{E}_t[n]}\int_{\mathcal{X}} \mathbbm{1}\left[\tau(z) \leq x\right]  n(z) dF_{t}(z) .
\end{equation}
This is our main equation of interest. It shows that the fraction of people in generation $t+1$ with capital no greater than $x$ is precisely the fraction of all children born in generation $t+1$ whose parental capital $z$ satisfies $\tau(z) \le x$.

Our aim is to study the dynamics of the sequence of distributions $( F_t )_{t \geq 0}$ generated by \eqref{main_nonlin}. A distribution $F^*$ is a \defn{steady state} of the population process if it is a fixed point of the map $F_t \mapsto F_{t+1}$. We say that the population process \defn{converges} to a steady state $F^* $ if
$$\lim_{t \rightarrow \infty} F_t = F^*.$$
We denote the population growth factor in such a steady state by 
\[
\mathbb{E}^*[n] = \int_{\mathcal{X}} n(x) dF^*(x).
\]
\begin{remark}\label{rem:scale}
The population process \eqref{main_nonlin} would induce the same sequence of distributions $(F_t)_{t\geq 0}$ if the fertility function $n(x)$ were replaced by $\tilde{n}_t(x) = C_t n(x)$ where $C_t > 0$ for each $t$. In other words, including a possibly time-varying scalar for fertility in no way affects the distributional dynamics: the scalar would cancel out with the denominator in \eqref{main_nonlin}. Only the population growth factor would be affected by the inclusion of such a scalar.
\end{remark}

\subsection{Discussion of the model}
The population process described by \eqref{main_nonlin} is canonical. It is a branching process in which each person in generation $t$ with capital $x$ has $n(x)$ children each with capital level $\tau(x)$ in generation $t+1$, and has $n(x) \times n(\tau(x))$ grandchildren each with capital level $\tau(\tau(x))$ in generation $t+2$, and so on. This process is, for example, studied in discretized form in \cite*{lam86} and in \cite*{chu90} but the properties of the mapping have not been fully characterized analytically. The same dynamic is also at the heart of more recent macroeconomic models that have accounted for differential fertility and in which some of the implications for inequality are assessed via calibration \citep*{delacroix03,cordoba16,cavalcanti2021family}.

Neither fertility nor the transmission function is a random variable, so one's fertility and the capital level of one's children are deterministic functions of one's own capital. We view the fact that our system is deterministic as a feature, not a bug. While stochastic components play an important role in intergenerational dynamics,\footnote{For example, see \cite*{davies00,gabaix09,benhabib16}.} our deterministic approach allows us to attribute any resulting inequality in the distribution of capital to the economic content captured by the fertility and transmission functions, and not to the dynamic implications of a stochastic error term. The inclusion of such a term would undoubtedly affect long-run inequality, but our approach  completely disentangles the fertility and transmission-related determinants of inequality from those related to random shocks; the latter are tied to social mobility, which is related to but distinct from cross-sectional inequality.\footnote{Consider a society in which, in each generation, each person's capital is drawn at random from some distribution $F^*$, and another society in which the distribution of capital is $F^*$ and each person has a single child to which they bequeath all of their capital. With large populations, these two societies have essentially the same distribution of capital from one generation to the next (and therefore have the same cross-sectional inequality), but the first society clearly exhibits a lot of social mobility whereas the second exhibits none.} More abstractly, it is entirely natural to ask how a system that is subject to random shocks would behave in the absence of these shocks, since this would clarify the role of the shocks. Moreover, as we show in Section \ref{sec:distro}, the results that we derive for our deterministic system have implications for empirical calibrations of intergenerational models.

The fertility and transmission functions in \eqref{main_nonlin} are exogenous and time-independent, and we maintain this restriction in Sections \ref{sec:model} through \ref{sec:results}. However, with appropriate re-normalization, our results are applicable in situations with time-dependent and endogenous fertility and transmission. We present such an application in Section \ref{sec:endo}, where we employ our results for \eqref{main_nonlin} to solve the dynamics of a growth model that features fertility and educational choices. There, fertility and transmission are endogenously determined in equilibrium by parental preferences and technological constraints and are therefore dependent of state variables and thus time.

\section{Main results} \label{sec:results}
We start with assumptions about properties of the fertility and transmission functions. 

\begin{assumption}\label{as1}
The fertility function $n: \mathcal{X} \rightarrow (0, \infty)$ is  positive, bounded, and continuously differentiable.\footnote{For the avoidance of doubt, we refer to $x \in \mathbb{R}$ as positive if $x>0$ and non-negative if $x \geq 0$.}
\end{assumption}
\noindent  A flat fertility function of the form $n(x)=C$ for some constant $C>0$ satisfies Assumption \ref{as1}; we therefore nest dynamics with non-differential fertility. One might want $n$ to be integer-valued (i.e. a step function) to better capture indivisibility, but (i) it is common in the literature to allow the fertility function to take non-integer values; a justification for this is to interpret $n(x)$ as being the average fertility of people with capital level $x$, and (ii) a fertility function can be made to approximate a step function arbitrarily closely while satisfying Assumption \ref{as1}.

We say that a fixed point $x$ of the transmission function $\tau$ is called a \defn{source} if $\tau'(x)>1$, and a \defn{sink} if $\tau'(x)<1$.
\begin{assumption}\label{as2}
The transmission function $\tau: \mathcal{X} \rightarrow \mathcal{X}$ satisfies the following conditions:
\begin{enumerate}[leftmargin=0.7cm, label=(\roman*)]
    \item $\tau$ is a twice continuously differentiable bijection whose first derivative is positive everywhere.
    \item $\tau$ has exactly $1 \leq K < \infty$ fixed points in $\mathcal{X}$, denoted by $s_1,\dots,s_K$ with $s_1 < \cdots < s_K$.
    \item Each fixed point of $\tau$ is either a source or a sink.
\end{enumerate} 
\end{assumption}
\noindent A transmission function satisfying Assumption \ref{as2} can have any finite number of fixed points. The third part of the assumption rules out non-generic transmission functions that are tangential somewhere along the 45 degree line. Observe that for any transmission function satisfying Assumption \ref{as2}, we can break up $\mathcal{X}$ into the following intervals: $\mathcal{X}_0:=(s_0,s_1]$ and $\mathcal{X}_K:=[s_K,s_{K+1})$, where $s_0:= \inf \mathcal{X}$ and  $s_{K+1}:=\sup\mathcal{X}$, and if $K>1$ then for each $k \in \{1,...,K-1\}$ let
\[
\mathcal{X}_k := [s_k,s_{k+1}] .
\]
\noindent We re-use the intervals $\mathcal{X}_0,\dots,\mathcal{X}_K$ throughout the text. Assumption \ref{as2} implies that $s_{K+1}$ is either infinite in which case $\mathcal{X}_K = [s_K,\infty)$, or it is finite, so $s_{K+1}=s_K$, in which case $\mathcal{X}_K=\emptyset$. Similarly, $\mathcal{X}_0$ is either $(-\infty,s_1]$ or empty.

We will maintain Assumptions \ref{as1} and \ref{as2} throughout Section \ref{sec:results} so we do not repeat the assumptions in the statement of each result in this section. 

\paragraph{Notation} Consider a real interval $I \subseteq \mathbb{R}$. We denote the interior of $I$ by $\interior I$. For any function $\mu : I \rightarrow I$ and integer $t \geq 1$,
\[
\mu^{[t]} := \underbrace{\mu \circ \mu \circ \cdots \circ \mu}_{t\text{-times}}
\]
denotes the $t$-fold composition of $\mu$, and we adopt the convention $\mu^{[0]}(x) :=x$. Letting $a$ and $b$ denote the endpoints of $I$, we say that a function $\mu:I \rightarrow (0,\infty)$ is \defn{endpoint maximal} at $a$ in $I$ if $\lim_{x \to a}\mu(x)>\lim_{x \to b} \mu(x)$. Finally, as is standard, we say that $\mu:\mathbb{R} \rightarrow [0,\infty)$ is supported on $I \subseteq \mathbb{R}$ if $I$ is the closure of the subset of $\mathbb{R}$ for which $\mu$ is non-zero.

\subsection{Degenerate initial distribution}\label{sec:atoms}
We start by showing that a distribution that is degenerate at a fixed point of the transmission function is always a steady state of the population process. Moreover, we establish conditions under which an initial distribution that is degenerate at a single level of capital converges to a degenerate steady state. The results of Section \ref{sec:atoms} are straightforward. We briefly discuss them here only for completeness.

\begin{remark}
Consider some $k \in \{0,\dots,K\}$ and a fixed point of the transmission function $s \in \mathcal{X}_k$. A distribution that is degenerate at $s$ is a steady state of the population process, i.e. $F^*(x) := \mathbbm{1}[s \leq x]$, and the  population growth factor in this steady state is given by $\mathbb{E}^*[n] = n(s)$. Additionally, if the initial distribution $F_0$ is degenerate at some interior point of $\mathcal{X}_k$ and $s$ is a sink then, regardless of the shape of the fertility function, the population process converges to a distribution that is degenerate at $s$. 
\end{remark}
The first part of the above remark is evident: since $s$ is a fixed point of $\tau$, anyone born with capital level $s$ will have descendants with capital level $\tau(s)=s$, so a distribution of capital $F^*$ that is degenerate at $s$ is a steady state of the population process. Moreover, the number of children that each parent has at $s$ is $n(s)$; this is therefore the population growth factor. The second part of the remark above states that if the initial distribution is $F_0(x) = \mathbbm{1}[y \leq x]$ for some $y \in \interior \mathcal{X}_k$ and $s \in \mathcal{X}_k$ is a sink, then $\lim_{t \to \infty} F_t(x) = \mathbbm{1}[s \leq x]$. The reason is, again, straightforward: since $F_{0}$ is degenerate at some $y \in \interior  \mathcal{X}_k$, every parent in generation $0$ has capital level $y$. This implies that each child in generation $1$ has capital level $\tau(y)$. In other words, $F_{1}(x) = \mathbbm{1}[\tau(y) \leq x] $. Carrying this forward to generation $t$ gives us
\[
F_{t}(x) = \mathbbm{1}[\tau^{[t]}(y) \leq x] .
\]
The result then follows because, since $s$ is a sink, $\lim_{t \to \infty} \tau^{[t]}(y) = s$ for each $y\in \interior \mathcal{X}_k$.\footnote{We provide a simple formal proof of this in the appendix; see Lemma \ref{lem:kappa1}.} An example is shown in Figure \ref{fig:atoms}.

\begin{figure}
    \centering
    \includegraphics[width=0.5\linewidth]{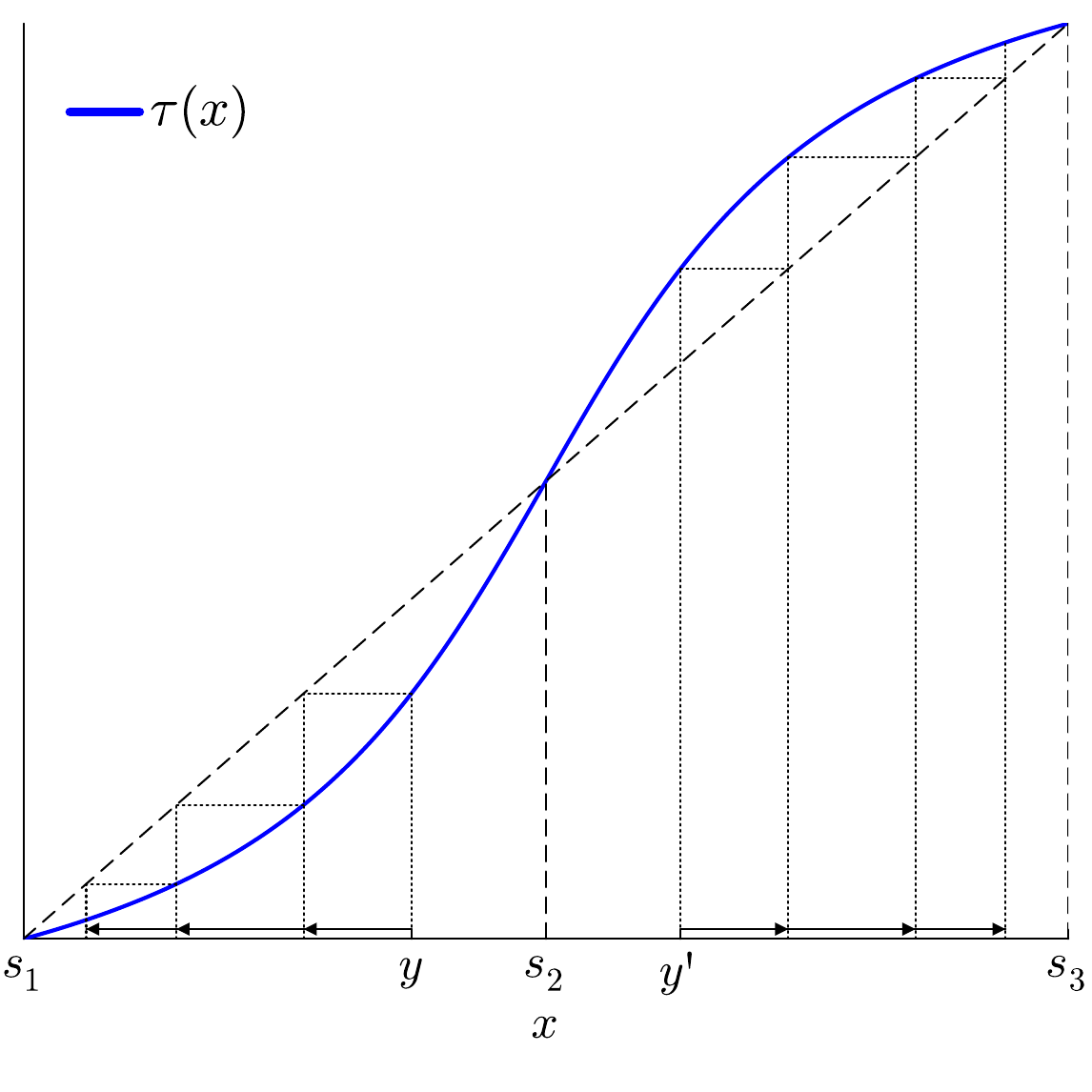}
    \caption{A transmission function $\tau : \mathcal{X} \to \mathcal{X}$ where $\mathcal{X} = [s_1,s_3]$ with three fixed points: $s_1$ and $s_3$ are sinks but $s_2$ is a source. Here, $\mathcal{X}_0$ and $\mathcal{X}_3$ are empty, and $\mathcal{X}_1 = [s_1,s_2]$ and $\mathcal{X}_2 = [s_2,s_3]$. An initial degenerate distribution at a capital level $y$ in the interior of $\mathcal{X}_1$ converges to a degenerate distribution at $s_1$. Similarly, an initial degenerate distribution at a capital level $y'$ in the interior of $\mathcal{X}_2$ converges to a degenerate distribution at $s_3$.}
    \label{fig:atoms}
\end{figure}

\subsection{Atomless initial distribution}\label{sec:densities}
In this section we analyze the conditions under which the population process possesses an atomless steady state distribution. We also characterize the dynamics when the initial distribution is atomless. Our results on the existence of atomless steady states and the conditions under which we obtain convergence to such states are entirely new and they are the central contribution of our paper.

Proposition \ref{prop:density_dynamic} below shows how densities evolve from one generation to the next. Let $\rho:\mathcal{X} \rightarrow \mathcal{X}$ denote the inverse of the transmission function, so $\rho(x) := \tau^{-1}(x)$ for each $x \in \mathcal{X}$. 

\begin{proposition}\label{prop:density_dynamic}
For any $t\geq 0$, if $F_{t}$ admits a density $f_{t}$ then $F_{t+1}$ admits a density $f_{t+1}$ which, for each $x \in \mathcal{X}$, satisfies
\begin{equation}\label{eq:evolution}
    f_{t+1}(x) = \frac{1}{\mathbb{E}_{t}[n]}  \frac{n(\rho(x))}{\tau'(\rho(x))} f_{t}(\rho(x)).
\end{equation}
\end{proposition}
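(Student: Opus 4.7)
The plan is to start from the integral representation in \eqref{main_nonlin} and either (a) perform a change of variables that transforms the integral into one whose integrand is manifestly the density $f_{t+1}$, or (b) differentiate directly with respect to $x$. Both routes are short; I would present the change-of-variables argument, as it both yields the density formula and simultaneously confirms that $F_{t+1}$ is absolutely continuous.

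First, I would replace $dF_t(z)$ by $f_t(z)\,dz$ in \eqref{main_nonlin}. Assumption \ref{as2}(i) tells us that $\tau$ is a strictly increasing bijection on $\mathcal{X}$, so $\rho = \tau^{-1}$ is well-defined, $C^2$, and strictly increasing, with $\rho'(x) = 1/\tau'(\rho(x))$ by the inverse function theorem. In particular, the event $\{z : \tau(z) \le x\}$ equals $\{z : z \le \rho(x)\}$, so
\[
F_{t+1}(x) = \frac{1}{\mathbb{E}_t[n]} \int_{s_0}^{\rho(x)} n(z) f_t(z)\,dz.
\]

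Next, I would apply the change of variable $y = \tau(z)$, i.e. $z = \rho(y)$ with $dz = \rho'(y)\,dy = dy/\tau'(\rho(y))$. The limits transform from $z \in (s_0, \rho(x)]$ to $y \in (s_0, x]$ (using that $\tau$ fixes the boundary behavior, as $s_0$ is either $-\infty$ or a fixed point). Hence
\[
F_{t+1}(x) = \frac{1}{\mathbb{E}_t[n]} \int_{s_0}^{x} \frac{n(\rho(y))}{\tau'(\rho(y))} f_t(\rho(y))\,dy,
\]
from which we can read off the density $f_{t+1}$ given in \eqref{eq:evolution}. Equivalently, one can differentiate the earlier expression for $F_{t+1}(x)$ in $x$ via the Leibniz rule, obtaining the same formula after substituting $\rho'(x) = 1/\tau'(\rho(x))$.

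There is no real obstacle here: the only things to check are that $\rho$ inherits the required regularity from $\tau$ (immediate from Assumption \ref{as2}(i)), that $\mathbb{E}_t[n]$ is finite and positive (immediate from Assumption \ref{as1}), and that the change of variables is legitimate (immediate since $\tau$ is a $C^2$ diffeomorphism with positive derivative). As a sanity check, one may verify $\int_{\mathcal{X}} f_{t+1}(x)\,dx = 1$ by reversing the change of variables, which recovers $\mathbb{E}_t[n]/\mathbb{E}_t[n] = 1$; this confirms that $F_{t+1}$ is indeed a probability distribution with density $f_{t+1}$.
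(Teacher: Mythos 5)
Your proof is correct. It differs from the paper's in mechanics though not in substance: the paper forms the difference quotient $\bigl(F_{t+1}(x+\epsilon)-F_{t+1}(x)\bigr)/\epsilon$, uses the indicator to reduce the integral to $[\rho(x),\rho(x+\epsilon)]$, extracts $n(y_\epsilon)$ via the mean value theorem for integrals, and passes to the limit $\epsilon\to 0$ to obtain $f_{t+1}(x)=\frac{1}{\mathbb{E}_t[n]}n(\rho(x))\rho'(x)f_t(\rho(x))$; you instead substitute $y=\tau(z)$ and read the density off the resulting integral representation. Your route has the small advantage of exhibiting $F_{t+1}$ directly as an integral of a nonnegative function, so absolute continuity (the ``admits a density'' part of the claim) is immediate rather than inferred from pointwise differentiability; the paper's route avoids any discussion of how the endpoints of $\mathcal{X}$ transform under the substitution. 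Your boundary remark is in any case harmless: since $\tau$ is an increasing bijection of $\mathcal{X}$ onto itself, $\lim_{z\to s_0}\tau(z)=s_0$ whether or not $s_0$ is finite, so the lower limit is preserved. Both arguments rest on the same two facts --- monotonicity of $\tau$ turning $\{\tau(z)\le x\}$ into $\{z\le\rho(x)\}$, and $\rho'(x)=1/\tau'(\rho(x))$ --- and both are complete.
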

\noindent This implies that if the initial distribution $F_{0}$ admits a density $f_{0}$ then the population process can be characterized by the evolution of a sequence of densities $(f_{t})_{t \geq 0}$. 
\begin{remark}\label{cor:growth}
If $F^*$ is a steady state of the population process that admits a density $f^*$ for which $f^*(s)>0$ and $s$ is a fixed point of $\tau$ then $\mathbb{E}^*[n] = n(s)/\tau'(s)$. This follows directly from evaluating \eqref{eq:evolution} in steady state at $s$.
\end{remark}
Next, we establish conditions guaranteeing the existence of atomless steady states and convergence to such steady states. To do so, we now introduce the notions of well-behavedness, niceness, and genericity, and we provide an in-depth discussion of these notions after the statement of Theorem \ref{thm} below. For any pair $x,y \in \mathcal X$ define 
\[ G_t(x, y) := \prod_{i=1}^t \frac{n(\rho^{[i]}(x))}{\tau'(\rho^{[i]}(x))} \frac{\tau'(y)}{n(y)} .\]
\noindent The expression $G_t$ plays a central role in our analysis. We say that the primitives $n$ and $\tau$ are \defn{well-behaved} if for any $y \in \mathcal{X}$, $G_t(\cdot,y)$ is integrable over $\mathcal{X}$ for some $t\geq 1$. For each $k \in \{0,\dots,K\}$ we say that the primitives are \defn{nice} in $\mathcal{X}_k$ if there is a fixed point of the transmission function $s \in \mathcal{X}_k$ such that \emph{either} $s$ is a sink, $s$ is not a stationary point of $n(\cdot)$, and $n(\cdot)$ is endpoint maximal at $s$ in $\mathcal{X}_k$, \emph{or} $s$ is a source, $s$ is not a stationary point of $n(\cdot)/\tau'(\cdot)$, and $n(\cdot)/\tau'(\cdot)$ is endpoint maximal at $s$ in $\mathcal{X}_k$.\footnote{Note that the `either' and `or' parts of the statement are mutually exclusive.} Finally, we say that the primitives $n$ and $\tau$ are \defn{generic} if the set
\begin{equation}\label{eq:generic}
\argmax_{k \in \{1,\dots,K\}} \; n(s_k)\mathbbm{1}[\text{$s_k$ is a sink}] + \frac{n(s_k)}{\tau'(s_k)}\mathbbm{1}[\text{$s_k$ is a source}]
\end{equation}
is a singleton, in which case we denote its unique element by $k^*$.

Theorem \ref{thm} is the central result of our paper.

\begin{theorem}\label{thm}
Suppose the primitives $n$ and $\tau$ are well-behaved, generic, and that they are nice in $\mathcal{X}_k$ for each $k \in \{0,\dots,K\}$.
\begin{enumerate}[leftmargin=0.7cm, label=(\roman*)]
    \item If $s_{k^*}$ is a source then there exists a steady state supported on $\mathcal{X}_{k^*-1} \cup \mathcal{X}_{k^*}$ with density
\[
f^*(x) \propto \lim_{t \to \infty} G_t(x,s_{k^*}) .
\]
If, moreover, the initial distribution $F_{0}$ admits a continuous and bounded density $f_{0}$ that is supported on $\mathcal X$, then for all $x \in \mathcal X$, $\lim_{t \to \infty} f_{t}(x) = f^*(x)$.
\item If, instead, $s_{k^*}$ is a sink and the initial distribution $F_{0}$ admits a continuous and bounded density $f_{0}$ that is supported on $\mathcal X$, then the population process converges to a distribution that is degenerate at $s_{k^*}$.
\end{enumerate}
\end{theorem}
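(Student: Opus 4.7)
The plan is to iterate the density evolution from Proposition \ref{prop:density_dynamic} and express the normalized density as a ratio
\[
f_t(x) \;=\; \frac{h_t(x)}{\int_{\mathcal{X}} h_t}, \qquad h_t(x) \;:=\; \prod_{i=1}^{t}\frac{n(\rho^{[i]}(x))}{\tau'(\rho^{[i]}(x))}\, f_0\!\left(\rho^{[t]}(x)\right),
\]
obtained by unrolling the recursion $h_{t+1}(x)=[n(\rho(x))/\tau'(\rho(x))]\,h_t(\rho(x))$ with $h_0=f_0$ and using $\int h_t = \prod_{i=0}^{t-1}\mathbb{E}_i[n]$. The theorem then reduces to the pointwise asymptotics of $h_t$ and of $\int_{\mathcal{X}} h_t$, together with a careful exchange of limits.

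The next step is to describe the backward orbit. Because $\tau$ is strictly increasing with source/sink fixed points that must alternate along $\mathcal{X}$, each interval $\mathcal{X}_j$ has a unique source endpoint $s_j^{\mathrm{src}}$, and iterating $\rho=\tau^{-1}$ in $\interior\mathcal{X}_j$ converges to $s_j^{\mathrm{src}}$ at the geometric rate $1/\tau'(s_j^{\mathrm{src}})<1$. Factoring the dominant per-step contribution, $h_t(x) = [n(s_j^{\mathrm{src}})/\tau'(s_j^{\mathrm{src}})]^{t}\, G_t(x,s_j^{\mathrm{src}})\, f_0(\rho^{[t]}(x))$ for $x\in\interior\mathcal{X}_j$. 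Niceness (non-stationarity and endpoint-maximality of $n/\tau'$ at $s_j^{\mathrm{src}}$) combined with the smoothness of $n$ and $\tau$ makes the log-ratios $\log[(n/\tau')(\rho^{[i]}(x))]-\log[(n/\tau')(s_j^{\mathrm{src}})]$ summable in $i$ — they are bounded by a constant times $|\rho^{[i]}(x)-s_j^{\mathrm{src}}|$, which is geometric — so $G_t(x,s_j^{\mathrm{src}})$ converges to a finite positive limit $G_\infty(x,s_j^{\mathrm{src}})$, and continuity of $f_0$ gives $f_0(\rho^{[t]}(x))\to f_0(s_j^{\mathrm{src}})$.

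For part (i), genericity forces $\lambda_*:=n(s_{k^*})/\tau'(s_{k^*})$ to strictly exceed every other source rate $n(s_j^{\mathrm{src}})/\tau'(s_j^{\mathrm{src}})$ and every sink rate $n(s_k)$. Hence $\lambda_*^{-t}h_t(x) \to G_\infty(x,s_{k^*})\,f_0(s_{k^*})$ on the winning basin $\mathcal{X}_{k^*-1}\cup\mathcal{X}_{k^*}$ and $\to 0$ elsewhere. Using well-behavedness to produce a $t$-uniform integrable dominator for $\lambda_*^{-t}h_t$ on all of $\mathcal{X}$, dominated convergence gives $\lambda_*^{-t}\int h_t \to f_0(s_{k^*})\int G_\infty(\cdot,s_{k^*})$, and the quotient $f_t=h_t/\int h_t$ converges pointwise to $f^*(x):=G_\infty(x,s_{k^*})/\int G_\infty(\cdot,s_{k^*})$. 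The recursion $G_{t+1}(x,s_{k^*})=[(n/\tau')(\rho(x))/\lambda_*]\,G_t(\rho(x),s_{k^*})$, obtained by extracting the first factor of the product in the definition of $G_t$, passes to the limit and shows $f^*$ satisfies the steady-state equation with $\mathbb{E}^*[n]=\lambda_*$, matching Remark \ref{cor:growth}.

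For part (ii) the limit is singular, so I would prove weak convergence by tracking mass in a neighborhood $A_\epsilon:=(s_{k^*}-\epsilon,s_{k^*}+\epsilon)$. The change of variables $y=\rho(x)$ gives $\int_{A_\epsilon} h_{t+1} = \int_{\rho(A_\epsilon)} n(y)\,h_t(y)\,dy$, and since $s_{k^*}$ is a sink of $\tau$ and hence a source of $\rho$ we have $\rho(A_\epsilon)\supseteq A_\epsilon$, so mass in $A_\epsilon$ grows at a rate bounded below by $n(s_{k^*})-o_\epsilon(1)$. Applying the part (i) analysis to each other interval gives a strictly smaller upper rate by genericity, so the ratio of outside mass to total mass vanishes and $F_t\Rightarrow\delta_{s_{k^*}}$. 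The main obstacle throughout is the uniform integrable domination in part (i): on the winning basin it follows from geometric convergence of $\rho^{[i]}(x)$, but on the losing intervals one must leverage well-behavedness together with the strict eigenvalue gap from genericity to control $\lambda_*^{-t}h_t$ uniformly in $t$.
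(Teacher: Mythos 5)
Your overall architecture matches the paper's: unroll the recursion into $h_t(x)=\prod_{i=1}^t[n(\rho^{[i]}(x))/\tau'(\rho^{[i]}(x))]\,f_0(\rho^{[t]}(x))$, identify a per-interval growth rate, use genericity to make one interval win, and pass to the limit by dominated convergence (part (i)) or by mass concentration (part (ii)). This is essentially the paper's decomposition into $K+1$ parallel processes (Propositions \ref{prop:recover} and \ref{prop:parallel}). However, there is a genuine gap in how you treat intervals whose relevant fixed point is a \emph{sink}. First, your claim that every $\mathcal{X}_j$ has a source endpoint is false for the extreme intervals $\mathcal{X}_0$ and $\mathcal{X}_K$ when their only fixed-point endpoint is a sink: there the backward orbit $\rho^{[i]}(x)$ escapes toward $\inf\mathcal{X}$ or $\sup\mathcal{X}$ and your source-anchored factorization has no anchor. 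Second, and more seriously, even on an interior interval $[s^{\mathrm{src}},s^{\mathrm{snk}}]$ the source-anchored representation controls only the \emph{pointwise} behavior of $h_t$; the \emph{mass} $\int_{\mathcal{X}_j}h_t$ grows at the sink rate $n(s^{\mathrm{snk}})$ whenever mass piles up at the sink, and $n(s^{\mathrm{snk}})$ can strictly exceed the source rate $n(s^{\mathrm{src}})/\tau'(s^{\mathrm{src}})$. In that situation $\lambda^{-t}h_t$ converges pointwise while $\lambda^{-t}\int h_t$ diverges, so no $t$-uniform integrable dominator for $\lambda_*^{-t}h_t$ ``on all of $\mathcal{X}$'' can be extracted from the source-anchored product, and well-behavedness alone does not supply one. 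Your part (ii) sketch inherits the same problem: the single-step change of variables gives a \emph{lower} bound on the mass growth near $s_{k^*}$, but you never obtain the matching \emph{upper} bounds—neither for the competing sink-anchored intervals (where ``the part (i) analysis'' does not apply) nor for the portion of $\mathcal{X}_{k^*-1}\cup\mathcal{X}_{k^*}$ away from $s_{k^*}$.

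The paper's resolution is the $t$-fold change of variables $y=\rho^{[t]}(x)$ in \eqref{eq:transformation}, which converts the backward product $G_t(x,s)$ into $(\tau'(s))^t\prod_{i=0}^{t-1}n(\tau^{[i]}(y))/n(s)$, a \emph{forward} product whose iterates do converge to the sink; endpoint maximality of $n$ at the sink then yields the uniform bound (Lemma \ref{lem:D}) needed for dominated convergence and for the exact mass growth rates \eqref{eq:prodlim1}--\eqref{eq:prodlim2}, which are what genericity actually compares. You would need this (or an equivalent device) to close your argument. A smaller omission: on the winning basin you use endpoint maximality of $n/\tau'$ at $s_{k^*}$ in \emph{both} $\mathcal{X}_{k^*-1}$ and $\mathcal{X}_{k^*}$, but niceness only guarantees that \emph{some} fixed point in each interval satisfies the either/or condition; you must argue, as the paper does at the start of its proof of part (i), that genericity rules out the adjacent sink being the nice one. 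On the plus side, your Lipschitz bound on the log-ratios gives convergence of $G_t(x,s)$ slightly more directly than the paper's ratio-test argument in Lemma \ref{lem:Q}.
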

Theorem \ref{thm} establishes sufficient conditions on the primitives $n$ and $\tau$ for the existence of an atomless steady state, and for the convergence of the population process to a distribution that is either atomless---Theorem \ref{thm} (i)---or degenerate---Theorem \ref{thm} (ii).

\begin{remark}\label{rem:flat}
The atomless steady states identified in Theorem \ref{thm} (i) are novel and do not arise when fertility is flat. With flat fertility of the form $n(x)=C$ for some $C>0$, the population process \eqref{main_nonlin} reduces to $F_{t+1}(x) = F_t(\rho(x))$ which, through iteration, gives us
\[
F_{t}(x) = F_0(\rho^{[t]}(x)) .
\]
In turn, this implies that for any continuous initial distribution $F_0$, any $k \in \{0,\dots,K\}$, and any source $s \in \mathcal{X}_k$, $\lim_{t \to \infty} F_t(x) = F_0(s)$ for each $x \in \interior \mathcal{X}_k$. So, with flat fertility and any continuous initial distribution, the cumulative distribution flattens out around sources and therefore accumulates into atoms at sinks of the transmission function.
\end{remark}
We give an in-depth discussion of Theorem \ref{thm} in Sections \ref{sec:discussionAs} and \ref{sec:discussionThm}. Some intuition for the roles of genericity, well-behavedness, and niceness is given in Section \ref{sec:discussionAs}, and we argue that these conditions are easy to check. In Section \ref{sec:discussionThm}, we provide intuition for the dynamics of our system and, in particular, for the emergence of the atomless steady states under the conditions of Theorem \ref{thm} (i), by considering several specific parametric examples. Section \ref{sec:discussionThm} is important for two reasons: first, each example highlights a different aspect of Theorem \ref{thm} and helps to understand the dynamics of the population process and the roles of genericity, well-behavedness, and niceness in determining these dynamics. Second, the examples show that the conditions of genericity, well-behavedness and niceness are satisfied in `natural' settings (in which the fertility and transmission functions take shapes that one might ordinarily encounter in an economic model). In fact, Section \ref{sec:endo} goes a step further in showing that the conditions are satisfied even in a micro-founded model with endogenous fertility and transmission.  

The proof of Theorem \ref{thm}, all steps of which are in the appendix, consists in decomposing the population process into $K+1$ parallel processes, each of which is analyzed separately, and then reconstructing the overall distribution of capital from the parallel processes. More specifically, we rely on the insight that capital in a dynasty cannot `cross' a fixed point of the transmission function: if a person has a capital level in $\mathcal{X}_k$ for some $k$ then the capital of every descendant of that person will also be in $\mathcal{X}_k$. This implies that the evolution of the shape of the distribution of capital truncated to an interval $\mathcal{X}_k$ is independent of the evolution of its shape when truncated to any other interval $\mathcal{X}_{k'}$. We therefore truncate the distribution of capital to each of the $K+1$ intervals and track the evolution of each truncated distribution separately, from which we reconstruct the untruncated distribution over all of $\mathcal{X}$.

\subsubsection{Genericity, well-behavedness, and niceness}\label{sec:discussionAs}
Theorem \ref{thm} is predicated on genericity, well-behavedness, and niceness, which are restrictions only on $n$ and $\tau$ that are easy to verify in practice, as we argue below.

\paragraph{Genericity} This condition is straightforward to check and it is innocuous for two reasons. First, we impose it in Theorem \ref{thm} only for simplicity: the statement of the result would be unwieldy without it, and the results in the appendix provide the tools to deal with more general cases (see Proposition \ref{prop:parallel}). Second, the condition holds generically. Indeed, it holds trivially whenever $\tau$ has a unique fixed point and, even if the transmission function has multiple fixed points, we would generically expect the argmax in \eqref{eq:generic} to be a singleton in the presence of differential fertility.

Recall, as mentioned above, that the evolution of the shape of the distribution of capital truncated to an interval $\mathcal{X}_k$ is independent of the evolution of its shape truncated to any other interval $\mathcal{X}_{k'}$. However, the population growth rates across different intervals matter for the shape of the untruncated distribution of capital: if the population growth rate in the process truncated over some interval $\mathcal{X}_k$ is much smaller than over some other interval $\mathcal{X}_{k'}$, then the mass of people in the interval $\mathcal{X}_k$ must eventually go to zero since, in the overall distribution, it will dominated by the faster growing population in $\mathcal{X}_{k'}$. Genericity is a condition on the long-run population growth rates of the $K+1$ parallel process ensuring that the mass of people goes to zero everywhere except for capital levels in $\mathcal{X}_{k^*-1} \cup \mathcal{X}_{k^*}$ when $s_{k^*}$ is a source and that the distribution accumulates into an atom exactly and only at $s_{k^*}$ when $s_{k^*}$ is a sink. Without genericity, it would be possible for the long-run density to be supported over multiple non-adjacent intervals $\mathcal{X}_k$, or for the distribution of capital to accumulate into several atoms (at more than one fixed point of the transmission function).

\paragraph{Well-behavedness} This is a technical condition that requires the integrability of $G_t(\cdot,y)$ for some $t \geq 1$. In the appendix, we show that this condition is, jointly with the other assumptions of the theorem, sufficient for the limit $\lim_{t\to\infty} G_t(\cdot,y)$ not only to exist but also to be integrable. Integrability of this function limit is a necessary condition for the steady state $f^*$ in Theorem \ref{thm} (i) to in fact be a density, and it also implies that the corresponding cumulative distribution function $F^*$ is continuous. The integrability condition is similarly needed for Theorem \ref{thm} (ii).

Well-behavedness is easy to check thanks to Propositions \ref{prop:int1} and \ref{prop:int2} below.
\begin{proposition}\label{prop:int1}
If $n$ is integrable on $\mathcal{X}$ then the primitives are well-behaved.
\end{proposition}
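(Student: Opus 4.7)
The plan is to verify well-behavedness by exhibiting a single $t$ for which $G_t(\cdot,y)$ is integrable, and the natural choice is $t=1$. Since $G_1(x,y) = \frac{n(\rho(x))}{\tau'(\rho(x))}\cdot\frac{\tau'(y)}{n(y)}$ and the second factor is a constant in $x$ for each fixed $y$, the question reduces to the integrability of $x \mapsto n(\rho(x))/\tau'(\rho(x))$ over $\mathcal{X}$.

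To handle this, I would apply the change of variables $u = \rho(x)$, i.e.\ $x = \tau(u)$, so that $dx = \tau'(u)\,du$. This is legitimate because Assumption \ref{as2}(i) makes $\tau$ a $C^2$ bijection of $\mathcal{X}$ onto itself with strictly positive derivative, so $\rho = \tau^{-1}$ is a $C^2$ diffeomorphism and $\rho(\mathcal{X}) = \mathcal{X}$. Under this substitution the Jacobian $\tau'(u)$ cancels the $\tau'(\rho(x))$ in the denominator, yielding
\[
\int_{\mathcal{X}} \frac{n(\rho(x))}{\tau'(\rho(x))}\,dx \;=\; \int_{\mathcal{X}} \frac{n(u)}{\tau'(u)}\,\tau'(u)\,du \;=\; \int_{\mathcal{X}} n(u)\,du,
\]
which is finite by hypothesis. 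Multiplying by the constant $\tau'(y)/n(y)$ (finite and positive since $n>0$ by Assumption \ref{as1} and $\tau'>0$ by Assumption \ref{as2}(i)) then gives $\int_{\mathcal{X}} G_1(x,y)\,dx < \infty$ for every $y \in \mathcal{X}$, establishing well-behavedness.

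There is no real obstacle here: the only subtlety is noting that $\rho$ maps $\mathcal{X}$ onto $\mathcal{X}$ (so the new domain of integration is again $\mathcal{X}$, not some subset), which is immediate from $\tau$ being a bijection of $\mathcal{X}$. The proof is thus a one-line change of variables once $t=1$ is chosen.
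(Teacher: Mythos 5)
Your proof is correct and takes essentially the same approach as the paper: choose $t=1$ and apply the change of variables $z=\rho(x)$ so that the Jacobian $\tau'$ cancels the denominator, reducing the question to the integrability of $n$ itself. (Incidentally, your constant $\tau'(y)/n(y)$ matches the definition of $G_1$, while the paper's displayed computation writes its reciprocal; either way it is a positive finite constant and immaterial.)
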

\noindent Integrability of the fertility function is therefore an easy-to-verify sufficient condition for well-behavedness. However, as the result below shows, $n$ being integrable is not a necessary condition.
\begin{proposition}\label{prop:int2}
Let $\mathcal{X}=[1,\infty)$. If the transmission function $\tau: \mathcal{X} \rightarrow \mathcal{X}$ is given by $\tau(x) = x^a$ for some positive $a \neq 1$, and the fertility function $n :\mathcal{X} \rightarrow (0,\infty)$ satisfies the condition that there is some $m>0$ such that $\lim_{x \rightarrow \infty}x^m n(x) = M$ where $0\leq M<\infty$, then the primitives are well-behaved.
\end{proposition}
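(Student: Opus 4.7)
The plan is to compute $\int_{\mathcal{X}} G_t(x,y)\, dx$ in closed form and then choose $t$ large enough to force integrability at infinity. First I would simplify $G_t$ using the explicit primitives: since $\rho(x)=x^{1/a}$, one has $\rho^{[i]}(x)=x^{a^{-i}}$ and $\tau'(\rho^{[i]}(x))=a\,x^{(a-1)a^{-i}}$, and the geometric identity $\sum_{i=1}^{t}(a-1)a^{-i}=1-a^{-t}$ collapses the product of denominators into a single power of $x$:
\[
G_t(x,y)\;=\;\frac{\tau'(y)}{n(y)}\cdot\frac{1}{a^{t}\,x^{\,1-a^{-t}}}\,\prod_{i=1}^{t} n\!\bigl(x^{a^{-i}}\bigr).
\]

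Next, I would substitute $u:=x^{a^{-t}}$, which is a smooth bijection of $[1,\infty)$ onto itself (for both $a>1$ and $0<a<1$) with Jacobian $dx=a^{t}u^{a^{t}-1}\,du$. Under this change of variable $x^{a^{-i}}=u^{a^{t-i}}$, and, crucially, the Jacobian cancels the prefactor $a^{t}x^{1-a^{-t}}$ exactly, reducing the integral to
\[
\int_{1}^{\infty}G_t(x,y)\,dx\;=\;\frac{\tau'(y)}{n(y)}\int_{1}^{\infty}\prod_{j=0}^{t-1}n\!\bigl(u^{a^{j}}\bigr)\,du.
\]
Integrability of $G_t(\cdot,y)$ thus reduces to integrability of a finite product of tail-decaying copies of $n$.

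Finally, I would bound the integrand using the hypothesis together with Assumption \ref{as1}. Since $n$ is bounded and $x^{m}n(x)\to M<\infty$, there is a constant $C>0$ with $n(x)\le C\,x^{-m}$ for all $x\ge 1$. This yields the tail bound $\prod_{j=0}^{t-1} n(u^{a^{j}})\le C^{t}\, u^{-m(a^{t}-1)/(a-1)}$, while near $u=1$ the integrand is bounded by $(\sup n)^{t}$, ruling out any left-endpoint singularity. It then suffices to pick $t$ so that the exponent $m(a^{t}-1)/(a-1)$ strictly exceeds $1$: for $a>1$ this exponent grows to $+\infty$ (at geometric rate $a^{t}$), so any sufficiently large $t$ works; for $0<a<1$ it increases monotonically toward the limit $m/(1-a)$, and the argument goes through as soon as this limit exceeds $1$. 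The main conceptual step is the substitution $u=x^{a^{-t}}$: spotting the telescoping sum that makes the Jacobian cancel the $x^{1-a^{-t}}$ factor is what turns the nested iterates $\rho^{[i]}$ into a single integrable power of $u$; once that step is in place, the remainder is a routine power-law tail estimate.
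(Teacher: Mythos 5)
Your argument is correct and reaches the same conclusion by a genuinely different route. The paper compares $G_t(\cdot,y)$ to the analogous product $\tilde G_t(\cdot,y)$ built from the pure power law $\tilde n(x)=x^{-m}$: it computes $\tilde G_t(x,y)\propto x^{(1-m-a)(1-a^{-t})/(a-1)}$, picks $t$ large enough that this exponent drops below $-1$, and then transfers integrability to $G_t$ via the limit comparison test, using $\lim_{x\to\infty} x^m n(x)=M<\infty$ to show that $G_t/\tilde G_t$ has a finite limit. You instead collapse the iterates directly with the substitution $u=x^{a^{-t}}$, whose Jacobian exactly cancels the accumulated $\tau'$ factors, and then apply the pointwise majorant $n(x)\le Cx^{-m}$ (valid by boundedness of $n$ together with the limit hypothesis). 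Your telescoping exponent $m(a^t-1)/(a-1)$ agrees with the paper's computation after the change of variables, so the two proofs are arithmetically equivalent; yours trades the limit comparison test for an explicit dominating function, which is somewhat more self-contained. (One cosmetic point: the factor $\tau'(y)/n(y)$ sits inside the product defining $G_t$ and so should appear to the power $t$; since $y$ is fixed this is a harmless constant and does not affect integrability in $x$.)

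The caveat you flag for $0<a<1$ is not a defect of your proof but of the proposition itself. When $0<a<1$ and the only admissible $m$ satisfies $m\le 1-a$, your exponent never exceeds $1$, and integrability genuinely fails rather than merely escaping your bound: take $a=1/2$ and $n(x)=x^{-1/4}$, so that $m=1/4$ and $M=1$; then $n(z)/\tau'(z)=2z^{1/4}$ and $G_t(x,y)\propto x^{2^{t-1}-1/2}$, which is not integrable on $[1,\infty)$ for any $t\ge 1$. The paper's own proof has the same blind spot: its threshold $t>\ln\left(\frac{m+a-1}{m}\right)/\ln(a)$ is undefined when $m+a-1\le 0$. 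Since the paper only invokes Proposition~\ref{prop:int2} with $a>1$ (Examples C and D, and Theorem~\ref{thm:endo}), nothing downstream is affected, but the statement should be read as requiring either $a>1$ or $m>1-a$ --- precisely the restriction you identified.
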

\noindent Proposition \ref{prop:int2} shows that even if $n$ is not integrable over $\mathcal{X}$, it is still possible for the primitives to be well-behaved. For example, $n(x) \propto 1/\sqrt{x}$ is not integrable over $[1,\infty)$ but it satisfies the condition of Proposition \ref{prop:int2}.

\paragraph{Niceness} This is a condition on the functions $n(\cdot)$ and $n(\cdot)/\tau'(\cdot)$ which is straightforward to verify.\footnote{The `either' and `or' parts of our niceness condition, while mutually exclusive, are not exhaustive. In other words, there can be situations in which the primitives satisfy neither the `either' nor the `or' part of the condition. Characterizing outcomes in such cases remains an open question.} As discussed in the appendix,\footnote{For example, c.f. footnote \ref{fn:tech}.} the requirement that the fixed point $s$ is not a stationary point of these functions (i.e. that the functions have a non-zero derivative at $s$), which holds generically, is a technical condition that avoids pathological cases and ensures that $\lim_{t\to\infty} G_t(\cdot,s)$ exists. The endpoint maximality condition is a condition on the steady state population growth factor. Note that the condition is about endpoints \emph{only}. For example, when we require that $n(\cdot)/\tau'(\cdot)$ is endpoint maximal at $s$ in $\mathcal{X}_k$, we make no assumptions whatsoever about the values of $n(x)/\tau'(x)$ for $x \in \interior \mathcal{X}_k$. At $x \in \interior \mathcal{X}_k$, $n(x)/\tau'(x)$ can be non-monotonic and take values that are smaller or larger than at $s$.

\subsubsection{Parametric examples}\label{sec:discussionThm}
Next, we look at five parametric examples, Examples A-E, below. In each example, Assumptions \ref{as1} and \ref{as2} both hold. Each example highlights a different aspect of Theorem \ref{thm} and helps to understand the dynamics of the population process. The space of possible capital levels $\mathcal{X}$ varies from one example to another but, in every example, we set the initial density of capital $f_0$ to be  continuous, bounded, and supported on $\mathcal{X}$, as required by Theorem \ref{thm}.\footnote{In the figures corresponding to each of our examples, we set $f_0$ to be a mixture of two Gaussian distributions truncated to live in the space $\mathcal{X}$. The specific choice of the parameters of the Gaussians is immaterial for our purposes since Theorem \ref{thm} applies to any initial density that is continuous, bounded, and supported on $\mathcal{X}$.}

\begin{remark}\label{rem:empirical_validity}
    We do not claim that the functional forms for $n$ and $\tau$ that we choose in the examples below are empirically valid. In fact, whatever is an appropriate functional form must depend on the precise interpretation that is given to the `capital' variable that is being transmitted across generation. Indeed, the empirically appropriate shapes of $n$ and $\tau$ will depend on whether capital is taken to be income, or wealth, or some other variable, and we have so far not committed to any particular interpretation of capital. The purpose of the examples below is to provide intuition and to showcase applications of Theorem \ref{thm} in simple settings.
\end{remark}

\paragraph{Example A.} This first example provides some intuition for the dynamics of the population process and highlights the new type of atomless steady states that are identified in Theorem \ref{thm} part (i). Suppose that $\mathcal{X} = [0,\infty)$, that fertility $n(x)$ is proportional to $e^{-mx}$, and that the transmission function is given by $\tau(x) = ax$. We examine three cases, all of which are illustrated in Figure \ref{fig:exponential}.

\begin{figure*}[t!]
    \centering
    \caption*{Example A}
    \begin{subfigure}[t]{0.325\linewidth}
        \centering
        \caption*{(a)}
        \includegraphics[width=1\linewidth]{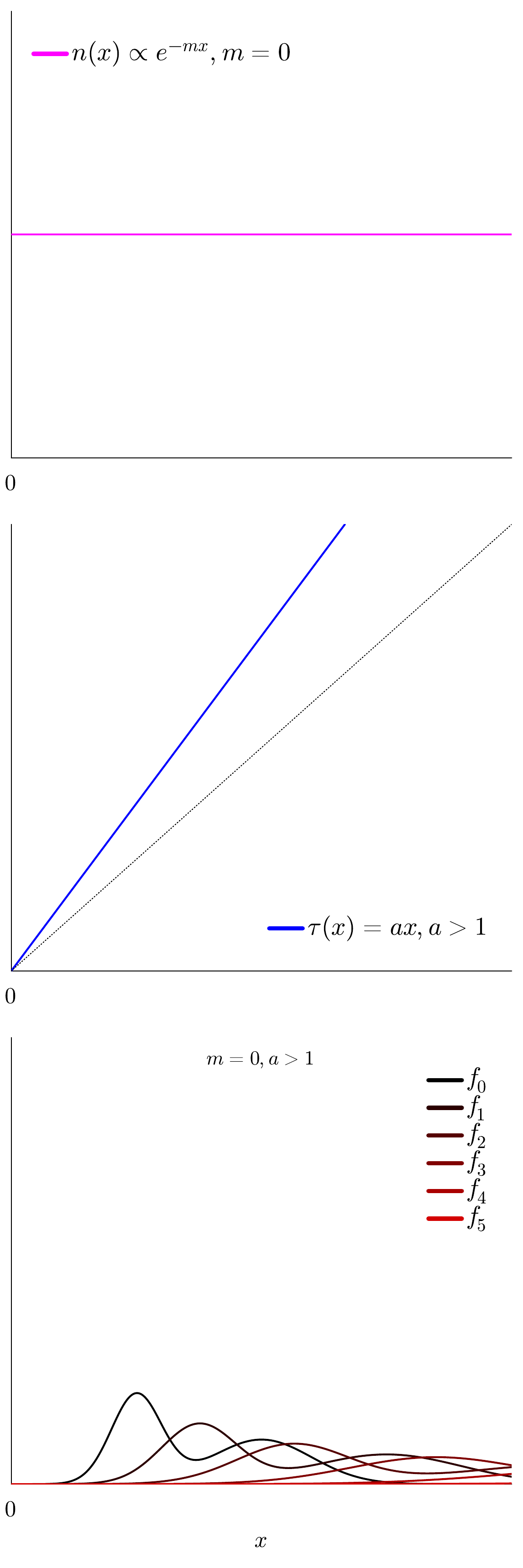}
    \end{subfigure}%
    \hspace{0.03cm}
    \begin{subfigure}[t]{0.325\linewidth}
        \centering
        \caption*{(b)}
        \includegraphics[width=1\linewidth]{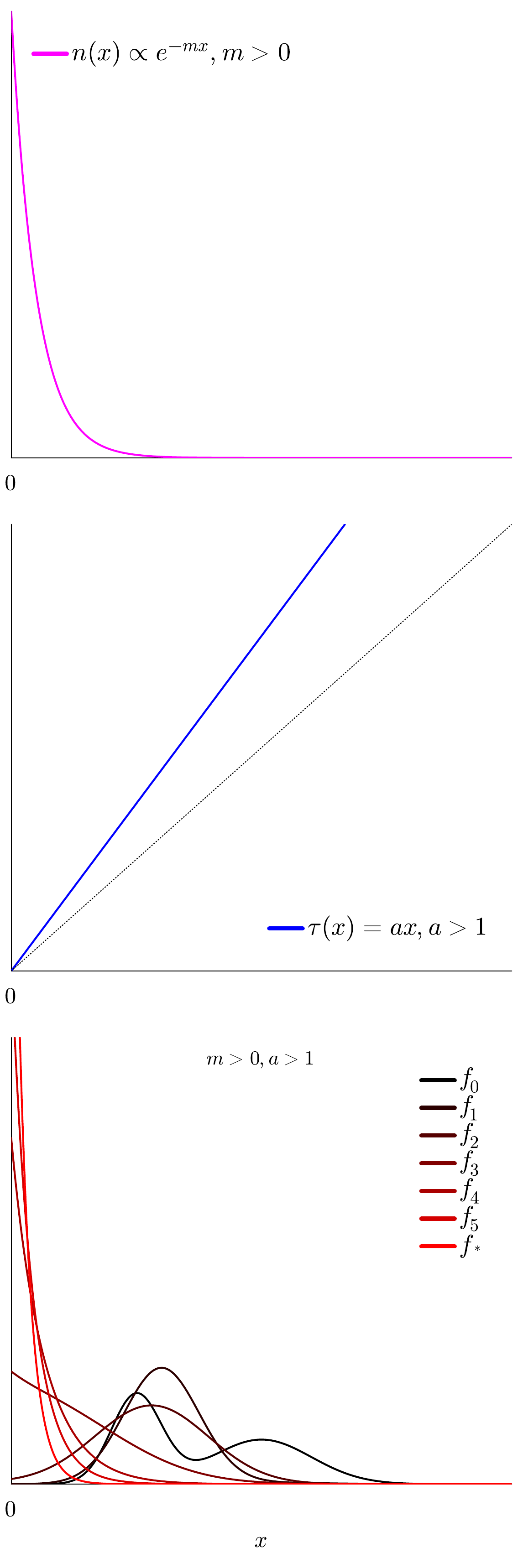}
    \end{subfigure}
    \begin{subfigure}[t]{0.325\linewidth}
        \centering
        \caption*{(c)}
        \includegraphics[width=1\linewidth]{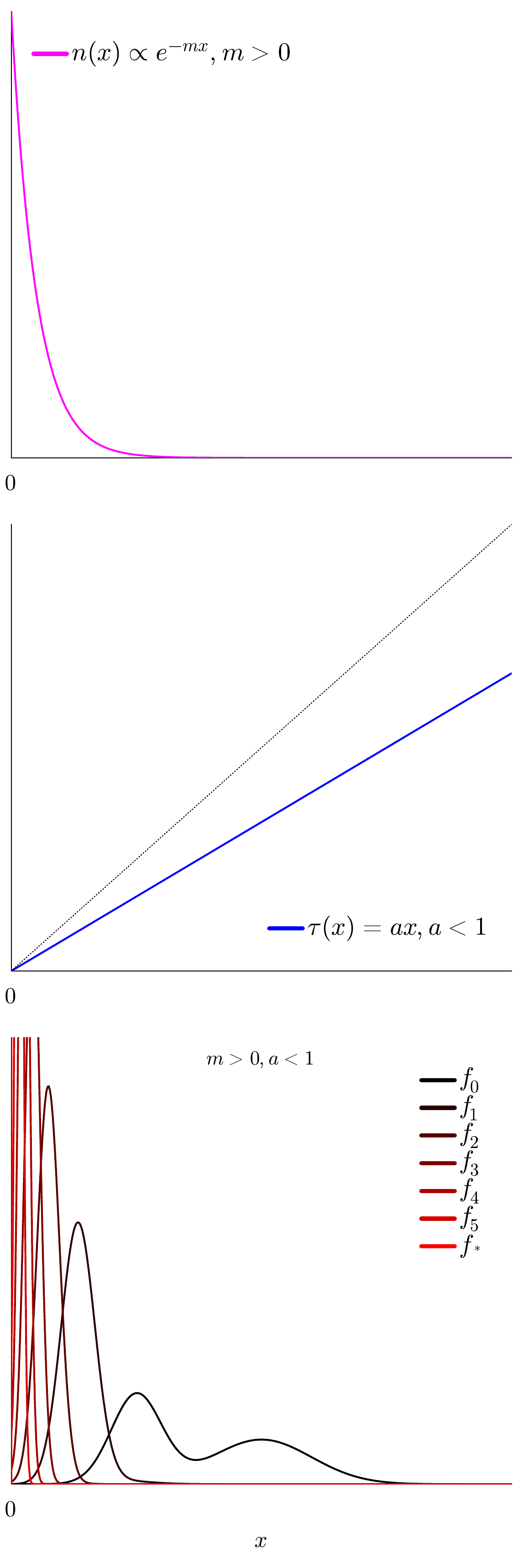}
    \end{subfigure}
\caption{Fertility, transmission, and the evolution of the distribution of capital for Example A.}
\label{fig:exponential}
\end{figure*}

In cases (a) and (b), we set $a > 1$ so that the transmission function has a unique fixed point at $s_1=0$ and it is a source. This implies that each child's capital is greater than their parent's capital. As expected, when fertility is flat, this leads to an explosive process in which capital is pushed towards ever greater levels: as can be seen in case (a), where we set $m=0$, the density of capital escapes to the right. In contrast, consider case (b) in which we set $m>0$, so that fertility is decreasing with capital. Now there are two opposing forces. On the one hand, capital is pushed towards higher levels due to the transmission function but, on the other hand, people at higher levels of capital reproduce less rapidly. This process converges to an atomless steady state distribution of capital whose density $f^*$ is shown in Figure \ref{fig:exponential}. This shows that it would be incorrect, in general, to presume that an intergenerational transmission function that would lead to explosive dynamics under flat fertility must also result in explosive dynamics under differential fertility.

In case (c), we set $a < 1$ so that the transmission function has a unique fixed point at $s_1=0$ and it is a sink. Each child's capital is now less than their parent's capital. Regardless of whether fertility is flat or decreasing, this results in a distribution of capital that eventually accumulates into a single atom at $x=0$.

We now employ Theorem \ref{thm} to both explain the dynamics in cases (b) and (c), and to provide an explicit expression for the atomless steady state distribution of case (b). First, if $m>0$ the fertility function, which is proportional to $e^{-mx}$ is integrable over $\mathcal{X}$. By Proposition \ref{prop:int1}, the primitives are well-behaved. It is also easy to see that the primitives are nice and, since the fixed point of $\tau$ is unique when $a \neq 1$, they are also generic. If $a<1$, as in case (c), the unique fixed point of $\tau$ is a sink so we know by Theorem \ref{thm} (ii) that the distribution of capital converges to a degenerate steady state at $s_1=0$. If $a>1$, as in case (b), the unique fixed point of $\tau$ is a source so we know by Theorem \ref{thm} (i) that the distribution of capital converges to an atomless steady state distribution supported on $\mathcal{X}$ whose density is given by
\[
f^*(x) \propto \lim_{t \to \infty} G_t(x,s_1)= \lim_{t \to \infty} \prod_{i=1}^t  \frac{\exp \left\{ -m \rho^{[i]}(x) \right\}}{\exp \left\{ - m s_1 \right\}} \propto \exp \left\{ -\frac{m}{a-1}  x \right\}
\]
where $\rho^{[i]}(x) = x a^{-i}$. In other words, the steady state distribution is exponential with parameter $m/(a-1)$. Observe that Theorem \ref{thm} establishes convergence to this steady state distribution from essentially any initial distribution.

\paragraph{Example B.} Unlike Example A, this example considers a case in which the transmission function has multiple fixed points. Here, we let $\mathcal{X} = [-5,5]$ and the transmission function is given by $\tau(x) = cx + \text{tanh}(x)$ where $c = 1 - \text{tanh(5)/5} \approx 0.8$, thereby resulting in three fixed points, at $s_1 = -5$, $s_2 = 0$, and $s_3 = 5$, with the middle one being a source and the other two being sinks. This transmission function is illustrated in each panel of Figure \ref{fig:B}.\footnote{Note that $\tau'(x) = c + \text{sech}(x)^2$. The hyperbolic tangent function is defined as $\text{tanh}(x):= (e^x - e^{-x})/(e^x + e^{-x})$ and the hyperbolic secant function is defined as $\text{sech}(x):= 2/(e^x + e^{-x})$. The inverse transmission, $\rho$, does not have an analytical expression, but we computed it numerically to generate the densities shown in Figure \ref{fig:B}.} 

In panels (a) and (b) fertility takes a shape that is proportional to a skewed Gaussian distribution, and fertility is flat in panel (c). In all cases, fertility is a bounded function which is therefore integrable over $\mathcal{X}$ since the latter is compact. By Proposition \ref{prop:int1} our primitives are therefore well-behaved. Observe that in panels (a) and (b), fertility is `hump-shaped'.\footnote{See for example \cite*{jones08} and \cite*{vogl16b} for empirical regularities in the shape of fertility functions.}

Let's start with case (a). Here, $n(s_2)/\tau'(s_2) > n(s_k)/\tau'(s_k) > n(s_k)$ for $k \in \{1,3\}$ and $s_2$ is not a stationary point of $n(\cdot)/\tau'(\cdot)$. The primitives are therefore nice and generic over $\mathcal{X}_k$ for each $k$, and the unique argmax of \eqref{eq:generic} is at $k^*=2$. Theorem \ref{thm} part (i) allows us to conclude that there is an atomless distribution with density $f^*$ supported on $\mathcal{X}$ to which the system converges. There is no explicit analytic expression for this density but our result guarantees the existence of such a limiting density to which the population process converges from essentially any initial distribution. Observe that Figure \ref{fig:B} shows that the density after 10 and after 50 generations is essentially unchanged. The dynamics here work as follows: the transmission drives capital away from $s_2=0$ towards $s_1=-5$ and $s_3=5$, but this effect is counteracted by fertility being greater at $s_2=0$ than at the endpoints of $\mathcal{X}$, and this is what results in a stable atomless long-run distribution of capital.

In case (b), $n(s_1) > n(s_2) > n(s_2)/\tau'(s_2) > n(s_3)/\tau'(s_3) > n(s_3) $, and none of the fixed points of $\tau$ is a stationary point of $n(\cdot)$ or $n(\cdot)/\tau'(\cdot)$. The primitives are therefore nice and generic over $\mathcal{X}_k$ for each $k$, and the unique argmax of \eqref{eq:generic} is at $k^*=1$. Theorem \ref{thm} part (ii) allows us to conclude that the limiting distribution is degenerate at $s_1$. What drives the dynamics here is this: the transmission function drives capital away from $s_2=0$ towards $s_1=-5$ and $s_3=5$. The distribution of capital would therefore accumulate at $s_1=-5$ and $s_3=5$ but fertility is lower at $s_3$ than at $s_1$, which is why the distribution of capital eventually accumulates into an atom only at $s_1$.

Case (c) illustrates Remark \ref{rem:flat}. With flat fertility, the distribution of capital accumulates into atoms at the sinks $s_1=-5$ and $s_3 = 5$. Since the population growth factor at each of these fixed points is the same, we end up with a long-run distribution with two point masses, one at each of these fixed points. This recalls the dynamics of some poverty trap models with flat fertility.

\begin{figure*}[t!]
    \centering
    \caption*{Example B}
    \begin{subfigure}[t]{0.325\linewidth}
        \centering
        \caption*{(a)}
        \includegraphics[width=1\linewidth]{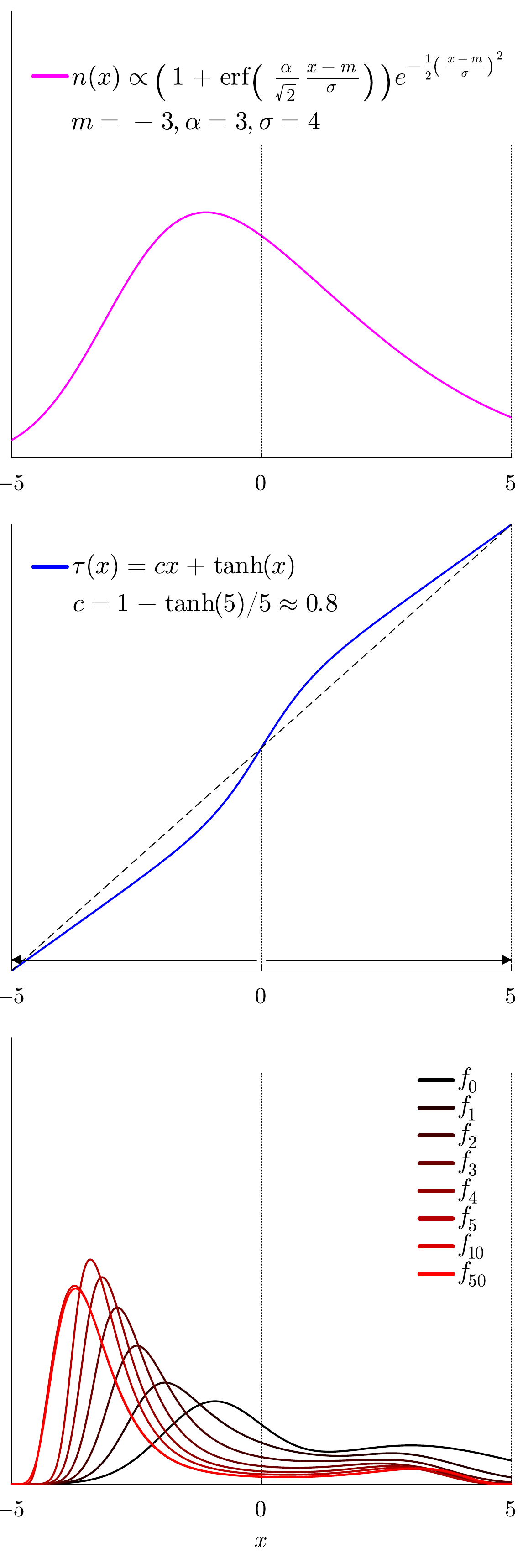}
    \end{subfigure}%
    \hspace{0.03cm}
    \begin{subfigure}[t]{0.325\linewidth}
        \centering
        \caption*{(b)}
        \includegraphics[width=1\linewidth]{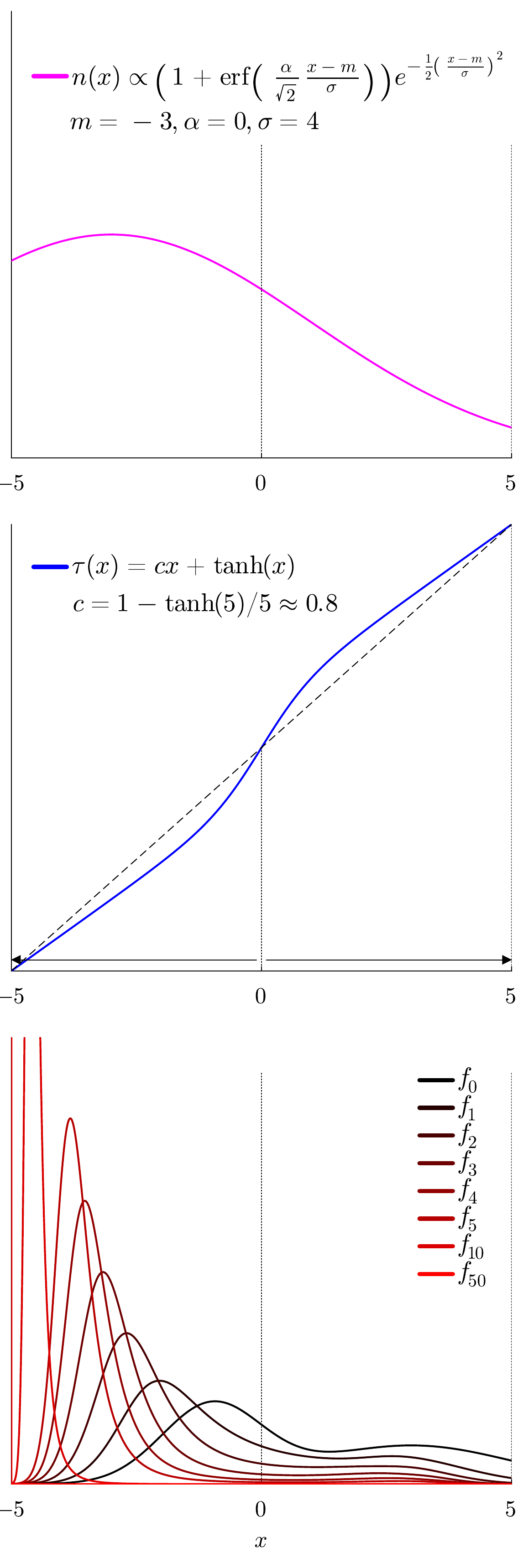}
    \end{subfigure}
    \begin{subfigure}[t]{0.325\linewidth}
        \centering
        \caption*{(c)}
        \includegraphics[width=1\linewidth]{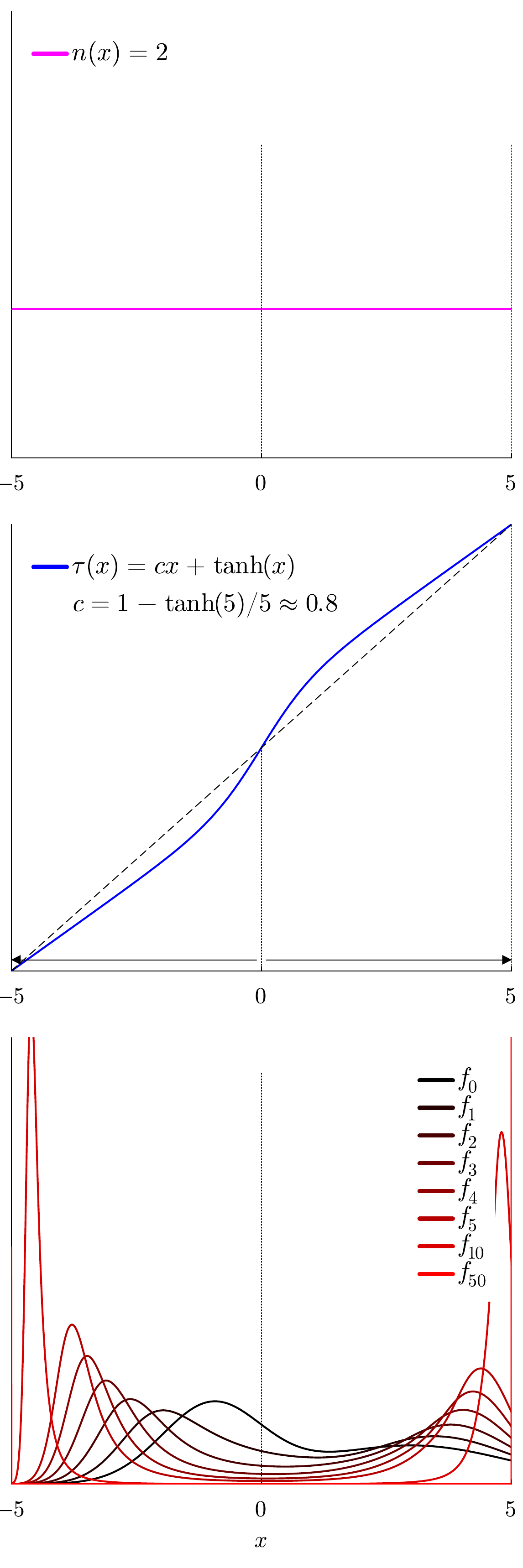}
    \end{subfigure}
\caption{Fertility, transmission, and the evolution of the distribution of capital for Example B.}
\label{fig:B}
\end{figure*}

\begin{figure*}[t!]
    \centering
    \begin{subfigure}[t]{0.32\linewidth}
        \centering
        \caption*{Example C}
        \includegraphics[width=1\linewidth]{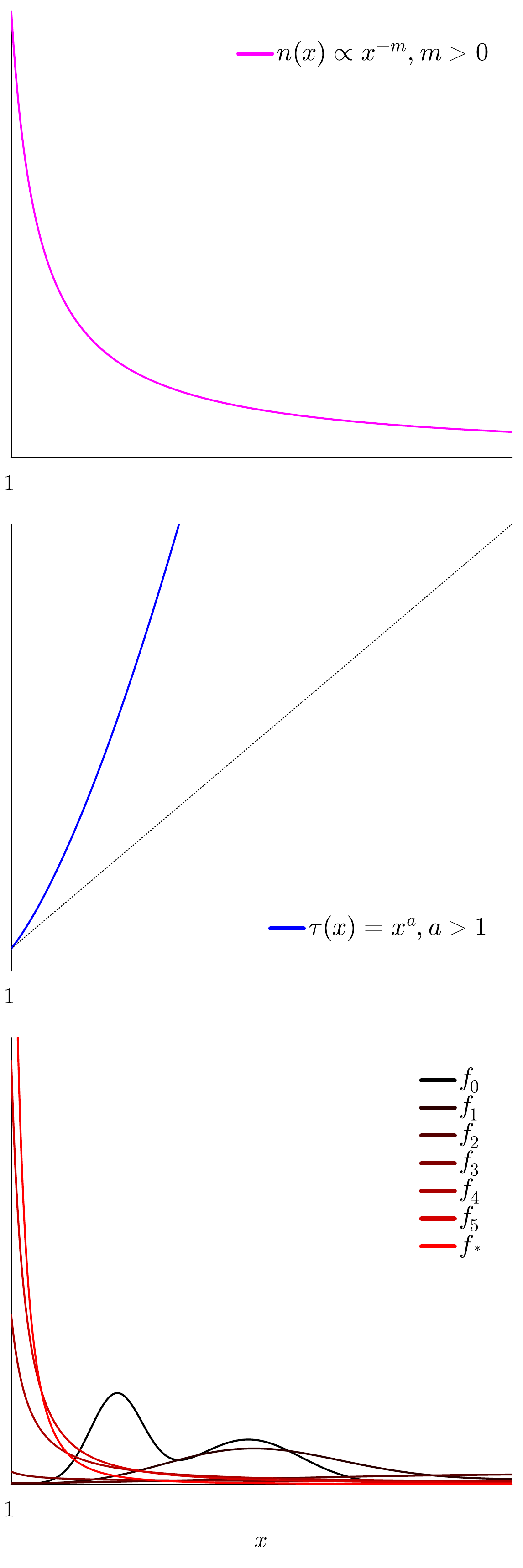}
    \end{subfigure}%
    \hspace{0.03cm}
    \begin{subfigure}[t]{0.32\linewidth}
        \centering
        \caption*{Example D}
        \includegraphics[width=1\linewidth]{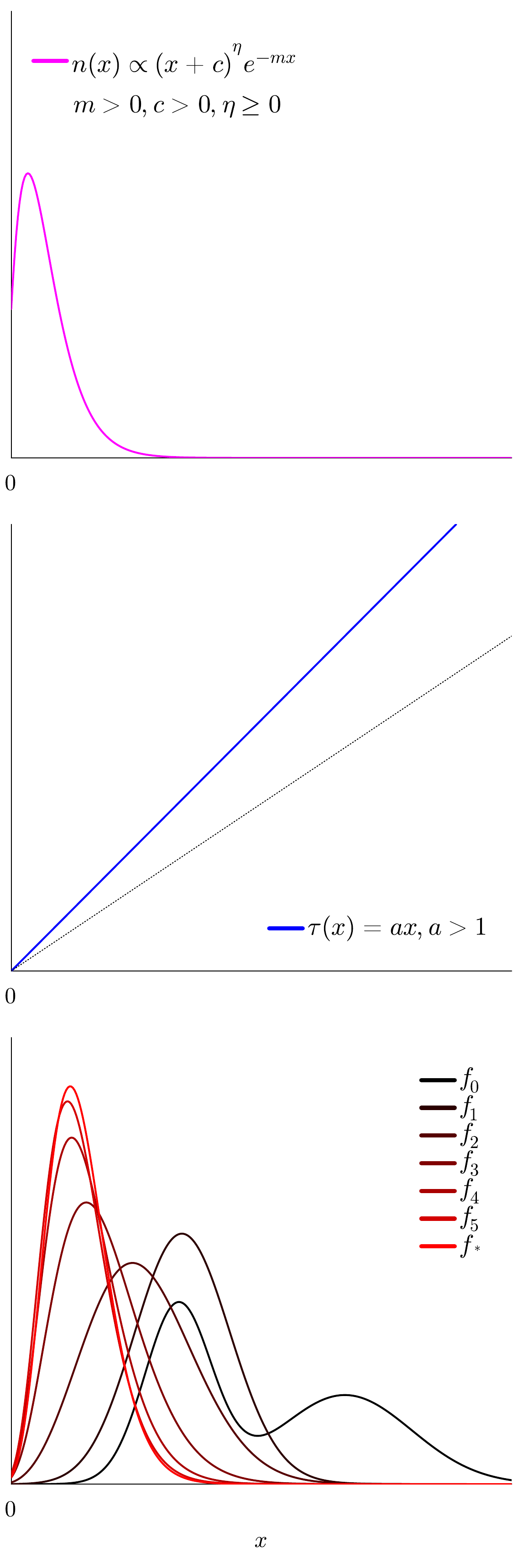}
    \end{subfigure}
    \begin{subfigure}[t]{0.32\linewidth}
        \centering
        \caption*{Example E}
        \includegraphics[width=1\linewidth]{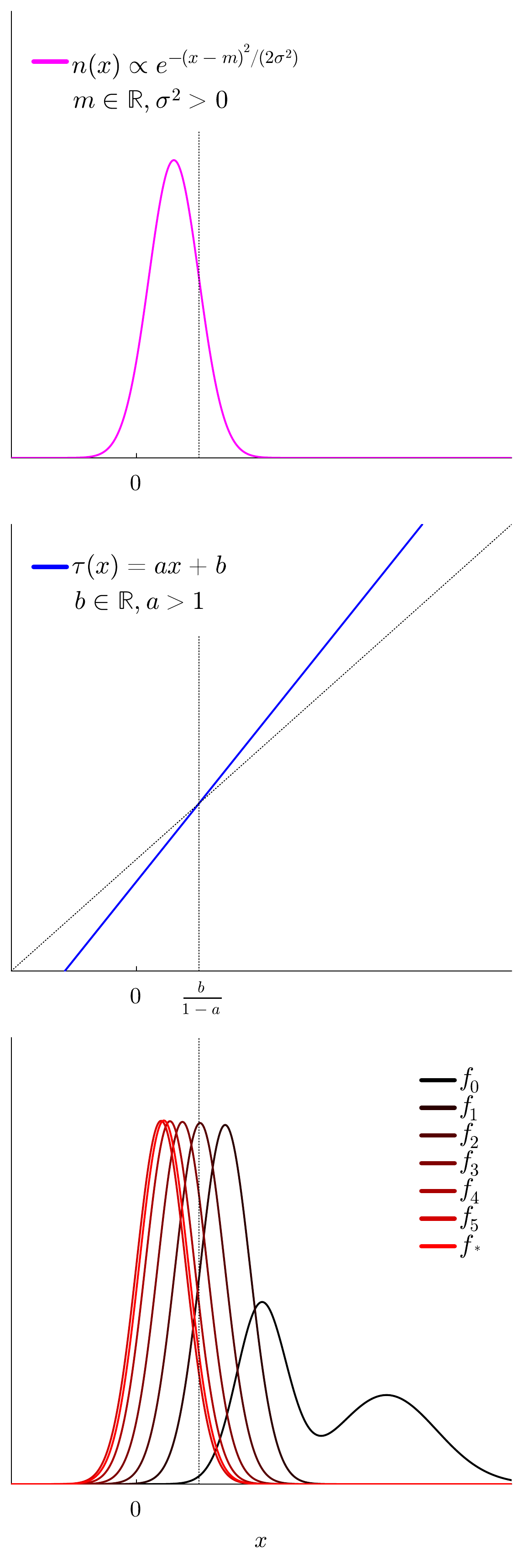}
    \end{subfigure}
\caption{Fertility, transmission, and the evolution of the distribution of capital for Examples C (on the left), D (in the middle), and E (on the right).}
\label{fig:others}
\end{figure*}

\paragraph{Example C.} We now consider a situation in which the fertility function is not necessarily integrable. Suppose that $\mathcal{X} = [1,\infty)$, that fertility $n(x)$ is proportional to $x^{-m}$ with $m>0$, and that the transmission function is given by $\tau(x) = x^a$ with $a>1$. These are illustrated in the left side panel Figure \ref{fig:others}. Let $s_1=1$ denote the unique fixed point of the transmission function and observe that it is a source. Even though the fertility function may not be integrable over $\mathcal{X}$ for each $m>0$, we know by Proposition \ref{prop:int2} that the primitives are well-behaved. It is also trivial to check that they are nice and generic. By Theorem \ref{thm} part (i), we conclude that the distribution of capital converges to an atomless steady state distribution supported on $\mathcal{X}$ whose density is given by
\[
f^*(x) \propto \lim_{t\to\infty} \prod_{i=1}^t  \left(\frac{\rho^{[i]}(x) }{ s_1 } \right)^{-(m+a-1)} \propto x^{-\left(\frac{m}{a-1} + 1\right)} 
\]
where $\rho^{[i]}(x) = x^{a^{-i}}$. Using Pareto[$c,\nu$] to denote a Pareto distribution with support $[c,\infty)$ and cumulative distribution function $1-\left( \frac{c}{x}\right)^\nu$ where $\nu>0$, we conclude that the steady state distribution is Pareto[$1,1 + m/(a-1)$].

\begin{remark}
    Given the importance of the Pareto distribution in the literature on economic inequality, we make several observations about Example C and the Pareto steady state that our deterministic system converges to in that example. Interest in the Pareto distribution arises from the fact that Pareto \emph{tails} are an empirical regularity in the distribution of wealth and of income \citep*{benhabib16,piketty15}. \cite*{champernowne1953model,wold1957model,benhabib11,gabaix15,beare2022determination} and others theoretically derive Pareto tails from an underlying (usually multiplicative) random process. In an entirely different stochastic framework with a pure-birth process and a constant branching rate, \cite{yule1925ii} also derives Pareto tails. Example C shows that we can also generate Pareto tails,\footnote{In fact, Example C generates a distribution of capital that is fully Pareto rather than just having a power-law tail. As observed in Remark \ref{rem:empirical_validity}, whether this is empirically valid depends on the interpretation of the capital variable. That said, one could modify the shapes of $n$ and $\tau$ at low values of capital to obtain a non-Pareto shape at the bottom of the distribution of capital while retaining a power-law shape at the top.} but what we have here is distinct since the Pareto tail arises from a purely deterministic process. Moreover, differential fertility is a key driver of long-run inequality in our case (for example, the model of \citealp{wold1957model} features flat fertility, and it is Poisson deaths that drive the Pareto tails in their case). Finally, unlike the aforementioned papers for which a key aim is to explain a specific empirical regularity, our model can generate distributions belonging to many different (possibly non-Pareto) parametric families.
\end{remark}

\paragraph{Example D.} Suppose that $\mathcal{X} = [0,\infty)$, that fertility $n(x)$ is proportional to $(x+c)^\eta e^{-mx}$ with $m>0$, $c>0$, and $\eta \geq 0$, and that the transmission function is given by $\tau(x) = ax$ with $a>1$. These are illustrated in the middle panel of Figure \ref{fig:others}. Let $s_1=0$ denote the unique fixed point of the transmission function and observe that it is a source. Thanks to its exponential part, the fertility function is integrable on $\mathcal{X}$. The primitives are therefore well-behaved thanks to Proposition \ref{prop:int1}, and it is easy to check that they are also nice and generic.\footnote{Note that $n(\cdot)/\tau'(\cdot)$ is non-monotonic and  does not reach its maximum at $s_1$, thereby illustrating the fact that only endpoint maximality matters for niceness.} By Theorem \ref{thm} part (i) we conclude that the distribution of capital converges to an atomless steady state distribution supported on $\mathcal{X}$ whose density is given by
\[
f^*(x) \propto  \lim_{t\to\infty} \prod_{i=1}^t  \exp \left\{ - m \rho^{[i]}(x) \right\} \left(\frac{\rho^{[i]}(x) + c }{c } \right)^{\eta} \propto  \exp \left\{ - \frac{m}{a-1} x \right\}   \left( \frac{(-x/c ; 1/a)_\infty}{1+x/c}\right)^{\eta} 
\]
where $\rho^{[i]}(x) = x a^{-i}$ and $(x;q)_t := \prod_{i=0}^{t-1} (1-xq^i)$ is the $q$-Pochammer symbol.\footnote{To see why $(-x/c ; 1/a)_\infty$ converges, notice that $\prod_{i=0}^\infty \left(1 + \frac{x}{c} a^{-i} \right)$ converges if and only if $\frac{x}{c}\sum_{i=0} ^\infty a^{-i}$ does, but this latter series is equal to $\frac{x}{c (a-1)}$ whenever $a >1$.} We are not aware of a standard nomenclature for this particular distribution.

\paragraph{Example E.} Suppose that $\mathcal{X} = (-\infty,\infty)$, that fertility $n(x)$ is proportional to $e^{-(x-m)^2/(2\sigma^2)}$ with $m \in \mathbb{R}$ and $\sigma^2>0$, and that the transmission function is given by $\tau(x) = ax+b$ with $b\in \mathbb{R}$ and $a>1$. Let $s_1=b/(1-a)$ denote the unique fixed point of the transmission function and observe that it is a source. Like previous examples, it is easy to check that the primitives are well-behaved, nice and generic. By Theorem \ref{thm} part (i) we conclude that the distribution of capital converges to an atomless steady state distribution supported on $\mathcal{X}$ whose density is given by
\[
f^*(x) \propto \lim_{t\to\infty}  \prod_{i=1}^t \frac{\exp \left\{ - \frac{(\rho^{[i]}(x) - m)^2}{2 \sigma^2} \right\}}{\exp \left\{ - \frac{(s_1 - m)^2}{2 \sigma^2} \right\}} \propto \exp \left\{ - \frac{\left( x - m^*\right)^2}{2 \sigma^2(a^2 - 1)} \right\} 
\]
where $\rho^{[i]}(x) = xa^{-i} +s (1 - a^{-i})$ and $m^* = (1+a)m - as$. In other words, the steady state distribution is Gaussian with mean $m^*$ and variance $\sigma^2 (a^2-1)$.\footnote{With a straightforward transformation of variables, Example E can be modified to obtain a long-run distribution of capital that is lognormal rather than Gaussian.}

\subsection{Comparative statics}\label{sec:comp_stats}
We now derive general comparative statics for the distribution of capital.

A function $\hat{\mu}: \mathcal{X} \rightarrow (0,\infty)$ \defn{MLR-dominates} $\mu: \mathcal{X} \rightarrow (0,\infty)$ (i.e. $\hat{\mu}$ dominates $\mu$ in the monotone likelihood ratio order), denoted as $\hat{\mu} \succsim \mu$, if the ratio $\hat{\mu}(x)/\mu(x)$ is non-decreasing for every $x \in \mathcal X$. And $\hat{\mu}$ \defn{strictly MLR-dominates} $\mu$, denoted $\hat{\mu} \succ \mu$, if the ratio is, additionally, increasing for some $x \in \mathcal X$.

Consider two population processes $( (F_{t})_{t \geq 0},n,\tau )$ and $( (\hat{F}_{t})_{t \geq 0},\hat{n},\tau )$ with identical transmission functions but possibly different fertility functions $n$ and $\hat{n}$, and initial distributions $F_0$ and $\hat{F}_0$.

\begin{proposition}\label{prop:comp1}
Suppose $\hat{F}_{0}$ and $F_{0}$ admit densities $\hat{f}_{0}$ and $f_{0}$ respectively. Suppose that $\hat{n} = n$ but $\hat{f}_{0} \succ f_{0}$. Then for all $t \geq 0$, $\hat{F}_{t}$ first-order stochastically dominates $F_{t}$.

\noindent If, additionally, all primitives are well-behaved, nice in $\mathcal{X}_k$ for each $k\in\{0,\dots,K\}$, and generic with $s_{k^*}$ being a source, then there exist atomless steady state distributions $\hat{F}^*$ and $F^*$ to which the population processes respectively converge, and $\hat{F}^*=F^*$.
\end{proposition}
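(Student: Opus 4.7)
The plan is to prove the two parts separately; each follows fairly directly from earlier results, with part (i) requiring a propagation-of-dominance argument and part (ii) being almost immediate from Theorem \ref{thm} (i).

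For the first claim, I will show by induction on $t$ that $\hat{f}_{t} \succsim f_{t}$ for every $t \geq 0$, and then invoke the standard fact that MLR dominance implies first-order stochastic dominance of the associated CDFs. The base case is the hypothesis $\hat{f}_0 \succ f_0$. For the inductive step, applying the evolution equation \eqref{eq:evolution} to both processes (which is legitimate since $\hat{n} = n$ and the transmission function is shared), the ratio of the two densities at time $t+1$ is
\[
\frac{\hat{f}_{t+1}(x)}{f_{t+1}(x)} \;=\; \frac{\mathbb{E}_t[n]}{\hat{\mathbb{E}}_t[n]} \cdot \frac{\hat{f}_t(\rho(x))}{f_t(\rho(x))} .
\]
The first factor is a positive constant in $x$, and the second factor is non-decreasing in $x$ because $\hat{f}_t \succsim f_t$ by the inductive hypothesis and because $\rho$ is strictly increasing by Assumption \ref{as2}(i). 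Hence $\hat{f}_{t+1} \succsim f_{t+1}$, completing the induction and yielding the FOSD conclusion for every $t$.

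For the second claim, the key observation is that the formula for the steady state density delivered by Theorem \ref{thm} (i),
\[
f^*(x) \;\propto\; \lim_{t \to \infty} G_t(x, s_{k^*}) ,
\]
depends only on the primitives $n$ and $\tau$ through $G_t$ and not on the initial distribution. Since the two population processes share the same fertility and transmission functions, Theorem \ref{thm} (i) produces the same unnormalized limiting density in both cases. Because in each case the normalizing constant is pinned down by the requirement that the limit be a probability density, we conclude $\hat{f}^* = f^*$, and hence $\hat{F}^* = F^*$.

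The only substantive step is the inductive argument of the first part; the crucial point there is that the ratio $\mathbb{E}_t[n]/\hat{\mathbb{E}}_t[n]$ of the normalizers does not depend on $x$, so it preserves the MLR-monotonicity inherited from time $t$, and that $\rho$ being increasing ensures composition with $\rho$ also preserves MLR-monotonicity. The second part is then essentially a cosmetic observation about the formula in Theorem \ref{thm} (i).
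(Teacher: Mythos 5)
Your proposal is correct and takes essentially the same route as the paper: the inductive ratio argument is exactly the paper's Lemma \ref{lem:comp_stat} specialized to $\hat{n}=n$, and the second part is the same observation that the limit in Theorem \ref{thm}~(i) depends only on the shared primitives. The one small omission is that you only propagate the weak order $\succsim$ through the induction, whereas strict first-order stochastic dominance requires tracking that strictness is preserved --- which it is, since the time-$(t+1)$ ratio is a positive constant times the time-$t$ ratio composed with the bijection $\rho$.
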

Proposition \ref{prop:comp1} shows that if the two processes have the same fertility function but the initial density of one process MLR-dominates the initial density of the other, then the distributions of capital are ordered by first-order stochastic dominance in every generation.\footnote{Recall that a distribution $\hat{F}$ first-order stochastically dominates $F$ if $\hat{F}(x) \leq F(x)$ for each $x \in \mathbb{R}$ and the inequality is strict for some $x$.} However, because the shape of the initial distribution does not affect the long-run outcomes, the two processes converge to the same steady state in the limit as $t \rightarrow \infty$. Such a relationship concerning the initial value dependence of the \emph{transitional} dynamics features in several studies of dynamic models with differential fertility (e.g. see \citealp*{delacroix03}). 

\begin{proposition}\label{prop:comp2}
Suppose $\hat{F}_{0}$ and $F_{0}$ admit densities $\hat{f}_{0}$ and $f_{0}$ respectively. Suppose that $\hat{f}_{0} = f_{0}$ but  $\hat{n} \succ n$. Then for all $t \geq 0$, $\hat{F}_{t}$ first-order stochastically dominates $F_{t}$

\noindent If, additionally, all primitives are well-behaved, nice in $\mathcal{X}_k$ for each $k\in\{0,\dots,K\}$, and generic with $s_{k^*}$ being a source, then there exist atomless steady state distributions $\hat{F}^*$ and $F^*$ to which the population processes respectively converge, and $\hat{F}^*$ first-order stochastically dominates $F^*$.
\end{proposition}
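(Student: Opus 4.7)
The plan is to first establish monotone likelihood ratio (MLR) dominance of the densities at every finite generation by induction, and then transfer this to the steady state using the explicit density formula from Theorem \ref{thm}. Since MLR dominance implies first-order stochastic dominance (FOSD), this yields both claims of the proposition.

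For the first claim I proceed by induction on $t$. The base case $t=0$ is immediate because $\hat{f}_{0} = f_{0}$. For the inductive step, Proposition \ref{prop:density_dynamic} gives
$$\frac{\hat{f}_{t+1}(x)}{f_{t+1}(x)} \;=\; \frac{\mathbb{E}_{t}[n]}{\hat{\mathbb{E}}_{t}[\hat{n}]}\cdot \frac{\hat{n}(\rho(x))}{n(\rho(x))}\cdot \frac{\hat{f}_{t}(\rho(x))}{f_{t}(\rho(x))}.$$
The first factor is a positive constant; the second is non-decreasing in $x$ because $\hat{n}/n$ is non-decreasing (since $\hat{n} \succsim n$) and $\rho$ is strictly increasing by Assumption \ref{as2}(i), and is strictly increasing at some point because $\hat{n} \succ n$; the third is non-decreasing by the inductive hypothesis. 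The product inherits these properties, yielding $\hat{f}_{t+1} \succ f_{t+1}$, which in turn gives that $\hat{F}_{t+1}$ FOSD $F_{t+1}$.

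For the second claim, the additional hypotheses let me apply Theorem \ref{thm}(i) to both processes. Writing out both limiting densities from the theorem and forming their ratio cancels the $\tau'$ factors, giving
$$\frac{\hat{f}^{*}(x)}{f^{*}(x)} \;\propto\; \lim_{t \to \infty}\prod_{i=1}^{t} \frac{\hat{n}(\rho^{[i]}(x))}{n(\rho^{[i]}(x))}.$$
Each factor is non-decreasing in $x$ by the argument from the transitional step, so each partial product, and hence its pointwise limit, is non-decreasing. Strictness transfers because $\hat{n}/n$ is strictly increasing somewhere in $\mathcal{X}$, and as $x$ varies over the support the iterates $\rho^{[i]}(x)$ sweep through a neighborhood of such a point for some $i$. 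Thus $\hat{f}^{*} \succ f^{*}$ in MLR, which implies $\hat{F}^{*}$ FOSD $F^{*}$.

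The main technical obstacle lies in the steady-state step. The statement uses a single symbol $k^{*}$, but the argmax in \eqref{eq:generic} depends on $n$, so one must verify that $k^{*}$ coincides for both processes (an MLR shift in $n$ can only weakly push $k^{*}$ upward; if it shifts strictly, the steady states are supported on different intervals at higher capital levels and FOSD follows \emph{a fortiori}). Assuming the $k^{*}$ coincide, the remaining delicacy is ensuring that the strict MLR property survives the passage to the infinite product, which requires that the point of strict increase of $\hat{n}/n$ lies in the $\rho^{[i]}$-orbit of some point in the support; this follows because $\rho$ is a bijection on $\mathcal{X}$ that maps the relevant interval into itself.
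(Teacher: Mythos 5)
Your proof is correct and follows essentially the same route as the paper: the transitional claim is exactly the paper's Lemma \ref{lem:comp_stat} (induction on the MLR order via the density recursion \eqref{eq:evolution}, then MLR $\Rightarrow$ FOSD), and the steady-state claim is obtained, as in the paper, by passing the MLR order to the limiting densities of Theorem \ref{thm}(i). If anything, your explicit infinite-product ratio and your observation that the argmax $k^*$ in \eqref{eq:generic} depends on the fertility function (so the two processes' supports must be compared) are more careful than the paper's one-line treatment of the steady-state step, which simply asserts that distinct fertility functions yield distinct, and hence strictly ordered, steady states.
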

Proposition \ref{prop:comp2} shows that if the two processes have the same initial density but the fertility function of one process MLR-dominates the other's, then the distributions of capital are ordered by first-order stochastic dominance in every generation. Moreover, since the limiting distributions depend on the primitives (and in particular on the different fertility functions), the two processes converge to different steady state distributions that are also ordered by first-order stochastic dominance.

Proposition \ref{prop:comp2} is a novel contribution to the literature on the intergenerational transmission of inequality. The closest existing result is due to \cite*{chu90}. In a discretized version of our model they show that, under certain conditions, a decrease in the fertility of the lowest income group leads to a first order stochastic dominance shift in the steady state distribution if fertility was downward sloping to begin with and the decrease is sufficiently small. In our case, fertility is not restricted to be downward sloping and the fertility functions $\hat{n}$ and $n$ can differ at any level of capital. Our only requirement is that the functions are MLR-ordered, thereby allowing for changes \emph{anywhere} along the fertility curve.

There is a neat economic interpretation of Proposition \ref{prop:comp2} in terms of elasticity. Suppose that $\mathcal{X} \subseteq (0,\infty)$. Then $\hat{n} \succsim n$ if and only if for each $x \in \mathcal X$,\footnote{The restriction $\mathcal{X} \subseteq (0,\infty)$ is just for convenience. We would simply need to be more careful with the direction of the inequality if we allowed capital to take negative values.}
\[
x\frac{\hat{n}'(x)}{\hat{n}(x)} \geq x\frac{n'(x)}{n(x)}.
\]
In other words first-order stochastic dominance shifts in Proposition \ref{prop:comp2} can be interpreted as being due to changes in the \emph{capital elasticity of fertility}.

\section{Endogenous fertility and transmission} \label{sec:endo}

Our entire analysis has so far considered exogenously given and time-invariant fertility and transmission functions. In this section, we show how our results can be carried over to a model with endogenous fertility and transmission functions. To do so, we consider a simple growth model based primarily on \cite*{delacroix03,Croix2004Public} that features endogenous differential fertility and transmission functions in which parents face a Becker-style quantity-quality trade-off.

\subsection{Model}
Our economy consists of overlapping generations of people who live for two periods: childhood and adulthood. During childhood people get the level of education chosen by their parents and acquire a level of (human) capital determined by the amount of education they receive and by the capital level of their parent. Children make no decisions. During adulthood people earn a labour income that depends on their level of human capital, and make optimal decisions regarding the number of children to have, investments in education for the next generation, and own consumption.

Capital takes values in $(0,\infty)$. In each generation $t$, we assume that $z$ units of capital translate into $z$ `effective' hours of labour and the wage per effective hour of labour is given by $w_t$. Each parent with capital $z$ in generation $t$ inelastically supplies $z$ effective hours of labour and therefore has a labour income of $w_t z$. This simple relationship between (human) capital and labour income appears in many models of fertility choice \citep*{delacroix03,jones08,vogl16b}.

Parents care about their own consumption of a num\'{e}raire good and about the aggregate future income of their offspring. Preferences are quasi-linear and given by
\begin{equation}\label{eq:utility}
c + \ln(  n w_{t+1} z' ) ,
\end{equation}
where $n$ is the number of children, $c$ is consumption, $z'$ is the capital level of each child.\footnote{We could have a more general specification with a utility function of the form $ c + \pi \ln(  n w_{t+1} z' )$ where $\pi >0$. We avoid this generality to economize on notation.} \cite*{kremer02} offer a micro-foundation for their fertility curve via quasi-linear parental preferences.

Parents spend their income on consumption, on raising their children, and on the education of their children. Schooling is provided by teachers. We assume that raising a child costs a fraction $\phi \in (0,1)$ of a parent's gross labour income. The budget constraint of a parent with capital $z$ in generation $t$ is therefore given by 
\begin{equation}\label{eq:private_budget}
  c + e n w_t \bar{z}_t + w_t z \phi n \leq w_t z ,
\end{equation} 
where $e$ is schooling time per child and $\bar{z}_t$ denotes the average capital in generation $t$. We assume that the average capital of teachers is the same as the average capital in the population. The cost of education per child is therefore $e w_t \bar{z}_t$. The core components of the budget constraint \eqref{eq:private_budget} are identical to those in \cite*{delacroix03}.

The capital of each child depends on their schooling time, the capital of their parent, and on the average capital in the population, according to\footnote{This functional form specification follows \cite*{Glomm1992Public}, and is not substantively different from the one appearing in \cite*{delacroix03}.} 
\begin{equation}\label{eq:private_transmission}
z' = \theta_t e^\alpha z^\beta \bar{z}_t^\gamma ,
\end{equation}
where $\alpha,\beta \in (0,1)$, $\gamma \in (0,1-\beta]$, and the technology parameter increases deterministically according to
$$\theta_t = \theta (1 + \varrho )^{(1 - \beta - \gamma)t} ,$$
where $\theta>0$. Similarly to \cite*{rangazas2000schooling} and \cite*{delacroix03}, we will obtain endogenous growth for $\beta + \gamma = 1$ and exogenous growth for $\beta + \gamma < 1$.

\subsection{Discussion}
The functional forms that were chosen in \cite*{delacroix03} imply that, in the absence of exogenous stochastic shocks, their model cannot generate atomless steady states along their balanced growth path.\footnote{It is possible to generate long-run inequality in their model, but only through the introduction of exogenous stochastic shocks, with inequality being sustained purely by these shocks. \cite*{bosi11} and \cite*{cordoba16} discuss the difficulties in generating non-degenerate steady state distributions and intergenerational persistence in models with endogenous fertility \`{a} la \cite*{barro89}.} We therefore cannot use the \emph{exact} model of \cite*{delacroix03} to showcase an application of Theorem \ref{thm} that would result in convergence to an atomless steady state. That said, our only substantive departure from \cite*{delacroix03} is in the specification of the parental utility function: they have a Cobb-Douglas specification whereas ours is quasi-linear. In other respects, the core components of our model and of theirs are identical and any remaining differences between the models are minor. For example, the model of \cite*{delacroix03} features old-age consumption and an equation of motion for physical capital. These could also be included in our model without qualitatively affecting the results, but since our aim is only to demonstrate the applicability of the results of Section \ref{sec:results} to a model with endogenous fertility and transmission, we opted for the simplest and most transparent formulation of the model.

\subsection{Optimization}\label{sec:optim}
We normalize the wage per effective hour of labour to unity. The full optimization problem for a parent with capital $z$ in generation $t$ is therefore:
\begin{equation*}
\max_{n\geq0,c\geq0,e\geq0} \;  c +  \ln(n z')  \text{ subject to } c+ n e \bar{z}_t + z\phi n  \leq z \text{ and } z' = \theta_t e^\alpha z^\beta \bar{z}_t^\gamma  .
\end{equation*}
As shown in the appendix, the optimum is given by:
\begin{align*}
n(z) &=
\begin{cases}
 \frac{1-\alpha}{\phi z} & \text{ if } z \geq 1 \\
 \frac{1-\alpha}{\phi } & \text{ otherwise}
\end{cases} ,& 
c(z) &=
\begin{cases}
  z  - 1 & \text{ if } z \geq 1 \\
  0 & \text{ otherwise}
\end{cases} , &
e_t(z) &=\frac{\alpha}{1-\alpha} \phi \frac{z}{\bar{z}_t} .
\end{align*}
This solution is illustrated in Figure \ref{fig:priv}. We have a decreasing relationship between fertility and capital and an increasing relationship between consumption and capital and between schooling time per child and capital.

\begin{figure}
    \centering
    \includegraphics[width=0.5\linewidth]{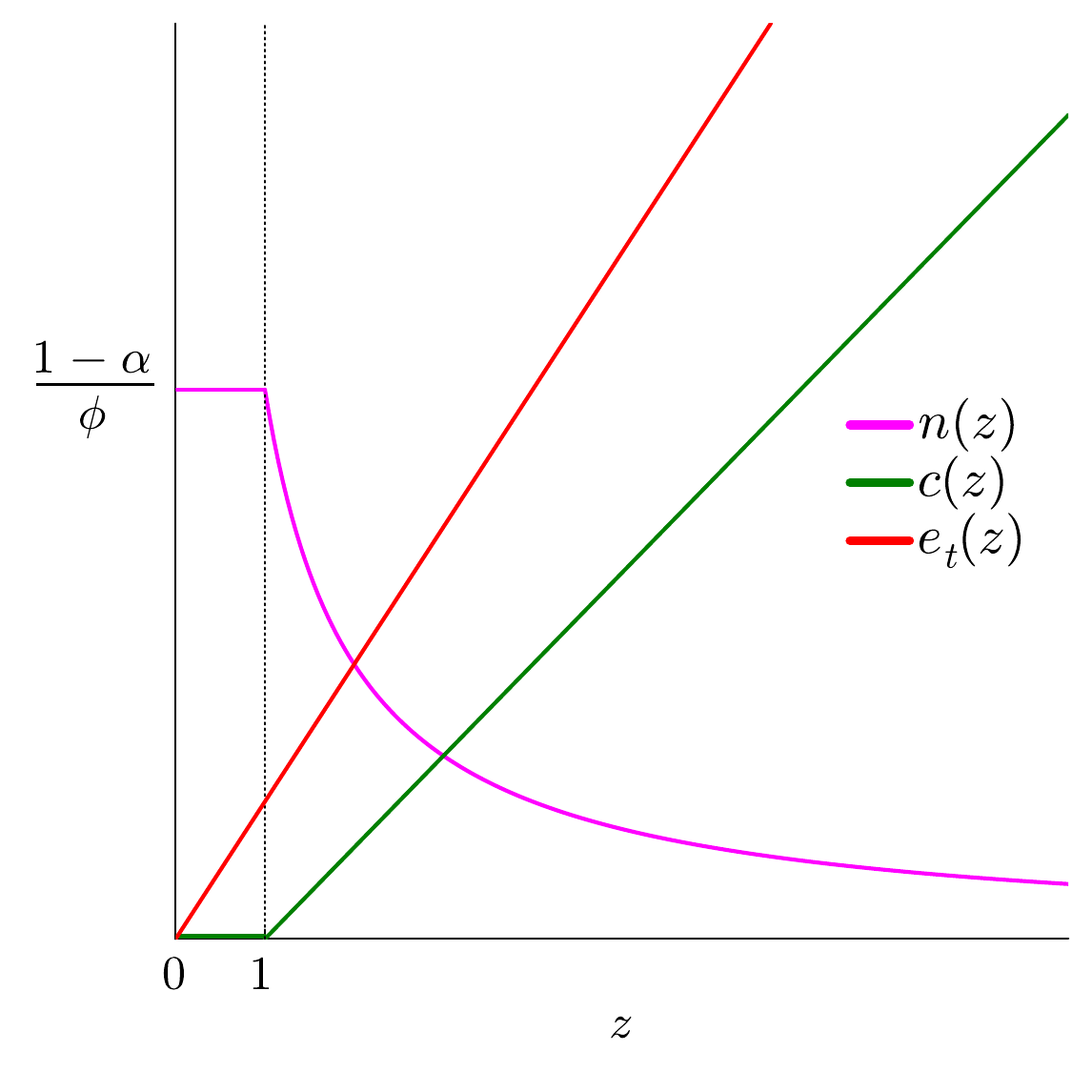}
    \caption{Solution of the parental optimization problem.}
    \label{fig:priv}
\end{figure}

Substituting the optimal solution $e_t$ into \eqref{eq:private_transmission} gives us the endogenous transmission function
\begin{equation}\label{eq:priv_trans}
\tau_t(z) = \zeta_t z^{\alpha + \beta} \bar{z}_t^{\gamma - \alpha} ,
\end{equation}
where $\zeta_t := \theta_t \left( \frac{\alpha}{1-\alpha} \phi  \right)^\alpha$. This endogenous transmission function is simpler than, yet qualitatively similar to, the one derived in \cite*{Croix2004Public}.

Define $\underline{z}_t$ to be the infimum of the support of the distribution of capital in generation $t$. The transmission function \eqref{eq:priv_trans} is a bijection from $\mathcal{Z}_t:= [\underline{z}_t,\infty)$ to $\mathcal{Z}_{t+1}:= [\underline{z}_{t+1},\infty)$ where $\underline{z}_{t+1}:= \tau_t(\underline{z}_t)$. The distribution of capital thus has a time-varying support.

Throughout Section \ref{sec:endo} we assume that, in each generation $t$, $\underline{z}_t \geq 1$. This restriction is easy to satisfy,\footnote{For example, set $\underline{z}_0 \geq 1$ and impose $\theta_t \geq \left(\frac{1-\alpha}{\phi \alpha}\right)^\alpha$ for all $t \geq 0$.} and ensures that all solutions are interior. In particular, under this assumption, for every $t$ the optimal fertility function evaluated over $\mathcal{Z}_t$ does not have a kink.

\subsection{Distributional outcomes}\label{sec:distro}
To make the results of Section \ref{sec:results} applicable to this system, we first recast it into one with a time-invariant domain. Define \defn{relative capital} as
\[
x_t := \frac{z}{\underline{z}_t},
\]
which is in $\mathcal{X}:=[1,\infty)$ for each $z \in \mathcal{Z}_t$.

The distribution of relative capital then evolves, for any $w \in \mathcal{X}$, according to\footnote{If $F_t$ is the distribution of capital at $t$, the distribution of relative capital at $t$ satisfies $\hat{F}_t(z_t) = F_t(x_t \underline{z}_t)$.}
\begin{equation}\label{main_nonlin_rel}
\hat{F}_{t+1}(w) =  \frac{1}{\mathbb{E}_t[\hat{n}_t]}\int_{\mathcal{X}} \mathbbm{1}\left[\hat{\tau}_t(x_t) \leq w \right]  \hat{n}_t(x_t) d\hat{F}_{t}(x_t) ,
\end{equation}
where the fertility and transmission functions over relative capital are obtained by the transformations
\[
\hat{n}_t(x_t) := n(x_t \underline{z}_t) \text{ and } \hat{\tau}_t(x_t) := \frac{\tau_t(x_t \underline{z}_t)}{\underline{z}_{t+1}}.
\]
Evaluating these transformations yields:
\[
\hat{n}_t(x_t) = \frac{1-\alpha}{\phi \underline{z}_t} \frac{1}{x_t} \text{ and } \hat{\tau}_t(x_t) = x_t^{\alpha + \beta} .
\]
The particularly useful feature of having expressed the system in terms of relative capital rather than in terms of capital is that (i) the transmission function $\hat{\tau}_t : \mathcal{X} \rightarrow \mathcal{X}$ is independent of $t$ and satisfies Assumption \ref{as1}, (ii) the fertility function $\hat{n}_t : \mathcal{X} \rightarrow (0,\infty)$ satisfies Assumption \ref{as2} and depends on $t$ only through a scaling factor which, as discussed in Remark \ref{rem:scale}, does not affect the distributional dynamics. The results of Section \ref{sec:results} can therefore be applied to the population process \eqref{main_nonlin_rel} for relative capital.

\begin{remark}
Dividing by $\underline{z}_t$ in the definition relative capital is what works in our system to give us the time invariance of $\hat{\tau}_t$. In a different system, one may need to normalize by a different moment of the distribution of capital.
\end{remark}

The following result is a direct consequence of Theorem \ref{thm} and Proposition \ref{prop:int2}, and we therefore state it without proof.

\begin{theorem}\label{thm:endo}\ 
\begin{enumerate}[leftmargin=0.7cm, label=(\roman*)]
\item If $\alpha+\beta>1$ then there is a steady state distribution of capital that is \emph{Pareto}$[1,1/(\alpha+\beta-1)]$. If, moreover, the initial distribution of relative capital $\hat{F}_0$ admits a continuous and bounded density $\hat{f}_0$ that is supported on $\mathcal{X}$, then the distribution of relative capital converges to a \emph{Pareto}$[1,1/(\alpha+\beta-1)]$ distribution.

\item A degenerate distribution at 1 is a steady state. And if $\alpha+\beta<1$ then, if the initial distribution of relative capital $\hat{F}_{0}$ admits a continuous and bounded density $\hat{f}_{0}$ that is supported on $\mathcal{X}$ then the distribution of relative capital converges to a degenerate distribution at 1.
\end{enumerate}
\end{theorem}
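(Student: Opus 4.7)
The plan is to verify that the time-invariant system for relative capital derived in Section~\ref{sec:distro} satisfies every hypothesis of Theorem~\ref{thm}, and then to read off the explicit form of the limiting density by a computation that parallels Example~C. By Remark~\ref{rem:scale}, the $t$-dependent scalar $\tfrac{1-\alpha}{\phi \underline{z}_t}$ in $\hat{n}_t(x) = \tfrac{1-\alpha}{\phi \underline{z}_t}\cdot \tfrac{1}{x}$ is irrelevant for the distributional dynamics, so I would treat the system as if its primitives were the fixed pair $n(x)=1/x$ and $\tau(x)=x^{\alpha+\beta}$ on $\mathcal{X}=[1,\infty)$.

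First I would check Assumptions~\ref{as1} and~\ref{as2}. The fertility function $n(x)=1/x$ is positive, bounded by $1$, and smooth on $[1,\infty)$. The map $\tau(x)=x^{\alpha+\beta}$ is a smooth bijection of $[1,\infty)$ with strictly positive derivative, and it has the unique fixed point $s_1=1$ with $\tau'(1)=\alpha+\beta$; this fixed point is a source when $\alpha+\beta>1$ and a sink when $\alpha+\beta<1$. Because there is a single fixed point, genericity is automatic and $\mathcal{X}_0=\emptyset$, so only niceness in $\mathcal{X}_1=[1,\infty)$ needs to be verified. Well-behavedness follows immediately from Proposition~\ref{prop:int2} with $m=1$, since $\lim_{x\to\infty}x\,n(x)=1$. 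For niceness in the source case I would compute $n(x)/\tau'(x)=\tfrac{1}{\alpha+\beta}\,x^{-(\alpha+\beta)}$, which is strictly positive at $1$, tends to $0$ at $\infty$, and has derivative $-x^{-(\alpha+\beta)-1}$ at $1$ equal to $-1\ne 0$. In the sink case the analogous check on $n$ itself is even simpler: $n(1)=1>0=\lim_{x\to\infty}n(x)$ and $n'(1)=-1\ne 0$.

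For part~(i), Theorem~\ref{thm}(i) then yields a steady state with density $f^*(x)\propto \lim_{t\to\infty}G_t(x,1)$ and convergence to it from any continuous bounded density supported on $\mathcal{X}$. The computation of $f^*$ mirrors Example~C verbatim: with $\rho^{[i]}(x)=x^{(\alpha+\beta)^{-i}}$ and the ratio $(n/\tau')(\rho^{[i]}(x))/(n/\tau')(1) = (\rho^{[i]}(x))^{-(\alpha+\beta)}$, one obtains
\[
G_t(x,1)=\prod_{i=1}^t\bigl(\rho^{[i]}(x)\bigr)^{-(\alpha+\beta)}=x^{-(\alpha+\beta)\sum_{i=1}^{t}(\alpha+\beta)^{-i}}\ \longrightarrow\ x^{-(\alpha+\beta)/(\alpha+\beta-1)},
\]
where I have used the geometric series $\sum_{i\ge1}(\alpha+\beta)^{-i}=1/(\alpha+\beta-1)$, which converges precisely because $\alpha+\beta>1$. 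Since the Pareto$[1,\nu]$ density is proportional to $x^{-(\nu+1)}$, matching exponents gives $\nu+1=(\alpha+\beta)/(\alpha+\beta-1)$, i.e.\ $\nu=1/(\alpha+\beta-1)$, which is the stated parameter.

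Part~(ii) has two claims. That the distribution degenerate at $1$ is a steady state is the first observation of Section~\ref{sec:atoms} applied to the fixed point $s_1=1$ of $\hat{\tau}$. When $\alpha+\beta<1$, $s_1=1$ is a sink, and Theorem~\ref{thm}(ii) delivers convergence of any continuous bounded initial density on $\mathcal{X}$ to this degenerate distribution. The only real obstacle is bookkeeping: one must be careful that the endpoint maximality check is performed on $n/\tau'$ in the source case and on $n$ in the sink case, and that the exponents in the geometric series calculation are tracked accurately so that the Pareto parameter comes out as $1/(\alpha+\beta-1)$ rather than $(\alpha+\beta)/(\alpha+\beta-1)$ (the one-off shift coming from the relation between a Pareto density's power and its tail index).
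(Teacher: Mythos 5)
Your proposal is correct and follows exactly the route the paper intends: the paper states Theorem \ref{thm:endo} without proof as ``a direct consequence of Theorem \ref{thm} and Proposition \ref{prop:int2},'' and your argument is precisely the verification of that claim --- invoking Remark \ref{rem:scale} to drop the time-varying scalar, checking Assumptions \ref{as1}--\ref{as2}, genericity, niceness, and well-behavedness for $n(x)=1/x$, $\tau(x)=x^{\alpha+\beta}$ on $[1,\infty)$, and computing $\lim_t G_t(x,1)$ as in Example C to get the Pareto exponent $1/(\alpha+\beta-1)$. All the individual checks (in particular $n(x)/\tau'(x)=\tfrac{1}{\alpha+\beta}x^{-(\alpha+\beta)}$ with nonzero derivative at $1$, and the geometric-series bookkeeping $\nu+1=(\alpha+\beta)/(\alpha+\beta-1)$) are accurate.
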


As can be seen above, the value of $\alpha + \beta$ is the key determinant of whether the long-run distribution of relative capital exhibits inequality (the Pareto steady state) or none (the degenerate steady state at 1). In calibrations, it appears customary to assume $\alpha + \beta < 1$, with a typical reason being that if $\alpha + \beta >1$ then the system would exhibit explosive dynamics under non-differential fertility (similar to case (a) in Example A).\footnote{`For individual dynamics to be stable [$\beta$] must not exceed the upper bound [...]' \cite*[p. 1100]{delacroix03}. Since this parameter does not affect choices in their model, stability of the system is one of the main considerations in the calibration of \cite*{delacroix03}.} But this reason is not valid in the presence of differential fertility: even when $\alpha+\beta>1$, so that the transmission function pushes capital towards ever higher levels, a steady state distribution may still exist due to the lower fertility at higher levels of capital. In fact, the empirical estimates in the literature for $\alpha$ (the elasticity of children's capital to educational investment) and $\beta$ (the elasticity of children's capital on parental capital) allow for the possibility that $\alpha+\beta >1$,\footnote{\cite*{delacroix03} report estimates in the literature of $\alpha \in [0.4, 0.8]$ and $\beta \in [0.085, 0.24]$. The calibration in \cite*{cavalcanti2017growth} yields $\beta = 0.55$.} and, as shown in Theorem \ref{thm:endo}, when combined with differential fertility, this case yields the existence of a Pareto steady state to which our system globally converges.

\begin{remark}
We do not claim that `capital', as it is interpreted under this model, has an empirical distribution that is in fact Pareto. Our aim is simply to show that minor modifications to a standard model of endogenous growth with differential fertility can give rise to a system whose outcomes are straightforwardly determined by an application of Theorem \ref{thm}.\footnote{A more complex model of endogenous fertility may give rise to Pareto tails without the full distribution being Pareto, but specifying such a model is beyond our scope. Our aim is to showcase the applicability of Theorem \ref{thm} rather than to present a new full-blown model of endogenous growth.}
\end{remark}

In addition to working out the distributional outcomes, we also derived results on the growth rates of aggregate variables like average capital and population size. These are in the appendix.

\section{Discussion}\label{sec:final}

We analyzed a canonical and deterministic model of the intergenerational transmission of capital that features differential fertility and we identified sufficient conditions for the distribution of capital to converge to degenerate or to atomless steady states. We showed that the sufficient conditions are easy to verify and are satisfied in several parametric examples and in a micro-founded model of endogenous growth. As we have argued, our results also have implications for empirical calibrations. Finally, our comparative statics results provide some insight into the relationship between differential fertility and long-run cross-sectional inequality.

As discussed throughout the text, our investigation has been of the dynamics of a deterministic model which allows us to attribute any long-run inequality to the shapes of the fertility and transmission functions. But we have not discussed an important component of intergenerational transmission, namely, \emph{social mobility}---which is related to but distinct from cross-sectional inequality---and a natural way to include social mobility in our model would be to introduce a stochastic element in the transmission function. Doing so would likely provide further insights into the theoretical relationship between differential fertility, long-run inequality, and social mobility, which we leave for future work.

\clearpage
\appendix
\section{Appendix}

\subsection{Proofs for Section \ref{sec:densities}}
\label{app:thm}
\begin{proof}[Proof of Proposition \ref{prop:density_dynamic}]
From equation \eqref{main_nonlin} we have that for any $\epsilon >0$ and $[x,x+\epsilon] \subseteq \mathcal{X}$,
\begin{align*}
F_{t+1}(x+ \epsilon) - F_{t+1}(x) &= \frac{1}{\mathbb{E}_{t}[n]} \int_{\mathcal{X}} \mathbbm{1}[x \leq \tau(z) \leq x+\epsilon] n(z) dF_{t}(z) \\
&= \frac{1}{\mathbb{E}_{t}[n]} \int_{\rho(x)}^{\rho(x+\epsilon)} n(z) dF_{t}(z) \\
&= \frac{1}{\mathbb{E}_{t}[n]} n(y_\epsilon)  \int_{\rho(x)}^{\rho(x+\epsilon)} dF_{t}(z) \\
&= \frac{1}{\mathbb{E}_{t}[n]} n(y_\epsilon) \left[ F_{t}(\rho(x+\epsilon)) - F_{t}(\rho(x)) \right]
\end{align*}
The step that introduces the quantity $y_\epsilon \in [\rho(x), \rho(x+\epsilon)]$ follows from the mean value theorem for integrals.\footnote{If $g$ is continuous and $h$ is integrable and does not change sign on $[a,b]$, then there is a $c\in[a,b]$ such that $\int_a^b g(x) h(x)dx = g(c) \int_a^b h(x)dx$.} Since $n$ is positive and bounded, we have that $0<\mathbb{E}_{t}[n]<\infty$ for each $t$. Now, dividing both sides by $\epsilon$ and taking the limit $\epsilon \rightarrow 0$ yields
\[
f_{t+1}(x) = \frac{1}{\mathbb{E}_{t}[n]}  n(\rho(x)) \rho'(x) f_{t}(\rho(x)),
\]
which is equivalent to  \eqref{eq:evolution} because $\rho'(x) = 1/\tau'(\rho(x))$.
\end{proof}

\begin{proof}[Proof of Proposition \ref{prop:int1}]
Fix some $y \in \mathcal{X}$. By the transformation of variables $z = \rho(x)$,
\[
\int_{\mathcal X} G_1(x,y) dx = \int_{\mathcal X} \frac{n(\rho(x))}{\tau'(\rho(x))} \frac{n(y)}{\tau'(y)} dx = \frac{n(y)}{\tau'(y)} \int_{\mathcal X} n(z) dz,
\]
so integrability of $n(\cdot)$ implies integrability of $G_t(\cdot,y)$ for some $t \geq 1$.
\end{proof}

\begin{proof}[Proof of Proposition \ref{prop:int2}]
For $x \in \mathcal{X}=[1,\infty)$, and fixing some $y \in \mathcal{X}$, define the function
$$ \tilde{G}_t(x,y) := \prod_{i=1}^t \frac{\tilde{n}(\rho^{[i]}(x))}{\tau'(\rho^{[i]}(x))} \frac{\tau'(y)}{\tilde{n}(y)} \propto \prod_{i=1}^t x^{\frac{1-m-a}{a^i}} = x^{(1-m-a)\frac{1-1/a^t}{a-1}} ,$$
where $\tilde{n}(x) := x^{-m}$.

We now show that there is a $t \geq 1$ sufficiently large such that $\tilde{G}_t(\cdot,y)$ is integrable. We then employ the limit comparison test to show that the integrability of $\tilde{G}_t(\cdot,y)$ implies the integrability of $G_t(\cdot,y)$.\footnote{The limit comparison test for improper integrals states that if $g$ and $h$ are continuous positive functions on $[c,\infty)$ and $\lim_{x \rightarrow \infty} g(x)/h(x) = L \in (0,\infty)$ then $\int_c^\infty g(x) dx$ and $\int_c^\infty h(x) dx$ either both converge or both diverge. Furthermore, if $\lim_{x \rightarrow \infty} g(x)/h(x) = 0$ then convergence of $\int_c^\infty h(x) dx$ implies convergence of $\int_c^\infty g(x) dx$.}

For any $m > 0$, the function $\tilde{G}_t(\cdot,y)$ is integrable whenever
$$ (1-m-a)\frac{1-1/a^t}{a-1} < -1 .$$
Re-arranging yields $t> \ln\left( \frac{m+a-1}{m}\right)/\ln(a)$. So define $T:= \ceil*{ \ln\left( \frac{m+a-1}{m}\right) / \ln(a) }$ and consider the ratio of the two functions of interest at any $t \geq T$. We obtain:
\begin{align*}
\lim_{x \rightarrow \infty} \frac{G_t(x,y)}{\tilde{G}_t(x,y)} &= \lim_{x \rightarrow \infty} \prod_{i=1}^{t} \frac{n(\rho^{[i]}(x))}{\tau'(\rho^{[i]}(x))} \frac{\tau'(y)}{n(y)} \frac{\tau'(\rho^{[i]}(x))}{\tilde{n}(\rho^{[i]}(x))} \frac{\tilde{n}(y)}{\tau'(y)} \\
&= \left( \frac{\tilde{n}(y)}{n(y)} \right)^t \times \prod_{i=1}^{t}   \lim_{x \rightarrow \infty } \frac{n(\rho^{[i]}(x))}{\tilde{n}(\rho^{[i]}(x))} \\
&= \left( \frac{\tilde{n}(y)}{n(y)} \right)^t \times M^t < \infty \, .
\end{align*}
The key step follows from our assumption that for some $m>0$, $\lim_{x \rightarrow \infty} x^m n(x) = M$ where $0\leq M <\infty$. The above suffices to show integrability of $G_t(\cdot,y)$ for some $t \geq 1$.
\end{proof}

\subsubsection{Proof of Theorem \ref{thm}}
The proof of Theorem \ref{thm} is done in three steps. First, we show that the population process can be decomposed into $K+1$ parallel processes, one for each interval $\mathcal{X}_k$, and in Proposition \ref{prop:recover} we show that the dynamics of the population process can be recovered from the dynamics of each of the parallel processes. Second, under the assumption of well-behavedness, Proposition \ref{prop:parallel} shows that each parallel process has a steady state (either degenerate or atomless) and converges to such a steady state. The proof of Proposition \ref{prop:parallel} is the most involved and is built on several lemmas. Finally we show that, when combined with genericity, Proposition \ref{prop:parallel} yields Theorem \ref{thm}.

Let's start with Proposition \ref{prop:recover}. Suppose that $F_0$ admits a density $f_0$. For each $k \in \{0,\dots,K\}$, construct a density $f_{k,t}$ as follows: for each $x \in \mathcal{X}_k$ set $f_{k,0}(x) := f_0(x)$ and for $t \geq 0$, let
\begin{equation}\label{eq:evolution_k}
f_{k,t+1}(x) := \frac{1}{\mathbb{E}_{k,t}[n]} \frac{n(\rho(x))}{\tau'(\rho(x))} f_{k,t}(\rho(x))  
\end{equation}
where $\mathbb{E}_{k,t}[n]:= \int_{\mathcal{X}_k} n(z) f_{k,t}(z) dz$. For each $x \not\in \mathcal{X}_k$ and each $t\geq 0$, set $f_{k,t}(x):=0$. The evolution of the population process \eqref{main_nonlin} can be recovered from the $K+1$ parallel processes defined by \eqref{eq:evolution_k} thanks to the following result.
\begin{proposition}\label{prop:recover}
For each $k\in \{0,\dots,K\}$, $x \in \mathcal{X}_k$, and $t \geq 1$,
\begin{equation}\label{eq:recover}
 f_{t}(x) = w_{k,t} f_{k,t}(x)  ,
\end{equation}
where
\begin{equation}\label{eq:w}
    w_{k,t} := \frac{\prod_{i=0}^{t-1} \mathbb{E}_{k,i}[n]}{\sum_{k'=0}^K \prod_{i=0}^{t-1} \mathbb{E}_{k',i}[n]} .
\end{equation}
\end{proposition}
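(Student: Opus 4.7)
My plan is a straightforward induction on $t$, where the only conceptual ingredient is the observation that the transmission dynamics keep dynasties within a single $\mathcal{X}_k$.

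The first step is to record the invariance of each $\mathcal{X}_k$ under $\rho$. By Assumption \ref{as2}, $\tau$ is a continuous strictly increasing bijection whose fixed points are exactly $s_1<\cdots<s_K$. Since $\tau$ agrees with the identity at the endpoints of every $\mathcal{X}_k$, and is strictly increasing in between, $\tau(\mathcal{X}_k)\subseteq \mathcal{X}_k$; the same then holds for $\rho=\tau^{-1}$. Consequently, for every $x\in\mathcal{X}_k$ we have $\rho(x)\in\mathcal{X}_k$, so $f_0(\rho(x))=f_{k,0}(\rho(x))$ by construction of the parallel process.

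Next I handle the base case $t=1$. Applying \eqref{eq:evolution} from Proposition \ref{prop:density_dynamic} to any $x\in\mathcal{X}_k$ and using the identity just noted,
\[
f_{1}(x)=\frac{1}{\mathbb{E}_{0}[n]}\frac{n(\rho(x))}{\tau'(\rho(x))}f_{k,0}(\rho(x))=\frac{\mathbb{E}_{k,0}[n]}{\mathbb{E}_{0}[n]}\,f_{k,1}(x),
\]
where the last equality uses \eqref{eq:evolution_k}. Since the intervals $\mathcal{X}_0,\dots,\mathcal{X}_K$ have pairwise disjoint interiors and cover $\mathcal{X}$, splitting the integral defining $\mathbb{E}_{0}[n]$ yields $\mathbb{E}_{0}[n]=\sum_{k'=0}^{K}\mathbb{E}_{k',0}[n]$, which is exactly the denominator of $w_{k,1}$, establishing \eqref{eq:recover} at $t=1$.

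For the inductive step, assume \eqref{eq:recover} holds at time $t\geq 1$. Applying \eqref{eq:evolution} once more to $x\in\mathcal{X}_k$, using $\rho(x)\in\mathcal{X}_k$, the inductive hypothesis, and \eqref{eq:evolution_k},
\[
f_{t+1}(x)=\frac{w_{k,t}}{\mathbb{E}_{t}[n]}\frac{n(\rho(x))}{\tau'(\rho(x))}f_{k,t}(\rho(x))=\frac{w_{k,t}\,\mathbb{E}_{k,t}[n]}{\mathbb{E}_{t}[n]}\,f_{k,t+1}(x).
\]
Splitting the integral in $\mathbb{E}_{t}[n]$ over the $\mathcal{X}_{k'}$ and applying the inductive hypothesis gives $\mathbb{E}_{t}[n]=\sum_{k'}w_{k',t}\mathbb{E}_{k',t}[n]$. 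Substituting the closed form \eqref{eq:w} for $w_{k,t}$ and $w_{k',t}$, the common denominator cancels and the numerator collapses to $\prod_{i=0}^{t}\mathbb{E}_{k,i}[n]$, while the denominator becomes $\sum_{k'}\prod_{i=0}^{t}\mathbb{E}_{k',i}[n]$. Hence the prefactor equals $w_{k,t+1}$, completing the induction.

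The argument is entirely routine; the only thing one must be careful about is the invariance $\rho(\mathcal{X}_k)\subseteq\mathcal{X}_k$, which is what allows us to identify $f_{t}(\rho(x))$ with a multiple of $f_{k,t}(\rho(x))$ rather than a sum of contributions from neighboring intervals. This is also where Assumption \ref{as2}(i)--(ii) enters essentially: without strict monotonicity and the fixed-point structure, dynasties could cross between intervals and the decomposition would fail.
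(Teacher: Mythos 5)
Your proof is correct and follows essentially the same route as the paper's: induction on $t$ via the density evolution equations \eqref{eq:evolution} and \eqref{eq:evolution_k}, combined with the identity $\mathbb{E}_t[n]=\sum_{k'}w_{k',t}\mathbb{E}_{k',t}[n]$. The only differences are cosmetic — you make the invariance $\rho(\mathcal{X}_k)\subseteq\mathcal{X}_k$ explicit (the paper leaves it implicit) and you obtain the expectation identity directly from the inductive hypothesis rather than by the paper's separate induction.
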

\begin{remark}
Observe that in \eqref{eq:evolution_k} the evolution of the density $f_{k,t}:\mathcal{X}_k \to [0,\infty)$, which is simply $f_t$ truncated to $\mathcal{X}_k$, is independent of the evolution of $f_{k',t}$ for any $k \neq k'$.
\end{remark}
\begin{remark}\label{cor:growth2}
If $F_k^*$ is a steady state of the process $F_{k,t} \mapsto F_{k,t+1}$ that admits a density $f_k^*$ for which $f_k^*(s)>0$ and $s$ is a fixed point of $\tau$ then $\mathbb{E}_k^*[n] = n(s)/\tau'(s)$ is the steady state population growth factor of this $k$th parallel process. The result follows directly from evaluating \eqref{eq:evolution_k} in steady state at $s$.
\end{remark}
\begin{proof}[Proof of Proposition \ref{prop:recover}]
First, observe that \eqref{eq:w} solves the system 
\[
w_{k,t+1} = \frac{w_{k,t} \mathbb{E}_{k,t}[n]}{\sum_{k'=0}^K w_{k',t} \mathbb{E}_{k',t}[n]}
\]
where $w_{k,0}:=1$ for each $k$. We therefore only need to show that \eqref{eq:recover} holds for $w_{k,t}$ satisfying the equation above. We do this by induction. By construction, $f_0(x) = w_{k,0} f_{k,0}(x)$ for each $x \in \mathcal{X}_k$ where $w_{k,0}:=1$. For an induction, suppose $f_t(x) = w_{k,t} f_{k,t}(x)$ for each $x \in \mathcal{X}_k$. By \eqref{eq:evolution} and \eqref{eq:evolution_k} we have that for any $x \in \mathcal{X}_k$,
\[
f_{t+1}(x) = \frac{1}{\mathbb{E}_t[n]} \frac{n(\rho(x))}{\tau'(\rho(x))} f_t(\rho(x)) = \frac{w_{k,t}}{\mathbb{E}_t[n]} \frac{n(\rho(x))}{\tau'(\rho(x))} f_{k,t}(\rho(x)) = \frac{w_{k,t} \mathbb{E}_{k,t}[n]}{\mathbb{E}_t[n]}f_{k,t+1}(x).
\]
To complete the proof it now suffices to show that for each $t\geq 0$, $\mathbb{E}_t[n] = \sum_{k'=0}^K w_{k',t} \mathbb{E}_{k',t}[n]$. We again show this by induction on $t$. The base case is true by construction. So let's assume that the statement is true for $t$ and we'll show that it must also be true at $t+1$. From the above, we know that
\begin{align*}
\mathbb{E}_{t+1}[n] = \int_{\mathcal{X}} n(x) f_{t+1}(x) dx &= \sum_{k=0}^K  \int_{\mathcal{X}_k} n(x) f_{t+1}(x) dx \\
&= \sum_{k=0}^K  \int_{\mathcal{X}_k} \frac{w_{k,t} \mathbb{E}_{k,t}[n]}{\mathbb{E}_t[n]}n(x) f_{k,t+1}(x) dx \\
&=\sum_{k=0}^K \frac{w_{k,t} \mathbb{E}_{k,t}[n]}{\mathbb{E}_t[n]} \mathbb{E}_{k,t+1}[n] \\
&=\sum_{k=0}^K w_{k,t+1} \mathbb{E}_{k,t+1}[n]
\end{align*}
The final step follows from the inductive hypothesis and the definition of $w_{k,t+1}$.
\end{proof}
\noindent Proposition \ref{prop:recover} is a useful result. It tells us that, in order to determine the dynamics of the population process \eqref{main_nonlin}, we can simply work out the dynamics of each of the $K+1$ parallel processes separately and then recover the dynamics for the population process from \eqref{eq:recover}.

The next step is to work out the dynamics of each of the parallel processes. Observe that for any $t\geq 1$, and $k \in \{0,\dots,K\}$ and any $x,y \in \mathcal X$, repeated iteration of \eqref{eq:evolution_k} yields:
\begin{equation}\label{eq:convK}
f_{k,t}(x) = \frac{\left(\frac{n(y)}{\tau'(y)}\right)^t}{\prod_{i=0}^{t-1} \mathbb{E}_{k,i}[n]} G_t(x,y) f_0(\rho^{[t]}(x)) \propto G_t(x,y) f_0(\rho^{[t]}(x)) 
\end{equation}
and therefore
\begin{equation}\label{eq:conK2}
\prod_{i=0}^{t} \mathbb{E}_{k,i}[n] = \left(\frac{n(y)}{\tau'(y)}\right)^t \int_{\mathcal{X}_k} n(x) G_t(x,y) f_0(\rho^{[t]}(x)) dx 
\end{equation}
Proposition \ref{prop:parallel} below establishes conditions under which each of the $K+1$ processes has a steady state to which it converges (which primarily consists in showing that the limit of equation \eqref{eq:convK} exists and is integrable). 
\begin{proposition}\label{prop:parallel}
Consider an interval $\mathcal{X}_k$ for some $k \in \{0,\dots,K\}$ and  some $s \in \mathcal{X}_k$ where $s$ is a fixed point of $\tau$. Assume that $G_t(\cdot,s)$ is integrable on $\mathcal{X}$ for some $t \geq 1$.
\begin{enumerate}[leftmargin=0.7cm, label=(\roman*)]
    \item If $s$ is a source, and the function $n(\cdot)/\tau'(\cdot)$ is well-behaved at $s$ and endpoint maximal at $s$ in $\mathcal{X}_k$, then there exists a steady state supported on $\mathcal{X}_{k}$ with density
\[
f_k^*(x) \propto \lim_{t \to \infty} G_t(x,s) .
\]
If, moreover, the initial distribution $F_{0}$ admits a continuous and bounded density $f_{0}$ that is supported on $\mathcal X$, then for all $x \in \mathcal{X}_k$, $\lim_{t \to \infty} f_{k,t}(x) = f_k^*(x)$, and there is a constant $c_k>0$ such that 
\begin{equation}\label{eq:prodlim1}
\lim_{t \to \infty }\left(\frac{\tau'(s)}{n(s)}\right)^t \prod_{i=0}^{t} \mathbb{E}_{k,i}[n] = c_k .
\end{equation}
\item If, instead, $s$ is a sink, and the function $n(\cdot)$ is well-behaved at $s$ and endpoint maximal at $s$ in $\mathcal{X}_k$, and the initial distribution $F_{0}$ admits a continuous and bounded density $f_{0}$ that is supported on $\mathcal X$, then for all $x \in \mathcal{X}_k$, $\lim_{t \to \infty} F_{k,t} (x) = \mathbbm{1}[s \leq x]$, and there is a constant $c_k>0$ such that
\begin{equation}\label{eq:prodlim2}
\lim_{t \to \infty }\left(\frac{1}{n(s)}\right)^t \prod_{i=0}^{t} \mathbb{E}_{k,i}[n] = c_k .
\end{equation}
\end{enumerate}
\end{proposition}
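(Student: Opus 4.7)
The plan is to build directly on \eqref{eq:convK}. After normalizing so that $\int_{\mathcal{X}_k} f_{k,t} = 1$, it yields the explicit representation
\[
f_{k,t}(x) = \frac{G_t(x,s)\, f_0(\rho^{[t]}(x))}{\int_{\mathcal{X}_k} G_t(z,s)\, f_0(\rho^{[t]}(z))\, dz}.
\]
The main task is then to analyze the pointwise limits of $G_t(x,s)$ and of $\rho^{[t]}(x)$ in each case, and to pass them through the integral by dominated convergence. The product limits \eqref{eq:prodlim1}--\eqref{eq:prodlim2} will then follow from applying the same dominated-convergence analysis to \eqref{eq:conK2}.

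\paragraph{Source case.} Since $s$ is a source of $\tau$, it is a geometrically attracting fixed point of $\rho$ on $\mathcal{X}_k$, with rate $1/\tau'(s)<1$. Writing $h := n/\tau'$, the $C^1$-smoothness of $h$ together with $h'(s)\neq 0$ implies that each summand $\log\bigl(h(\rho^{[i]}(x))/h(s)\bigr)$ decays geometrically in $i$, so $G_\infty(x,s) := \lim_t G_t(x,s)$ exists pointwise and is positive. To justify dominated convergence I would combine well-behavedness (integrability of $G_T(\cdot,s)$ for some $T$, which handles the tail of $\mathcal{X}_k$) with endpoint maximality of $h$ at $s$: once the orbit $\rho^{[i]}(x)$ enters a small neighborhood of $s$, the factor $h(\rho^{[i]}(x))/h(s)$ stays close to $1$, so the tail product $\prod_{i>T} h(\rho^{[i]}(x))/h(s)$ is bounded uniformly in $x$ and $t$ (and in particular $G_\infty(\cdot,s)$ is integrable). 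Combined with $f_0(\rho^{[t]}(x))\to f_0(s)>0$, this yields $f_{k,t}(x)\to G_\infty(x,s)/\int_{\mathcal{X}_k} G_\infty(z,s)\,dz = f_k^*(x)$, and the same reasoning applied to \eqref{eq:conK2} with $y=s$ gives \eqref{eq:prodlim1} with $c_k = f_0(s)\int_{\mathcal{X}_k} n(x)\,G_\infty(x,s)\,dx$.

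\paragraph{Sink case.} Now $s$ is a sink of $\tau$, so $\rho^{[t]}(x)\to s'$ (the opposite endpoint of $\mathcal{X}_k$) rather than to $s$, and $G_t(x,s)$ is no longer the convenient object to track. I would instead unfold \eqref{main_nonlin} forward in time to write
\[
F_{k,t}(x) = \frac{\int_{\mathcal{X}_k} \mathbbm{1}[\tau^{[t]}(z)\leq x]\, N_t(z)\, f_0(z)\, dz}{\int_{\mathcal{X}_k} N_t(z)\, f_0(z)\, dz}, \quad N_t(z):=\prod_{i=0}^{t-1} n(\tau^{[i]}(z)).
\]
Since $s$ is a sink of $\tau$, $\tau^{[t]}(z)\to s$ geometrically for every $z\in\interior\mathcal{X}_k$, so $\mathbbm{1}[\tau^{[t]}(z)\leq x]\to \mathbbm{1}[s\leq x]$ for $x\neq s$. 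Running the same product-convergence argument on $n$ itself (using $n'(s)\neq 0$ and $\tau'(s)<1$) produces a positive pointwise limit $M(z):=\lim_t N_t(z)/n(s)^t$. Endpoint maximality of $n$ at $s$, together with well-behavedness, supplies the uniform integrable dominating function, so dominated convergence delivers $F_{k,t}(x)\to\mathbbm{1}[s\leq x]$. Applying the computation to the denominator alone yields \eqref{eq:prodlim2} with $c_k=\int_{\mathcal{X}_k} M(z)\, f_0(z)\, dz$.

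\paragraph{Main obstacle.} The hardest step in both cases is upgrading the well-behavedness hypothesis---integrability of $G_T(\cdot,s)$ for some finite $T$---to a uniform-in-$t$ integrable dominating function for $G_t(\cdot,s)$ (and analogously for $N_t(\cdot)/n(s)^t$ in the sink case). This matters especially when $\mathcal{X}_k$ is unbounded, since the geometric convergence $\rho^{[i]}(x)\to s$ is not uniform in $x$. Endpoint maximality is what rescues the argument: it identifies $s$ as the point where the per-step growth factor is largest along the orbit, so once the orbit is near $s$ the tail factors are near $1$; the $C^1$ smoothness and the geometric contraction rate then allow these tail factors to be bounded above by a constant uniformly in $x$ and $t$.
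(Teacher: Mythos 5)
Your proposal is correct and follows essentially the same route as the paper: pointwise convergence of the infinite product via the geometric contraction at the fixed point (the paper's Lemma \ref{lem:Q}), a uniform-in-$t$ integrable dominating function built from well-behavedness plus endpoint maximality (Lemmas \ref{lem:D} and \ref{lem:int}), and, in the sink case, a switch to forward orbits of $\tau$ (your forward unfolding with $N_t(z)=\prod_{i=0}^{t-1}n(\tau^{[i]}(z))$ is exactly the paper's change of variables $y=\rho^{[t]}(x)$ turning $G_t$ into $Q_0^{t-1}[n,\tau]$). The only loose ends are bookkeeping: your $c_k$ in the sink case is missing a factor of $n(s)$ (the product in \eqref{eq:prodlim2} has $t+1$ factors against a normalization of $n(s)^t$), and the uniform tail bound needs the three-region split of $\mathcal{X}_k$ that the paper carries out in Lemma \ref{lem:D}, not just ``factors near $1$.''
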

Next, we introduce two functions, $\kappa$ and $q$, and a functional $Q$, and prove a number of lemmas regarding these mappings. The proof of Proposition \ref{prop:parallel} then makes use of these lemmas with appropriate substitutions for $\kappa$ and $q$.

Throughout the rest of Appendix \ref{app:thm} let $\kappa: \mathcal{X} \rightarrow \mathcal{X}$ denote a continuously differentiable and strictly increasing function. Moreover, assume that $\kappa$ has exactly $1 \leq K <\infty$ fixed points in $\mathcal{X}$ and that these fixed points are the same as those of the transmission function $\tau$. We therefore denote the fixed points of $\kappa$ by $s_1,\dots,s_K$ with $s_1 < \cdots < s_K$. Let $q : \mathcal{X} \rightarrow (0,\infty)$ denote a positive, bounded, and continuously differentiable function. Finally, for any $x,y\in \mathcal{X}$, any integer $m \geq 0$, and any $t \geq m$, define the functional
\[
Q_{m}^t[q,\kappa]( x, y ) := \prod_{i=m}^t \frac{q(\kappa^{[i]}(x))}{q(y)} .
\]

\begin{remark}\label{rem:h}
For any continuously differentiable function $h: \mathcal{X} \to (0,\infty)$, if $h'(x)>0$ for some $x \in \mathcal{X}$ then there is $\epsilon>0$ such that $h(y)>h(x)$ for all $y \in (x,x+\epsilon)$, and if $h'(x)<0$ for some $x \in \mathcal{X}$ then there is $\epsilon>0$ such that $h(y)<h(x)$ for all $y \in (x,x+\epsilon)$.\footnote{The proof is straightforward. If a function $\mu$ is continuous at $x$ and $\mu(c)>0$ there there is $\epsilon>0$ such that $\mu(x)>0$ for all $x \in (c-\epsilon,c+\epsilon)$. So, since $h'(x)>0$ and $h'$ is continuous, there must exist some $\epsilon>0$ such that $h'(y)>0$ for all $y \in (x,x+\epsilon)$. This now implies the result. If not, then there would be some $y \in (x,x+\epsilon)$ such that $h(y) \leq h(x)$ but, by the mean value theorem, there would then exists some $c (x,y)$ such that $h'(c) = (h(y)-h(x))/(y-x) \leq 0$; a contradiction.}
\end{remark}

\begin{lemma}\label{lem:kappa1}
Consider an interval $\mathcal{X}_k$ for some $k \in \{0,\dots,K\}$. Suppose $s \in \mathcal{X}_k$ is a fixed point of $\kappa$ and that $\kappa'(s)<1$. If $s$ is the left (respectively, right) endpoint of $\mathcal{X}_k$ then for any $x \in \interior \mathcal{X}_k$ the sequence $\kappa^{[1]}(x), \kappa^{[2]}(x),\dots$ is strictly decreasing (respectively, strictly increasing) and $\lim_{t\rightarrow \infty} \kappa^{[t]}(x)=s$.    
\end{lemma}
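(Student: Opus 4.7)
The plan is to analyze the sign of the auxiliary function $g(x) := \kappa(x) - x$ on $\mathcal{X}_k$ and then deduce monotone convergence of the orbit in the usual way. The first step is to note that $\mathcal{X}_k$ is forward invariant under $\kappa$: since $\kappa$ is strictly increasing and every finite endpoint of $\mathcal{X}_k$ that belongs to $\mathcal{X}$ is a fixed point of $\kappa$, monotonicity gives $\kappa([s_k,s_{k+1}]) = [s_k,s_{k+1}]$ for intermediate intervals, and one-sided monotonicity takes care of the boundary intervals $\mathcal{X}_0$ and $\mathcal{X}_K$. In particular, $\kappa^{[t]}(x)$ remains in $\mathcal{X}_k$ for every $t\geq 0$ whenever $x\in \mathcal{X}_k$.

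Next I would pin down the sign of $g$ on $\interior \mathcal{X}_k$. The function $g$ is continuously differentiable on $\mathcal{X}_k$ with $g(s) = 0$ and $g'(s) = \kappa'(s) - 1 < 0$, and it has no other zeros in $\mathcal{X}_k$ because by assumption the only fixed points of $\kappa$ in $\mathcal{X}_k$ are the endpoints. The negativity of $g'(s)$ plus $g(s)=0$ yields, by the usual one-sided Taylor argument (in the spirit of Remark \ref{rem:h} applied to the continuously differentiable $g$), a one-sided neighbourhood of $s$ in $\mathcal{X}_k$ on which $g$ has a definite sign: $g<0$ immediately to the right of $s$ if $s$ is the left endpoint, and $g>0$ immediately to the left of $s$ if $s$ is the right endpoint. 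Because $g$ is continuous and has no zero on $\interior \mathcal{X}_k$, the intermediate value theorem propagates this sign throughout $\interior \mathcal{X}_k$.

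Combining the two steps, for any $x \in \interior \mathcal{X}_k$ we obtain $\kappa(x)<x$ in the left-endpoint case and $\kappa(x)>x$ in the right-endpoint case, with $\kappa(x)$ still in $\interior \mathcal{X}_k$ by invariance and strict monotonicity of $\kappa$. Iterating, the orbit $(\kappa^{[t]}(x))_{t\geq 1}$ is strictly monotone and bounded (by $s$ from below or above, respectively), so it converges to some limit $L$. Continuity of $\kappa$ gives $\kappa(L)=L$, so $L\in\{s_1,\dots,s_K\}$, and monotonicity of the sequence traps $L$ in the closed interval between $s$ and $x$, which contains no fixed point of $\kappa$ other than $s$. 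Therefore $L = s$.

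There is no really hard step here; the only mild subtlety is making sure the intermediate value argument on $g$ is applied correctly when $\mathcal{X}_0$ or $\mathcal{X}_K$ is unbounded on the side opposite $s$, which is why I prefer to argue via the sign of $g$ on a one-sided neighbourhood of $s$ and then extend by the absence of zeros, rather than appealing to any fixed point at the far endpoint.
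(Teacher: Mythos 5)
Your proof is correct and follows essentially the same route as the paper's: establish the sign of $\kappa(x)-x$ in a one-sided neighbourhood of $s$ via the derivative condition (the paper's Remark \ref{rem:h}), extend it to all of $\interior\mathcal{X}_k$ using continuity and the absence of interior fixed points, and conclude by monotone bounded convergence with the limit identified as the unique fixed point in the relevant subinterval. You are somewhat more explicit than the paper about forward invariance of $\mathcal{X}_k$ and about why the limit must equal $s$, but these are elaborations of the same argument rather than a different approach.
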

\begin{proof}[Proof of Lemma \ref{lem:kappa1}]
We give the proof for the case in which $s$ is the left endpoint of $\mathcal{X}_k$ since the case in which it is the right endpoint is entirely analogous.

Since $\kappa$ is continuously differentiable and $1-\kappa'(s)>0$, by Remark \ref{rem:h}, we have that $x-\kappa(x)>0$ for all $x \in (s,s+\epsilon)$. And, since $\kappa$ is continuous, this implies that $x-\kappa(x)>0$ for all $x \in \interior \mathcal{X}_k$. Since $\kappa$ is strictly increasing, $x>\kappa(x)$ implies $\kappa(x)>\kappa(\kappa(x))$, so we conclude that the sequence $\kappa^{[1]}(x),\kappa^{[2]}(x),...$ is strictly decreasing for all $x \in \interior \mathcal X$. The sequence is bounded below by $s$ and thus convergent and its limit must be $s$.  
\end{proof}

\begin{lemma}\label{lem:kappa2}
Consider an interval $\mathcal{X}_k$ for some $k \in \{0,\dots,K\}$. Suppose $s \in \mathcal{X}_k$ is a fixed point of $\kappa$ and that $\kappa'(s)<1$. Consider $x \in \interior \mathcal{X}_k$. If $s$ is the left endpoint of $\mathcal{X}_k$ then for any $\epsilon>0$ there is a positive and finite integer $ T $ such for all $y \in [s,x]$, $\kappa^{[t]}(y) \in [s,s+\epsilon)$ for all $t \geq T$. Similarly, if $s$ is the right endpoint of $\mathcal{X}_k$ then for any $\epsilon>0$ there is a positive and finite integer $ T $ such for all $y \in [x,s]$, $\kappa^{[t]}(y) \in (s-\epsilon,s]$ for all $t \geq T$.
\end{lemma}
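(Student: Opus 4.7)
The plan is to leverage the monotonicity of $\kappa$ together with the pointwise convergence established in Lemma \ref{lem:kappa1} to upgrade pointwise convergence on the single point $x$ to uniform convergence on the interval $[s,x]$. I will give the argument for the case where $s$ is the left endpoint of $\mathcal{X}_k$; the right-endpoint case is entirely symmetric (with the roles of ``decreasing'' and ``increasing'' swapped, and the interval $[x,s]$ in place of $[s,x]$).

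First I would observe that because $s$ is a fixed point of $\kappa$ and $\kappa$ is strictly increasing on $\mathcal{X}$, the iterates $\kappa^{[t]}$ are themselves strictly increasing for every $t \geq 1$, and they each fix $s$. Hence for any $y \in [s,x]$ and any $t \geq 1$,
\[
s = \kappa^{[t]}(s) \leq \kappa^{[t]}(y) \leq \kappa^{[t]}(x).
\]
So controlling the orbit of $x$ from above simultaneously controls the orbit of every $y \in [s,x]$.

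Next I would invoke Lemma \ref{lem:kappa1}. Since $x \in \interior \mathcal{X}_k$, that lemma yields two things: the sequence $\{\kappa^{[t]}(x)\}_{t \geq 1}$ is strictly decreasing, and $\lim_{t \to \infty} \kappa^{[t]}(x) = s$. Therefore, given $\epsilon > 0$, there exists a positive finite integer $T$ such that $\kappa^{[T]}(x) < s + \epsilon$, and by monotonicity of the sequence in $t$, $\kappa^{[t]}(x) \leq \kappa^{[T]}(x) < s + \epsilon$ for all $t \geq T$.

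Combining the two steps, for every $y \in [s,x]$ and every $t \geq T$,
\[
s \leq \kappa^{[t]}(y) \leq \kappa^{[t]}(x) < s + \epsilon,
\]
which is exactly the claim $\kappa^{[t]}(y) \in [s, s+\epsilon)$. There is no real obstacle here: the proof is a short two-line chain once one notes that the strict monotonicity of $\kappa$ makes $\kappa^{[t]}$ order-preserving, which turns the pointwise statement of Lemma \ref{lem:kappa1} applied at the single point $x$ into the desired uniform statement on $[s,x]$. The only thing to be careful about is using $\interior \mathcal{X}_k$ (so that Lemma \ref{lem:kappa1} applies to $x$) and making sure $T$ is picked as a positive integer, both of which are immediate.
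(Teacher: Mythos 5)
Your proof is correct and takes essentially the same route as the paper: the paper defines the hitting time $T(y)$ for each $y\in[s,x]$ and asserts $\sup_{y\in[s,x]}T(y)=T(x)$, which rests on exactly the order-preservation of $\kappa^{[t]}$ that you spell out. Your version just makes that monotonicity step explicit before applying Lemma \ref{lem:kappa1} at the single point $x$.
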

\begin{proof}[Proof of Lemma \ref{lem:kappa2}]
Again, we prove only the case in which $s$ is the left endpoint of $\mathcal{X}_k$ as the alternative case is analogous.

Let $x \in \interior \mathcal{X}_k$ and consider some $y \in [s,x]$. We know from Lemma \ref{lem:kappa1} that the sequence $\kappa^{[1]}(y), \kappa^{[2]}(y),...$ is strictly decreasing so we can fix $\epsilon>0$ and define $T(y)$ to be the smallest index $T(y) \geq 1$ such that $\kappa^{[t]}(y) \in [s, s + \epsilon)$ for all $t \geq T(y)$. Then define $T := \sup_{y \in [s,x]} T(y) = T(x)$, which exists and is finite.
\end{proof}

\begin{lemma}\label{lem:q}
Consider an interval $\mathcal{X}_k$ for some $k \in \{0,\dots,K\}$. Suppose $s \in \mathcal{X}_k$ is a fixed point of $\kappa$ and that $s$ is not a stationary point of $q$. If $s$ is the left endpoint of $\mathcal{X}_k$ then there is $\epsilon>0$ such that $q(x) - q(s)$ is positive for all $x \in (s,s+\epsilon)$ or negative for all $x \in (s,s+\epsilon)$. Similarly, if $s$ is the right endpoint of $\mathcal{X}_k$ then there is $\epsilon>0$ such that $q(x) - q(s)$ is positive for all $x \in (s-\epsilon,s)$ or negative for all $x \in (s-\epsilon,s)$.
\end{lemma}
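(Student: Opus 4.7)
The plan is to derive Lemma \ref{lem:q} as an essentially immediate consequence of the hypothesis that $s$ is not a stationary point of $q$, mirroring the argument already used in Remark \ref{rem:h}. Since $q$ is continuously differentiable by assumption, $q'(s)$ is well-defined, and the hypothesis that $s$ is not a stationary point means exactly that $q'(s) \neq 0$. Hence either $q'(s)>0$ or $q'(s)<0$, and the proof will split into these two subcases (each of which will in turn split into the two geometric cases depending on whether $s$ is a left or a right endpoint of $\mathcal{X}_k$).

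If $s$ is the left endpoint of $\mathcal{X}_k$, the conclusion follows by applying Remark \ref{rem:h} directly with $h=q$ and $x=s$: when $q'(s)>0$ the remark yields $\epsilon>0$ such that $q(y)>q(s)$ for all $y\in(s,s+\epsilon)$, and when $q'(s)<0$ it yields $\epsilon>0$ such that $q(y)<q(s)$ for all $y\in(s,s+\epsilon)$. In either case $q(y)-q(s)$ has a constant sign on $(s,s+\epsilon)$, as required.

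If instead $s$ is the right endpoint of $\mathcal{X}_k$, I need the symmetric left-sided analogue of Remark \ref{rem:h}, which is not stated explicitly in the paper. I will prove it by the same argument used in the footnote of Remark \ref{rem:h}: since $q'$ is continuous and $q'(s) \neq 0$, there is $\epsilon>0$ such that $q'$ is of the same (nonzero) sign as $q'(s)$ on the entire interval $(s-\epsilon,s]$. Then, for any $y \in (s-\epsilon,s)$, the mean value theorem yields a point $c \in (y,s)$ with $q(s)-q(y) = q'(c)(s-y)$. Since $s-y>0$ and $q'(c)$ carries the sign of $q'(s)$, the quantity $q(y)-q(s)$ carries the opposite sign of $q'(s)$ and is constant on $(s-\epsilon,s)$, establishing the claim.

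I expect no significant obstacle here: the lemma is a local statement about a continuously differentiable function with a nonzero derivative, and both directions reduce to the first-derivative test. The only mildly non-routine step is formulating the left-sided version of Remark \ref{rem:h}, which is handled by one application of the mean value theorem together with the continuity of $q'$.
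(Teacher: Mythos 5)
Your proposal is correct and follows the paper's own argument exactly: the non-stationarity hypothesis gives $q'(s)\neq 0$, and the conclusion is an application of Remark \ref{rem:h} (the paper dismisses the right-endpoint case as ``analogous,'' which you simply spell out via the same continuity-of-$q'$ plus mean-value-theorem argument used in the remark's footnote). No differences of substance.
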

\begin{proof}[Proof of Lemma \ref{lem:q}]
We prove only the case in which $s$ is the left endpoint of $\mathcal{X}_k$ as the alternative case is analogous.

If $q'(s)>0$ then, by Remark \ref{rem:h}, $q(x)> q(s)$ for all $x \in (s,s+\epsilon)$. An analogous argument applies to $q'(s)<0$.
\end{proof}

\begin{lemma} \label{lem:Q}
Consider an interval $\mathcal{X}_k$ for some $k \in \{0,\dots,K\}$. Suppose $s \in \mathcal{X}_k$ is a fixed point of $\kappa$. If $s$ is not a stationary point of $q$ and $\kappa'(s)<1$ then 
\begin{equation} \label{infproduct}
\lim_{t \rightarrow \infty} Q_m^t[q,\kappa]( x, s)
\end{equation}
converges to a positive real number for each $x \in \{s\} \cup \interior \mathcal{X}_k$.
\end{lemma}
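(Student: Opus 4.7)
The case $x = s$ is immediate, since $\kappa^{[i]}(s) = s$ makes every factor equal to $1$. I therefore fix $x \in \interior \mathcal{X}_k$ and argue that the infinite product of positive terms $\prod_{i=m}^{\infty} q(\kappa^{[i]}(x))/q(s)$ has a positive real limit. The plan is to appeal to the standard criterion for infinite products of positive terms: $\prod (1 + a_i)$ with $a_i > -1$ converges to a nonzero real whenever $\sum_i |a_i| < \infty$. Setting $a_i := q(\kappa^{[i]}(x))/q(s) - 1$, it suffices to show that $\sum_{i=m}^\infty |a_i|$ is finite.

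I would split the work into two estimates. First, I would establish that the iterates $\kappa^{[i]}(x)$ approach $s$ at a geometric rate. Since $\kappa'$ is continuous and $\kappa'(s) < 1$, I can fix any $\lambda \in (\kappa'(s), 1)$ and a compact neighborhood $U$ of $s$ in $\mathcal{X}_k$ on which $\kappa'(\cdot) \leq \lambda$; by the mean value theorem this gives $|\kappa(y) - s| = |\kappa(y) - \kappa(s)| \leq \lambda|y - s|$ for all $y \in U$. By Lemma \ref{lem:kappa2}, there exists a finite $T \geq m$ such that $\kappa^{[i]}(x) \in U$ for all $i \geq T$. Iterating the contraction yields $|\kappa^{[i]}(x) - s| \leq \lambda^{i - T}|\kappa^{[T]}(x) - s|$ for all $i \geq T$.

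Second, I would bound the increment $|q(\kappa^{[i]}(x)) - q(s)|$. Continuous differentiability of $q$ together with compactness of $U$ implies that $|q'|$ is bounded by some $L < \infty$ on $U$, so by the mean value theorem $|q(\kappa^{[i]}(x)) - q(s)| \leq L|\kappa^{[i]}(x) - s|$ whenever $\kappa^{[i]}(x) \in U$. Combining with the previous step gives $|a_i| \leq (L / q(s))\,\lambda^{i - T}|\kappa^{[T]}(x) - s|$ for $i \geq T$, a summable geometric tail; the finitely many terms with $m \leq i < T$ contribute a finite amount, so $\sum_{i=m}^\infty |a_i| < \infty$ as required, and the product converges to a positive real. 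The hypothesis that $s$ is not a stationary point of $q$ is not directly used for summability, but is available via Lemma \ref{lem:q} if one prefers to invoke the monotone form of the infinite product criterion, since it ensures the factors are eventually of constant sign.

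The main obstacle I anticipate is technical rather than conceptual: choosing a single neighborhood $U$ of $s$ in which the contraction estimate $|\kappa(y) - s| \leq \lambda|y - s|$, the Lipschitz bound for $q$, and the containment guaranteed by Lemma \ref{lem:kappa2} all hold simultaneously, so that the geometric decay can be bootstrapped cleanly from iteration index $T$ onward without losing control of any factors or boundary cases.
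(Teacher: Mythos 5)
Your proof is correct, but it takes a genuinely different route from the paper's. The paper reduces convergence of the product to convergence of $\sum_i r_i$ with $r_i = \bigl(q(\kappa^{[i]}(x)) - q(s)\bigr)/q(s)$ and then applies the ratio test, computing $\lim_i |r_{i+1}/r_i| = \lim_{y\to s}\bigl|\tfrac{q(\kappa(y))-q(s)}{q(y)-q(s)}\bigr| = |\kappa'(s)|<1$ via l'H\^{o}pital's rule; that computation genuinely needs the hypothesis that $s$ is not a stationary point of $q$ (both to ensure $q'(s)\neq 0$ in the denominator and, via Lemma \ref{lem:q}, to ensure the $r_i$ are eventually nonzero so the ratio test is applicable). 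You instead establish summability of $\sum_i|a_i|$ directly: a local contraction estimate $|\kappa(y)-s|\leq\lambda|y-s|$ on a compact neighborhood where $\kappa'\leq\lambda<1$, combined with Lemma \ref{lem:kappa2} to enter that neighborhood after finitely many steps and a Lipschitz bound on $q$, gives a geometric tail. Your argument is more elementary (no l'H\^{o}pital, no ratio test on possibly vanishing terms) and strictly more general, since it does not use the non-stationarity of $q$ at $s$ at all --- you correctly note this. The paper's version buys brevity under the stronger hypothesis (which it needs elsewhere, e.g.\ in Lemma \ref{lem:D}), while yours shows the convergence claim of this particular lemma survives without it. One small point worth making explicit if you write this up: choose $U=[s,s+\epsilon]$ (or $[s-\epsilon,s]$) as an interval with $s$ as an endpoint, so that the mean-value point $c$ between $\kappa^{[i]}(x)$ and $s$ automatically lies in $U$ and the monotonicity from Lemma \ref{lem:kappa1} keeps all later iterates inside $U$; this resolves the ``single neighborhood'' concern you raise at the end.
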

\begin{proof}[Proof of Lemma \ref{lem:Q}]
Convergence of \eqref{infproduct} at $x = s$ itself is obvious since all factors in the product are equal to 1. Now consider $x \in \interior \mathcal{X}_k$. Convergence of \eqref{infproduct} to a positive real number is equivalent to convergence of $\sum_{i=m}^\infty  r_i$ where 
\[
r_i := \frac{q(x_i) - q(s)}{q(s)} 
\]
and $x_i := \kappa^{[i]}(x)$. By the ratio test, the sum converges if $\lim_{i \rightarrow \infty} \left| r_{i+1}/r_i\right|<1$. We have that
\[
\lim_{i \rightarrow \infty} \left| \frac{r_{i+1}}{r_i}\right| = \lim_{i \rightarrow \infty} \left|\frac{q(\kappa(x_{i})) - q(s)}{q(x_i) - q(s)} \right| = \lim_{x \rightarrow s} \left|\frac{q(\kappa(x)) - q(s)}{q(x) - q(s)} \right| ,
\]
where the final step follows from Lemma \ref{lem:kappa1}. By l'Hopital's rule we have\footnote{We employ a particularly simple form of the rule: if $g$ and $h$ are continuously differentiable on an interval containing $c$, $g(c)=h(c)=0$, and $h'(c)\neq 0$, then $\lim_{x \rightarrow s} g(x)/h(x) = \lim_{x \rightarrow s} g'(x)/h'(x) = g'(c)/h'(c)$.}
\[
\lim_{x \rightarrow s} \left|\frac{q(\kappa(x)) - q(s)}{q(x) - q(s)} \right| = \lim_{x \rightarrow s} \left| \frac{q'(\kappa(x)) \kappa'(x)}{q'(x)} \right| = \left| \frac{q'(\kappa(s)) \kappa'(s)}{q'(s)} \right| = \left| \kappa'(x) \right| < 1,
\]
which completes the proof.
\end{proof}

\begin{lemma} \label{lem:D}
Consider an interval $\mathcal{X}_k$ for some $k \in \{0,\dots,K\}$. Suppose $s \in \mathcal{X}_k$ is a fixed point of $\kappa$. If $s$ is not a stationary point of $q$, $q$ is endpoint maximal at $s$ in $\mathcal{X}_k$, and $\kappa'(s)<1$, then for any fixed integer $m\geq1$ we have $1 \leq D_m < \infty$, where
\[
D_m := \sup_{x\in \mathcal{X}_k} \sup_{t \geq m} Q_{m}^t[q,\kappa](x,s).
\]
\end{lemma}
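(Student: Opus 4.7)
The lower bound is immediate: taking $x = s$, every factor $q(\kappa^{[i]}(s))/q(s)$ equals $1$, so $Q_m^t[q,\kappa](s,s) = 1$ for all $t \geq m$ and hence $D_m \geq 1$. The substance is the upper bound $D_m < \infty$, which I plan to obtain by uniformly bounding $\log Q_m^t[q,\kappa](x,s) = \sum_{i=m}^{t} \psi(\kappa^{[i]}(x))$ from above over $(x,t)$, where $\psi(y) := \log q(y) - \log q(s)$.

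Assume without loss of generality that $s$ is the left endpoint of $\mathcal{X}_k$ and write $b$ for the other (right) endpoint, possibly $+\infty$. Three ingredients on well-chosen subregions of $\mathcal{X}_k$ drive the bound. First, since $\kappa$ and $q$ are $C^1$ with $\kappa'(s) < 1$ and $q > 0$, I would choose $\epsilon_0 > 0$ small enough that, on $[s, s+\epsilon_0]$, both the Lipschitz bound $|\psi(y)| \leq L(y-s)$ (from the mean value theorem) and the geometric contraction $\kappa(y) - s \leq \lambda'(y-s)$ (from a Taylor expansion of $\kappa$ around $s$ together with continuity of $\kappa'$) hold for some fixed $\lambda' \in (\kappa'(s), 1)$. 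Second, endpoint maximality combined with continuity of $q$ supplies $\epsilon_1 > 0$ (or a threshold $R$ when $b = +\infty$) such that $\psi(y) < 0$ on the portion of $\mathcal{X}_k$ within $\epsilon_1$ of $b$ (respectively on $[R, \infty)$). Third, the remaining middle region is a compact subinterval of $(s,b)$ strictly bounded away from every fixed point of $\kappa$; there the minimum speed $\delta := \inf\,(y - \kappa(y))$ is strictly positive (because $\kappa(y) < y$ throughout $(s,b)$, which follows from $\kappa'(s) < 1$ and the absence of other fixed points of $\kappa$ in $(s,b)$), and $\psi$ is bounded by continuity.

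Because the orbit $\{\kappa^{[i]}(x)\}$ is either constant (at $s$ or $b$) or strictly decreasing with limit $s$ (for $x \in (s,b)$, by Lemma \ref{lem:kappa1}), I would decompose $\sum_{i=m}^t \psi(\kappa^{[i]}(x))$ by the subregion of each iterate. The near-$b$ block contributes nonpositively since $\psi < 0$ there. The middle block has at most $\lceil (b-s)/\delta \rceil$ indices by the minimum-speed bound, independently of $x$ and $t$, each contributing at most $\sup|\psi|$ on the (compact) middle region, which is finite. Once the orbit enters $[s, s+\epsilon_0]$ at some index $i_0$ it remains there (since $\kappa$ maps $[s, s+\epsilon_0]$ into itself), so iterating the contraction gives $\kappa^{[i]}(x) - s \leq \epsilon_0 (\lambda')^{i-i_0}$ for $i \geq i_0$, and the Lipschitz bound converts the near-$s$ block contribution into a convergent geometric series bounded by $L\epsilon_0/(1-\lambda')$, again independent of $x$ and $t$. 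Summing the three bounds yields a uniform upper bound on $\log Q_m^t[q,\kappa](x,s)$, proving $D_m < \infty$.

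The crux is the behavior for $x$ close to $b$: such orbits linger near $b$ for arbitrarily many iterations before descending, so Lemma \ref{lem:kappa2} alone provides no uniform time-to-enter-$[s, s+\epsilon_0]$ over all of $\mathcal{X}_k$. Endpoint maximality is exactly what defuses this obstacle, since it forces $\psi < 0$ in a fixed neighborhood of $b$ so that those slow iterations push the sum downward rather than blow it up. The symmetric case where $s$ is the right endpoint of $\mathcal{X}_k$ is handled by the same argument with reversed sign conventions.
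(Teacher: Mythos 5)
Your proof is correct and follows essentially the same route as the paper's: both rest on a three-region decomposition of $\mathcal{X}_k$ in which endpoint maximality forces the factors near the far endpoint to be at most one, the transit through the compact middle region contributes only boundedly many factors, and the contribution near $s$ is controlled by the contraction $\kappa'(s)<1$ together with $q'(s)\neq 0$. The only differences are technical: where the paper invokes Lemma \ref{lem:kappa2} for a uniform entry time and Lemma \ref{lem:Q} (via the ratio test) for the near-$s$ product, you substitute a minimum-speed count and an explicit geometric-contraction-plus-Lipschitz estimate in log space, which is equally valid.
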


\begin{proof}[Proof of Lemma \ref{lem:D}]
For $x \in \mathcal{X}_k$ define 
\[
\Delta(x) := \sup_{t \geq m} Q_{m}^t[q,\kappa](x,s) .
\]
We know that $D_m \geq 1$ since $\Delta(x)=1$ whenever $x = s$.

It remains for us to show that $D_m < \infty$. If $q(x) \leq q(s)$ for all $x \in \mathcal{X}_k$ then each factor of $\Delta(x)$ is at most 1, so $D_m$ is finite. Assume instead, for the rest of the proof, that $q(x) > q(s)$ for some $x \in \mathcal{X}_k$ and define
\[
\bar{s}:= \sup \{x \in \mathcal{X}_k : q(x) > q(s)\} .
\]
Without loss of generality, let $s$ be the left endpoint of $\mathcal{X}_k$ so that $s=s_k$. Since $q$ is endpoint maximal at $s$ we know that $\bar{s} < s_{k+1}$. Moreover, since $s$ is not a stationary point of $q$, by Lemma \ref{lem:q}, there exists $\epsilon>0$ with the property that $q(x) - q(s)$ is negative all $x \in (s,s+\epsilon)$, or positive for all $x \in (s,s+\epsilon)$.\footnote{\label{fn:tech}Assuming that $s$ is not a stationary point guarantees the existence of such an $\epsilon$ by ruling out pathological cases. For example, if we define $\mu:[0,1] \to \mathbb{R}$ to be the function $\mu(x) = x^3 \sin(1/x)$ for $x>0$ and $\mu(x)=0$ for $x=0$, then $\mu$ is continuously differentiable over its domain but it is not possible to find an $\epsilon>0$ such that $\mu(x) - \mu(0)$ is negative for all $x \in (0,\epsilon)$ or positive for all $x \in (0,\epsilon)$. This is because for any finite interval $(0,\epsilon)$, $\mu$ has an infinite number of zeros within the interval.} Moreover, defining $\underline{s}:=s + \epsilon$ we can choose this $\epsilon$ to be sufficiently small so that $s = s_k < \underline{s} < \bar{s} < s_{k+1}$. We can therefore split the interval $\mathcal{X}_k$ into three parts and write
\[
D_m = \max \{ \sup_{x\in [s,\underline{s}]}\Delta(x), \sup_{x\in [\underline{s},\bar{s}]}\Delta(x), \sup_{x\in [\bar{s},s_{k+1}] }\Delta(x)\}
\]
We now show that the supremum of $\Delta$ over each of the three intervals is finite.

(a) $\sup_{x\in [s,\underline{s}]}\Delta(x)$ is finite. By Lemma \ref{lem:q} we have two cases. Either $q(x) \leq q(s)$ for each $x \in [s,\underline{s}]$ in which case each factor in the expression for $\Delta(x)$ is at most 1 and so $\sup_{x\in [s,\underline{s}]}\Delta(x)$ is finite. Or $q(x) > q(s)$ for each $x \in (s,\underline{s}]$, so each factor in the expression for $\Delta(x)$ exceeds 1. In this case, for each $x \in [s,\underline{s}]$,
\[
\Delta(x) = \sup_{t\geq m}\prod_{i=m}^t \frac{q(\kappa^{[i]}(x))}{q(s)} = \lim_{t \rightarrow \infty} \prod_{i=m}^t \frac{q(\kappa^{[i]}(x))}{q( s )},
\]
but by Lemma \ref{lem:Q} the limit converges to a positive real number and is therefore finite.

(b) $\sup_{x\in [\underline{s},\bar{s}]}\Delta(x)$ is finite. By Lemma \ref{lem:kappa2} there is an finite integer $T$ such that for all $x \in [\underline{s},\bar{s}]$, $\kappa^{[i]}(x) \in [s,\underline{s}]$ for all $i \geq T$. We can therefore write
\begin{align}
\sup_{x\in [\underline{s},\bar{s}]}\Delta(x) &= \sup_{x\in [\underline{s},\bar{s}]} \sup_{t \geq m} \left\{ \left( \prod_{i=m}^{T} \frac{q(\kappa^{[i]}(x))}{q( s )} \right) \times \left( \prod_{i=T+1}^t \frac{q(\kappa^{[i]}(x))}{q( s )} \right) \right\} \nonumber \\
&\leq  M \times \sup_{x\in [\underline{s},\bar{s}]} \sup_{t>m}  \prod_{i=T+1}^t \frac{q(\kappa^{[i]}(x))}{q( s )} \label{eq:finite_prod}
\end{align}
The second line follows from the fact that we can find some $M<\infty$ to bound the first product because the product runs over a finite number of indices and each factor is bounded due to the boundedness of $q$. Each $\kappa^{[i]}(x)$ in \eqref{eq:finite_prod} belongs to the interval $[s,\underline{s}]$ and therefore \eqref{eq:finite_prod} is finite by the argument (a) given above.

(c) $\sup_{x\in [\bar{s},s_{k+1}]}\Delta(x)$ is finite. All factors $q(y)/q(s)$ in the expression for $\Delta(x)$ such that $y \in [\bar{s},s_{k+1}]$ are at most 1, so $\sup_{x\in [\bar{s},s_{k+1}] }\Delta(x) \leq \sup_{x\in [\underline{s},\bar{s}]}\Delta(x)$, and the latter is finite by argument (b) above.
\end{proof}

\begin{lemma}\label{lem:int}
Consider an interval $\mathcal{X}_k$ for some $k \in \{0,\dots,K\}$. Suppose $s \in \mathcal{X}_k$ is a fixed point of $\kappa$. If $s$ is not a stationary point of $q$, $q$ is endpoint maximal at $s$ in $\mathcal{X}_k$, $\kappa'(s)<1$, and $Q_m^t[q,\kappa](\cdot,s)$ is integrable on $\mathcal{X}_k$ for some $t \geq m$, then for any positive, continuous, and bounded function $\psi : \mathcal{X} \rightarrow (0,M)$ where $M < \infty$,
\[
\lim_{t \rightarrow \infty} \int_{\mathcal{X}_k} \psi(\kappa^{[t]}(x)) Q_m^t[q,\kappa](x,s) dx = \int_{\mathcal{X}_k} \psi(s) \lim_{t \rightarrow \infty} Q_m^t[q,\kappa](x,s) dx  .
\]
\end{lemma}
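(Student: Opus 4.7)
The plan is to invoke the Dominated Convergence Theorem (DCT), which requires two ingredients: pointwise convergence of the integrand almost everywhere on $\mathcal{X}_k$, and a $t$-uniform integrable dominating function. I expect the second of these to be the main obstacle, since the hypothesis only gives integrability of $Q_m^t[q,\kappa](\cdot,s)$ at a single $t$ rather than uniformly in $t$.

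For the pointwise limit, I would first note that, by the structure of the intervals $\mathcal{X}_k$ in Assumption \ref{as2}, any fixed point $s$ of $\tau$ lying in $\mathcal{X}_k$ is an endpoint of $\mathcal{X}_k$. Combined with $\kappa'(s) < 1$, Lemma \ref{lem:kappa1} gives $\kappa^{[t]}(x) \to s$ for every $x \in \interior \mathcal{X}_k$, and continuity of $\psi$ then yields $\psi(\kappa^{[t]}(x)) \to \psi(s)$. Lemma \ref{lem:Q} simultaneously gives convergence of $Q_m^t[q,\kappa](x,s)$ to a positive real number for every $x \in \{s\} \cup \interior \mathcal{X}_k$. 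Since this set differs from $\mathcal{X}_k$ by at most one boundary point (a Lebesgue-null set), the integrand converges a.e.\ on $\mathcal{X}_k$ to $\psi(s) \lim_{t \to \infty} Q_m^t[q,\kappa](x,s)$.

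For the dominating function, I would exploit the multiplicative structure of $Q_m^t$. Let $t_0 \geq m$ be an index furnished by the hypothesis, so that $Q_m^{t_0}[q,\kappa](\cdot,s)$ is integrable on $\mathcal{X}_k$. For any $t > t_0$, factor
\[
Q_m^t[q,\kappa](x,s) \;=\; Q_m^{t_0}[q,\kappa](x,s) \cdot Q_{t_0+1}^t[q,\kappa](x,s),
\]
and apply Lemma \ref{lem:D} with starting index $t_0+1$ (the lemma's hypotheses continue to hold, as they depend only on $q$, $\kappa$, and $s$, not on the starting index) to obtain a finite constant $D_{t_0+1}$ bounding the second factor uniformly in $x \in \mathcal{X}_k$ and $t \geq t_0+1$. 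Since $0 < \psi \leq M$, this yields the uniform majorant
\[
\psi(\kappa^{[t]}(x)) \cdot Q_m^t[q,\kappa](x,s) \;\leq\; M \cdot D_{t_0+1} \cdot Q_m^{t_0}[q,\kappa](x,s),
\]
valid for all $t \geq t_0 + 1$, and the right-hand side is integrable on $\mathcal{X}_k$ by the hypothesis on $t_0$.

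Applying DCT to the tail sequence $t \geq t_0 + 1$ then gives the conclusion, and the limit of the full sequence agrees with that of the tail. The substantive step is the factorization trick that separates a fixed integrable prefix $Q_m^{t_0}$ from the infinite tail $Q_{t_0+1}^t$: it converts the single-index integrability hypothesis into a genuine $t$-uniform integrable bound, via Lemma \ref{lem:D}. Everything else is bookkeeping assembling the existing lemmas.
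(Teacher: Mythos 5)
Your proposal is correct and follows essentially the same route as the paper's proof: pointwise convergence via Lemmas \ref{lem:kappa1} and \ref{lem:Q}, and a $t$-uniform integrable majorant obtained by factoring $Q_m^t$ into the integrable prefix $Q_m^{T}$ and a tail bounded by the constant $D_{T+1}$ from Lemma \ref{lem:D}, followed by dominated convergence. The factorization you identify as the substantive step is exactly the one the paper uses.
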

\begin{proof}[Proof of Lemma \ref{lem:int}]
From Lemma \ref{lem:kappa1} and continuity of $\psi$, we have that $\lim_{t \rightarrow \infty} \psi(\kappa^{[t]}(x)) = \psi(s)$ for all $x \in \interior \mathcal{X}_t$. Lemma \ref{lem:Q} guarantees that $\lim_{t \rightarrow \infty} Q_m^t[q,\kappa](x,s)$ exists and is positive for each $x \in \interior \mathcal{X}_t$.

Suppose $Q_m^{T}[q,\kappa](\cdot,s)$ is integrable on $\mathcal{X}_t$ some $T \geq m$. For all $t > T$, we can write
\begin{align*}
\psi(\kappa^{[t]}(x)) \, Q_m^t[q,\kappa]( x, s) &= \psi(\kappa^{[t]}(x)) \, Q_m^T[q,\kappa](x,s) \prod_{i=T +1}^t \frac{q(\kappa^{[i]}(x))}{q( s )} \\
&\leq M \, Q_m^{T}[q,\kappa](x,s) \, D_{T+1} .
\end{align*}
where $D_{T+1} := \sup_{x \in \mathcal{X}_t} \sup_{t \geq T+1} \prod_{i=T +1}^t \frac{q(\kappa^{[i]}(x))}{q( s )}$. By Lemma \ref{lem:D}, we have $1 \leq D_{T+1} < \infty$. It follows that the upper bound $M Q_m^{T}[q,\kappa](x,s)  D_T$ is integrable and pointwise dominates $\psi(\kappa^{[t]}(x)) Q_m^t[q,\kappa]( x, s)$ for all $t \geq T$. By the dominated convergence theorem, we therefore have that $\psi( s ) \lim_{t \rightarrow \infty} Q_m^t[q,\kappa](\cdot,s)$ is integrable and its integral is given by
\[
\lim_{t \rightarrow \infty} \int_{\mathcal{X}_t} \psi(\kappa^{[t]}(x)) Q_m^t[q,\kappa](x,s) dx = \int_{\mathcal{X}_t} \psi( s ) \lim_{t \rightarrow \infty} Q_m^t[q,\kappa](x,s) dx  ,
\]
which completes the proof.
\end{proof}

\begin{proof}[Proof of Proposition \ref{prop:parallel}]
Consider an interval $\mathcal{X}_k$ for some $k \in \{0,\dots,K\}$. Suppose $s \in \mathcal{X}_k$ is a fixed point of the transmission function $\tau$. Observe that since we assume that $G_t(\cdot,s)$ is integrable on $\mathcal{X}$ for some $t \geq 1$, it is also integrable on $\mathcal{X}_k$ for some $t\geq 1$.

Part (i). With the substitutions $\kappa = \rho$ and $q(\cdot) = n(\cdot)/\tau'(\cdot)$ and we have that
\[
G_t(x,s) = Q_1^t[n/\tau',\rho](x,s).
\]
Since $\tau$ satisfies Assumption \ref{as2}, $\rho$ is continuously differentiable and strictly increasing. And since $s$ is a source, $\tau'(s)>1$ so $\rho'(s)<1$.   Moreover, since $\tau$ and $n$ satisfy Assumptions \ref{as1} and \ref{as2} respectively, the function $n(\cdot)/\tau'(\cdot)$ is positive, bounded, and continuously differentiable. With the substitutions $\kappa = \rho$ and $q(\cdot) = n(\cdot)/\tau'(\cdot)$ we therefore satisfy all the conditions of Lemmas \ref{lem:kappa1} through \ref{lem:int}.

By Lemma \ref{lem:Q}, $\lim_{t \rightarrow \infty} G_t(x,s)$ converges for each $x \in \{s\} \cup \interior \mathcal{X}_k$. Moreover, by Lemma \ref{lem:int}, $\lim_{t \rightarrow \infty} G_t(\cdot,s)$ is integrable and its integral is given by
\[
\lim_{t \rightarrow \infty} \int_{\mathcal{X}_k} G_t(x,s) dx = \int_{\mathcal{X}_k} \lim_{t \rightarrow \infty}G_t(x,s) dx .
\]

We first prove that a steady state density on $\mathcal{X}_k$ is proportional to $\lim_{t \rightarrow \infty} G_t(\cdot,s)$. Consider a steady state density $f_k^*$. By equation \eqref{eq:evolution_k} and Remark \ref{cor:growth2}, if $f_k^*(s)>0$, we have that
\[
f_k^*(x) = \frac{1}{\mathbb{E}_k^*[n]}  \frac{n(\rho(x))}{\tau'(\rho(x))} f_k^*(\rho(x)).
\]
Clearly, setting 
\[
f_k^*(x) = \frac{\lim_{t \rightarrow \infty} G_t(x,s)}{\int_{\mathcal{X}_k}\lim_{t \rightarrow \infty} G_t(y,s) dy}
\]
satisfies the above and $f_k^*$ is a density since $\lim_{t \rightarrow \infty} G_t(x,s)$ exists and is integrable.

Now, we turn to convergence. By equation \eqref{eq:convK} {\color{blue}},
\begin{align*}
    \lim_{t \rightarrow \infty} f_{k,t}(x) = \lim_{t \rightarrow \infty} \frac{G_t(x,s) f_{0}(\rho^{[t]}(x))}{\int_{\mathcal{X}_k}G_t(y,s) f_{0}(\rho^{[t]}(y)) dy} =  \frac{\lim_{t \rightarrow \infty} G_t(x,s) f_{0}(s)}{\int_{\mathcal{X}_k} \lim_{t \rightarrow \infty} G_t(y,s) f_{0}(s) dy} = f_k^*(x).
\end{align*}
Pulling the limit into the integral follows Lemma \ref{lem:int}.

Finally, from Lemma \ref{lem:int} again, and from equation \eqref{eq:conK2} we have that
\begin{align*}
\lim_{t \to \infty} \left(\frac{\tau'(s)}{n(s)}\right)^t \prod_{i=0}^{t} \mathbb{E}_{k,i}[n] &= \lim_{t \to \infty} \int_{\mathcal{X}_k} n(x) G_t(x,s) f_0(\rho^{[t]}(x)) dx \\
&=  \int_{\mathcal{X}_k} n(x) \lim_{t \to \infty} G_t(x,s) f_0(s) dx 
\end{align*}
which is positive and finite, thus proving \eqref{eq:prodlim1}.

Part (ii). To prove this part, we make the substitutions $\kappa = \tau$ and $q= n$. With these substitutions,
\[
Q_0^{t-1}[n,\tau](x,s) = \prod_{i=0}^{t-1}  \frac{n(\tau^{[i]}(y))}{n(s)} .
\]
Since $\tau$ satisfies Assumption \ref{as2}, it is continuously differentiable and strictly increasing. And since $s$ is a sink, $\tau'(s)<1$. Moreover, since $n$ satisfies Assumption \ref{as1}, the function is positive, bounded, and continuously differentiable. With the substitutions $\kappa = \tau$ and $q= n$ we therefore satisfy all the conditions of Lemmas \ref{lem:kappa1} through \ref{lem:D}.

By Lemma \ref{lem:Q}, $\lim_{t \rightarrow \infty} Q_0^{t-1}[n,\tau](x,s)$ converges for each $x \in \{s\} \cup \interior \mathcal{X}_k$. Moreover, $Q_0^{t-1}[n,\tau](\cdot,s)$ is integrable over $\mathcal{X}_k$ for some $t\geq 1$ because 
\begin{align}\label{eq:transformation}
\int_{\mathcal{X}_k} G_t(x,s) dx = \int_{\mathcal{X}_k} Q_1^t[n/\tau',\rho](x,s) dx &= \int_{\mathcal{X}_k} \prod_{i=1}^t \frac{n(\rho^{[i]}(x))}{\tau'(\rho^{[i]}(x))} \frac{\tau'(s)}{n(s)} dx \nonumber\\
&=\int_{\mathcal{X}_k} \prod_{i=0}^{t-1} n(\tau^{[i]}(y)) \frac{\tau'(s)}{n(s)} dy \nonumber\\
&= \left( \tau'(s) \right)^t \int_{\mathcal{X}_k} Q_0^{t-1}[n,\tau](x,s) dx
\end{align}
and we assumed that $G_t(\cdot,s)$ is integrable over $\mathcal{X}_k$ for some $t\geq 1$. We therefore also satisfy the conditions of Lemma \ref{lem:int} so $\lim_{t \rightarrow \infty} Q_0^{t-1}[n,\tau](\cdot,s)$ is integrable and its integral is given by
\[
\lim_{t \rightarrow \infty} \int_{\mathcal{X}_k} Q_0^{t-1}[n,\tau](x,s) dx = \int_{\mathcal{X}_k} \lim_{t \rightarrow \infty} Q_0^{t-1}[n,\tau](x,s)dx.
\]
The key step in the derivation of \eqref{eq:transformation} uses the transformation of variables $y = \rho^{[t]}(x)$ and notice that $dx = dy \prod_{i=0}^{t-1} \tau'(\tau^{[i]}(y))$.

Next, we show that for any continuous and bounded test function $\phi:\mathcal{X} \rightarrow \mathbb{R}$,
\[
  \lim_{t\rightarrow\infty} \int_{\mathcal{X}_k} \phi(x)f_{k,t}(x) dx = \phi(s) .
\]
which would imply that $f_{k,t}$ converges to the Dirac delta function centered at $s$. By equation \eqref{eq:convK}, and employing the same transformation of variables as above,
\begin{align}
\lim_{t \rightarrow \infty} \int_{\mathcal{X}_k} \phi(x) f_{k,t}(x) dx &= \lim_{t \rightarrow \infty} \int_{\mathcal{X}_k} \phi(x) \frac{ f_{0}(\rho^{[t]} (x)) G_t(x,s) }{\int_{\mathcal{X}_k} f_{0}(\rho^{[t]} (z)) G_t(z,s) dz} dx \nonumber \\
& =  \lim_{t \rightarrow \infty} \int_{\mathcal{X}_k} \phi(\tau^{[t]}(y)) \frac{f_{0}(y) Q_0^{t-1}[n,\tau](y,s) } {\int_{\mathcal{X}_k} f_{0}(z) Q_0^{t-1}[n,\tau](z,s) dz} dy \nonumber \\
& =  \int_{\mathcal{X}_k}  \phi(s) \frac{f_{0}(y) \lim_{t \rightarrow \infty} Q_0^{t-1}[n,\tau](y,s) } {\int_{\mathcal{X}_k} f_{0}(z) \lim_{t \rightarrow \infty} Q_0^{t-1}[n,\tau](z,s) dz} dy
= \phi(s) . \nonumber
\end{align}
Pulling the limit into the integral follows Lemma \ref{lem:int}.

Finally, from equation \eqref{eq:conK2} and the same transformation of variables again, we have
\begin{align*}
\lim_{t \to \infty} \left(\frac{1}{n(s)}\right)^t \prod_{i=0}^{t} \mathbb{E}_{k,i}[n] &= \lim_{t \to \infty} \left(\frac{1}{\tau'(s)}\right)^t \int_{\mathcal{X}_k} n(x) G_t(x,s) f_0(\rho^{[t]}(x)) dx \\
&= \lim_{t \to \infty} \int_{\mathcal{X}_k} n(\tau^{[t]}(y)) Q_0^{t-1}[n,\tau](y,s) f_0(y) dy
\end{align*}
which is positive and finite, thus proving \eqref{eq:prodlim2}.
\end{proof}

We are finally in a position to prove Theorem \ref{thm}.
\begin{proof}[Proof of Theorem \ref{thm}]
Since the primitives are well-behaved and nice, we know by Proposition \ref{prop:parallel}, that for each $k \in \{0,\dots,K\}$, $F_{k,t}$ converges to a steady state $F_k^*$. We then employ Proposition \ref{prop:recover} to recover the steady state of the population process \eqref{main_nonlin} from the steady states of the parallel processes.

(i) Suppose $s_k^*$ is a source. We start by showing that niceness and genericity imply that $n(\cdot)/\tau'(\cdot)$ is endpoint maximal at $s_k^*$ in $\mathcal{X}_{k^*-1}$ when it is non-empty and in $\mathcal{X}_{k^*}$ when it is non-empty. Suppose that $\mathcal{X}_{k^*-1}$ is not empty. If $n(\cdot)$ were endpoint maximal at $s_{k^*-1}$ then we would have 
\[
\lim_{x \to s_{k^*-1}}n(x) > n(s_{k^*}) > \frac{n(s_{k^*})}{\tau'(s_{k^*})},
\]
where the final inequality follows from $k^*$ being a source. But this violates genericity. By niceness, we therefore must have that $n(\cdot)/\tau'(\cdot)$ is endpoint maximal at $s_k^*$ in $\mathcal{X}_{k^*-1}$. An analogous argument applies in the case in which $\mathcal{X}_{k^*}$ is not empty.

Since the primitives are nice in each interval $\mathcal{X}_k$, by Proposition \ref{prop:parallel} we have that $\lim_{t \to \infty} \mathbb{E}_{k,t}[n] = \mathbb{E}_k^*[n]$ for each $k\in\{0,\dots,K\}$. Each such $\mathbb{E}_k^*[n]$ is equal either to $n(\cdot)$ evaluated at a sink in $\mathcal{X}_k$ or to $n(\cdot)/\tau'(\cdot)$ evaluated at a source in $\mathcal{X}_k$. In particular, our argument above implies that for $k \in \{k^*-1,k^*\}$, $\mathbb{E}_{k}^*[n] =n (s_{k^*})/\tau'(s_{k^*}) $ but, by genericity, we have that for any $k \not \in \{k^*-1,k^*\}$, $\mathbb{E}_k^*[n]$ is strictly less than $n(s_{k^*})/\tau'(s_{k^*})$. From this, equation \eqref{eq:w}, and equations \eqref{eq:prodlim1} and \eqref{eq:prodlim2} in Proposition \ref{prop:parallel} we conclude that $\lim_{t\to \infty} w_{k,t}$ exists and is a positive constant for $k \in \{k^*-1,k^*\}$ for which $\mathcal{X}_k \neq \emptyset$ but it is zero for any $k \not\in\{k^*-1,k^*\}$. From \eqref{eq:recover} and Proposition \ref{prop:parallel}, we conclude that $\lim_{t\to \infty} f_t(x) \propto \lim_{t \to \infty} f_{k,t}(x) \propto \lim_{t \to \infty} G_t(x,s_{k^*})$ for $k \in \{k^*-1,k^*\}$ for which $\mathcal{X}_k \neq \emptyset$, and $\lim_{t\to \infty} f_t(x) =0$ for $k \not\in \{k^*-1,k^*\}$.

(ii) An analogous argument applies when $k^*$ is a sink.
\end{proof}

\subsection{Proofs for Section \ref{sec:comp_stats}}

\begin{lemma}\label{lem:comp_stat}
Suppose $\hat{F}_{t}$ and $F_{t}$ admit the densities $\hat{f}_{t}$ and $f_{t}$, respectively. If $\hat{f}_{t} \succsim f_{t}$ and $\hat{n} \succsim n$ then $\hat{f}_{t+1} \succsim f_{t+1}$, and if either ordering in the antecedent is strict then the consequent is also strict.
\end{lemma}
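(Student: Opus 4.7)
The plan is to apply Proposition \ref{prop:density_dynamic} directly to both processes, form the pointwise ratio $\hat{f}_{t+1}(x)/f_{t+1}(x)$, and observe that after cancellations the ratio factors into quantities whose monotonicity is already given.

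Concretely, by Proposition \ref{prop:density_dynamic},
\[
\frac{\hat{f}_{t+1}(x)}{f_{t+1}(x)} \;=\; \frac{\mathbb{E}_{t}[n]}{\mathbb{E}_{t}[\hat{n}]} \cdot \frac{\hat{n}(\rho(x))}{n(\rho(x))} \cdot \frac{\hat{f}_{t}(\rho(x))}{f_{t}(\rho(x))},
\]
since the common factor $\tau'(\rho(x))$ cancels between numerator and denominator. The first factor on the right is a positive constant that does not depend on $x$, so only the last two factors matter for monotonicity in $x$.

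Next I would invoke Assumption \ref{as2}: $\rho=\tau^{-1}$ exists and is strictly increasing on $\mathcal{X}$. Composition with a strictly increasing function preserves (weak or strict) monotonicity. The hypotheses $\hat{n}\succsim n$ and $\hat{f}_t\succsim f_t$ state precisely that the ratios $\hat{n}/n$ and $\hat{f}_t/f_t$ are non-decreasing on $\mathcal{X}$; therefore $\hat{n}(\rho(\cdot))/n(\rho(\cdot))$ and $\hat{f}_t(\rho(\cdot))/f_t(\rho(\cdot))$ are non-decreasing on $\mathcal{X}$. A product of non-decreasing positive functions is non-decreasing, so $\hat{f}_{t+1}/f_{t+1}$ is non-decreasing, i.e.\ $\hat{f}_{t+1}\succsim f_{t+1}$.

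The only delicate bit is preserving strictness, so I would handle that last. Suppose $\hat{f}_t\succ f_t$ (the case $\hat{n}\succ n$ is identical). Then there exist $z<y$ in $\mathcal{X}$ with $\hat{f}_t(z)/f_t(z) < \hat{f}_t(y)/f_t(y)$. Because $\rho$ is a strictly increasing bijection on $\mathcal{X}$, the points $x':=\tau(z)<\tau(y)=:x$ satisfy $\rho(x')=z$ and $\rho(x)=y$; at these points the factor $\hat{f}_t(\rho(\cdot))/f_t(\rho(\cdot))$ strictly increases while the factor $\hat{n}(\rho(\cdot))/n(\rho(\cdot))$ is non-decreasing and positive, so the product strictly increases, giving $\hat{f}_{t+1}\succ f_{t+1}$. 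I do not expect a genuine obstacle here — the whole argument is an exercise in bookkeeping once Proposition \ref{prop:density_dynamic} is in hand; the only point that warrants care is verifying that strict MLR is transmitted through composition with the strictly increasing bijection $\rho$, which is immediate.
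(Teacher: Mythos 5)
Your proposal is correct and follows essentially the same route as the paper's own proof: apply Proposition \ref{prop:density_dynamic} to both processes, form the pointwise ratio so that $\tau'(\rho(x))$ and the normalizing constants drop out, and conclude from the monotonicity of $\hat{n}/n$ and $\hat{f}_t/f_t$ composed with the increasing map $\rho$. The only difference is that you spell out the role of $\rho$ being strictly increasing and the preservation of strictness, which the paper leaves implicit.
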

\begin{proof}[Proof of Lemma \ref{lem:comp_stat}]
By \eqref{eq:evolution}, for each $x \in \mathcal X$, we have
\[
\frac{ \hat{f}_{t+1}(x) }{ f_{t+1}(x) } \propto \frac{ \hat{n}(\rho(x)) }{ n(\rho(x)) } \frac{ \hat{f}_{t}(\rho(x)) }{ f_{t}(\rho(x)) } .
\]
Since each fraction on the right-hand side is non-decreasing in $x$, so is the one on the left-hand side.
\end{proof}

\begin{proof}[Proof of Proposition \ref{prop:comp1}]
The fact that $\hat{F}_{t}$ first-order stochastically dominates $F_{t}$ for all $t \geq 0$ follows from Lemma \ref{lem:comp_stat} and from the fact that MLR-dominance implies first-order stochastic dominance. If, additionally, all the conditions of Theorem \ref{thm} (i) hold then we directly have existence of atomless steady states to which the processes converge. Note however that, because the primitives $n$ and $\tau$ are identical, the steady states are also identical.
\end{proof}

\begin{proof}[Proof of Proposition \ref{prop:comp2}]
The fact that $\hat{F}_{t}$ first-order stochastically dominates $F_{t}$ for all $t \geq 0$ follows from Lemma \ref{lem:comp_stat} and from the fact that MLR-dominance implies first-order stochastic dominance. If, additionally, all the conditions of Theorem \ref{thm} (i) hold then we directly have existence of atomless steady states to which the processes converge. Moreover, since the fertility functions are distinct, the steady state limiting distributions are also distinct, which now implies that $\hat{f}^* \succ f^*$.
\end{proof}

\subsection{Proofs for Section \ref{sec:endo}}
\label{proofs:endogenous}

\subsubsection{Solution to the optimization problem}
We solve the optimization problem for a parent in generation $t$:
\begin{equation*}
\max_{n\geq0,c\geq0,e\geq0} \;  c +  \ln(n z')  \text{ subject to } c+ n e \bar{z}_t + z\phi n  \leq z \text{ and } z' = \theta_t e^\alpha z^\beta \bar{z}_t^\gamma  .
\end{equation*}
The budget constraint is not convex because of the term $ne$. We get around this problem by using the substitution $y = ne$ to obtain
\begin{equation*}
\max_{n\geq0,c\geq0,y\geq0} \;  c + (1-\alpha)\ln(n) + \alpha \ln(y) + \ln( \theta_t  z^\beta \bar{z}_t^\gamma)  \text{ subject to } c+ y \bar{z}_t + z\phi n  \leq z  .
\end{equation*}
Since $\alpha \in(0,1)$, this is now a convex optimization problem. Since the objective is strictly increasing in its arguments, the budget constraint binds. Furthermore we must have $n>0$ and $y>0$ at the optimum (since the marginal utility goes to infinity if either of these variables goes to zero). So, provided $c>0$, the problem reduces to
\begin{equation*}
\max_{n,y} \;  z - y \bar{z}_t - z\phi n + (1-\alpha)\ln(n) + \alpha \ln(y) + \ln( \theta_t  z^\beta \bar{z}_t^\gamma) ,
\end{equation*}
which has the solution $y = \frac{\alpha}{ \bar{z}_t}$ and $n = \frac{1-\alpha}{ \phi z}$ and, in turn, this implies $e = \frac{\alpha}{1-\alpha} \frac{\phi z}{ \bar{z}_t}$ and $c =  z - 1$. Clearly $c>0$ when $z>1$. And, when $c=0$ the problem is
\begin{equation*}
\max_{n,y} \;  (1-\alpha)\ln(n) + \alpha \ln(y) + \ln( \theta_t  z^\beta \bar{z}_t^\gamma)  \text{ subject to } y \bar{z}_t + z\phi n  = z  ,
\end{equation*}
which has solution $y = \alpha \frac{z}{\bar{z}_t}$ and $n = \frac{1-\alpha}{\phi}$ and, in turn, this implies $e = \frac{\alpha}{1-\alpha} \frac{\phi z}{ \bar{z}_t}$. This therefore recovers the optimum given in the main text.

\subsubsection{Growth}
Our primary interest was to showcase the applicability of the results of Section \ref{sec:results} to a model of growth with endogenous differential fertility, and we did this in Section \ref{sec:distro}. However, for the sake of completing the analysis of the model, we now focus on averages (specifically, average capital and fertility) rather than on distributional outcomes. To this end, define \defn{normalized average capital} to be:
\[
\chi_t := \frac{\bar{z}_t}{(1+\varrho)^t} .
\]
Assume $\alpha + \beta > 1$ and that the economy in generation $t$ is at the Pareto steady state; i.e. the distribution of relative capital in generation $t$ is Pareto$[1, 1/(\alpha+\beta - 1)]$. Then, by a simple transformation of variables, the distribution of capital itself in generation $t$ is Pareto$[\underline{z}_t, 1/(\alpha+\beta - 1)]$. Moreover, average capital evolves according to  $\bar{z}_{t+1} = (2 - (\alpha + \beta))^{\alpha + \beta - 1} \zeta_t  \bar{z}_t^{\beta + \gamma}$.\footnote{A Pareto$[c,\nu]$ distribution has mean $\frac{\nu c}{\nu - 1}$ for $\nu > 1$, so here average capital is $\bar{z}_t = \underline{z}_t/(2 - (\alpha+\beta))$, and from \eqref{eq:priv_trans} we have that $\underline{z}_{t+1} = \zeta_t \underline{z}_t^{\alpha + \beta} \bar{z}_t^{\gamma - \alpha}$. Combining these gives us the expression in the text.} Expressing this in terms of normalized average capital yields\footnote{ Recall, $\zeta_t := \theta_t \left( \frac{\alpha}{1-\alpha} \phi  \right)^\alpha$ and $\theta_t := \theta (1+\varrho)^{(1-\beta-\gamma)t}$, so $\zeta_t = \zeta_0 (1+\varrho)^{(1-\beta-\gamma)t}$.}
 \begin{equation}\label{evo:priv1}
 \chi_{t+1} = \frac{\zeta_0}{1+\varrho} (2 - (\alpha + \beta))^{\alpha + \beta - 1} \chi_t^{\beta + \gamma} .
 \end{equation}
 On the other hand if $\alpha + \beta < 1$ and the distribution of relative capital is degenerate at 1 then, from \eqref{eq:priv_trans}, capital per capita evolves according to $ \bar{z}_{t+1} = \zeta_t  \bar{z}_t^{\beta + \gamma}$ which, expressed in terms of normalized average capital, gives us
 \begin{equation}\label{evo:priv2}
 \chi_{t+1} = \frac{\zeta_0}{1+\varrho} \chi_t^{\beta + \gamma} .
 \end{equation}
Equations \eqref{evo:priv1} and \eqref{evo:priv2}, respectively, tell us how capital per capita evolves in the Pareto steady state and the degenerate steady state. (i) Under exogenous growth ($\beta + \gamma<1$), each of \eqref{evo:priv1} and \eqref{evo:priv2} has a unique positive fixed point. At the fixed point of either \eqref{evo:priv1} or \eqref{evo:priv2}, un-normalized average capital $\bar{z}_t$ grows at the fixed rate $\varrho$. In other words, average captial grows at the same rate in the balanced growth path of the Pareto and of the degenerate steady state. However, the fixed point of \eqref{evo:priv1} has a smaller value than that of \eqref{evo:priv2} implying that the \emph{level} of average capital is lower in the Pareto steady state. (ii) Under endogenous growth ($\beta + \gamma = 1$),  \eqref{evo:priv1} and \eqref{evo:priv2} are linear processes which implies that the steady state growth rates themselves can be ranked, with the average capital growth being higher in the degenerate steady state than in the Pareto steady state.

Finally, we turn to population growth. An interesting feature of our model is that endogenous downward-sloping fertility is also accompanied by a reduction in mean fertility as capital per capita grows in steady state. Integrating the expression for $\hat{n}_t$ by its corresponding steady state distribution we get
\[
  \mathbb{E}_t[\hat{n}_t] := \int_\mathcal{X} \hat{n}_t(x)d \hat{F}_t(x) = 
  \begin{cases}
  \frac{1-\alpha}{\underline{z}_t \phi} \frac{1}{\alpha + \beta} & \text{ in the Pareto steady state} \\
  \frac{1-\alpha}{\underline{z}_t \phi} & \text{ in the degenerate steady state}
  \end{cases}
\]
It is easily verified that, in any of our steady states, the lowest level of capital $\underline{z}_t$ must grow at the same rate as average capital $\bar{z}_t$. The above therefore shows that, in any steady state with a positive growth rate of capital, average fertility (i.e. the population growth factor) must be declining at the rate at which capital grows. The steady states of our model thus capture both stylized facts about the fertility income relationship: fertility is lower at the more affluent cross-section of the population, and societies in aggregate reduce their fertility as they grow richer.\footnote{\cite*{brunnschweiler2021wealth} highlight the unsatisfactory nature of a constant long-run population growth rate, which is present in many existing theories of growth. We do not have constant long-run population growth in our model: in our case, technology-driven growth \emph{leads} to a reduction in the population growth factor in steady state. In the spirit of \cite*{brunnschweiler2021wealth}, one could even imagine adding a feedback mechanism to our model so that growth slows down as population declines, which would lead to the population \emph{level} eventually stabilizing.}

\section{Branching systems}\label{sec:branching}

We discuss how our model relates to the operator-theoretic approach to deterministic population dynamics; e.g. see \citet*{jagers2001deterministic,lasota2008probabilistic}. Consider an operator on the space of Lebesgue integrable functions $P : \mathcal{L}^1 \to \mathcal{L}^1$ of the form
\[
(Pf)(x) = \int K(x,z) f(z) dz .
\]
When $K$ is a stochastic kernel and $f$ is a density, $P$ is a Markov operator. More general kernels arise in structured population and branching models, where $K$ may encode expected reproduction rather than probabilities. Related work includes the Frobenius–Perron theory of deterministic transformations \citep{lasota1973,lasota2008probabilistic} and the theory of deterministic or expected evolution in general branching populations \citep{jagers2001deterministic}.

To see the link with our paper, if $F_t$ has density $f$, then  \eqref{main_nonlin} can be expressed as a mapping $P$ that is defined by
\[
 (Pf)(x) = \frac{1}{\int n(y) f(y) dy} \int K(x,z) f(z) dz ,
\]
where $K(x,z) = n(z) \, \delta( x - \tau(z))$ and $\delta$ is the Dirac distribution. The normalization by $\int n(y)f(y)dy$ makes this map nonlinear in $f$, but one can in principle analyze an un-normalized version---which would be linear---and then re-normalize appropriately.

Under suitable compactness or positivity assumptions on a linear operator $P$, one can obtain eigenfunctions, and after normalization these may correspond to invariant distributions. Such results typically rely on $P$ satisfying certain smoothing or expansion properties that are absent in our model. Indeed, evaluating the integral under our kernel gives us the counterpart to \eqref{eq:evolution} from the main text, namely,
\[
(Pf)(x) = \frac{1}{\int n(y) f(y) dy} \, \frac{n(\rho(x))}{\tau'(\rho(x))} f(\rho(x))
\]
which simply takes each point in $f$ and maps it to a new point on a distribution $Pf$. Our mapping $P$ does not smooth out oscillations in $f$ but rather preserves them. For such (possibly non-compact) maps, \cite{lasota1973} prove existence of absolutely continuous invariant measures under expansion conditions, and \cite{lasota2008probabilistic} give stronger asymptotic stability results for several classes of expanding transformations. But expansion conditions in our model would roughly correspond to $\tau'(x) > 1$ for all $x$, which is too restrictive in economic settings. \cite{jagers2001deterministic} also studies general deterministic branching processes but the Perron–Frobenius and asymptotic results there rely on communication and recurrence hypotheses for positive reproduction kernels that are generally not met by our deterministic Dirac kernel. More broadly, the studies above on deterministic branching processes tend to analyze, at a high-level, conditions on operators that yield invariant distributions or asymptotic stability. The sufficient conditions can therefore be stronger than they may need to be in specific settings. By committing to a specific way in which fertility and transmission interact, our model operates at a lower level but it also allows for the kernel to have properties that, while natural in economic settings, do not meet the conditions of \cite{lasota1973}, \cite{lasota2008probabilistic}, or \cite{jagers2001deterministic}.

\end{document}